\documentclass[aps, prx, twocolumn, longbibliography, superscriptaddress,footinbib]{revtex4-2}

\usepackage{libertine}
\usepackage{amsmath, amsfonts, amssymb, bm}
\usepackage{braket} 
\usepackage{
	float
} 

\usepackage{bbold}

\usepackage{slashed}
\usepackage{enumerate}
\usepackage{
	enumitem
} 
\usepackage[all]{xy} 
\usepackage{tikz}
\usetikzlibrary{quantikz2}

\usepackage{amsthm} 

\newtheorem{theorem}{Theorem}
\newtheorem{assumption}{Assumption}
\newtheorem{lemma}[theorem]{Lemma}   
\newtheorem{corollary}[theorem]{Corollary}

\usepackage{graphics}
\graphicspath{{./}} 

\makeatletter
\newcommand{\apptocfile}{atoc}
\let\apptoc@orig@appendix\appendix
\renewcommand{\appendix}{%
	\apptoc@orig@appendix
	\let\apptoc@orig@addtocontents\addtocontents
	\long\def\addtocontents##1##2{%
		\def\apptoc@ext{##1}%
		\def\apptoc@toc{toc}%
		\ifx\apptoc@ext\apptoc@toc
		\apptoc@orig@addtocontents{\apptocfile}{##2}%
		\else
		\apptoc@orig@addtocontents{##1}{##2}%
		\fi
	}%
}
\newcommand{\appendixtableofcontents}{%
	\begingroup
	\setcounter{tocdepth}{3}%
	\phantomsection
	\let\addcontentsline\@gobblethree
	\section*{Appendix Contents}%
	\pdfbookmark[1]{Appendices}{apxcontents}%
	\@starttoc{\apptocfile}%
	\endgroup
}
\makeatother

\usepackage[dvipsnames]{xcolor}
\definecolor{mygreen}{rgb}{0.0, 0.65, 0.31} 
\definecolor{myred}{rgb}{1.0, 0.13, 0.32} 
\definecolor{mypurple}{rgb}{0.71, 0.49, 0.86} 
\definecolor{myblue}{rgb}{0.19, 0.55, 0.91} 
\definecolor{myorange}{rgb}{1.0, 0.56, 0.0} 

\usepackage[normalem]{ulem} 

\usepackage[
	colorlinks=true,
	urlcolor=violet,
	linkcolor=RoyalBlue,
	citecolor=RoyalBlue,
	hyperindex=true,
	linktocpage=true
]{hyperref}
\usepackage[nameinlink, capitalise]{cleveref}
\crefname{assumption}{assumption}{assumptions}
\Crefname{assumption}{Assumption}{Assumptions}

\allowdisplaybreaks[3] 

\def\Im{\mathop{\mathsf{Im}}}
\def\Re{\mathop{\mathsf{Re}}}

\def\tr{\mathop{\mathrm{tr}}}

\def\sgn{\mathop{\mathrm{sgn}}}

\begin{document}
\title{HyperDet Wavefunction: A Phase-Agnostic Ansatz for Strongly Correlated Systems}

\author{Xiaodong Hu}
\affiliation{Department of Materials Science and Engineering, University of Washington, Seattle, WA 98195, USA}

\author{Guan-Lin Lin}
\affiliation{Department of Physics, Boston College, Chestnut Hill, Massachusetts 02467, USA}

\author{Ying Ran}
\affiliation{Department of Physics, Boston College, Chestnut Hill, Massachusetts 02467, USA}

\author{Di Xiao}
\affiliation{Department of Materials Science and Engineering, University of Washington, Seattle, WA 98195, USA}
\affiliation{Department of Physics, University of Washington, Seattle, Washington 98195, USA}

\date{\today}

\begin{abstract}
	Competing phases of strongly correlated systems are often described by trial wavefunctions built on phase-specific assumptions. We propose the \emph{hyperdeterminant (HyperDet) wavefunction} as a phase-agnostic ansatz for both bosonic and fermionic quantum many-body systems exhibiting spontaneous symmetry-breaking, fractionalization, and/or topological order. The HyperDet structure emerges by fusing auxiliary fermionic parton Slater determinants into physical orbitals through a learnable \emph{fusion tensor $\mathcal F$}. Optimized using variational Monte Carlo, a single HyperDet architecture achieves overlaps exceeding $99.9\%$ with exact-diagonalization ground states for nearly all sampled parameters across fractional Chern insulator (FCI) phases and their neighboring competing phases in both bosonic and fermionic models. Beyond its variational expressiveness, the optimized fusion tensor also provides rich interpretability. We show that, (i) the singular-value spectra of the \emph{bipartite fusion matrix} can be used to track phase transitions; (ii) the reconstructed Bott indices reproduce the parton Chern numbers expected for the FCI states without parton-Hamiltonian input; and (iii) faithful lifts of physical translations to the parton degrees of freedom recover the expected translation fractionalization in the bosonic FCI without projective symmetry class assignment. These results establish HyperDet wavefunction as a promising framework for both variational ground-state searches and phase-diagram explorations across strongly correlated phases, and for providing theoretical insights by connecting the optimized wavefunction with parton microscopics and field-theory descriptions.
\end{abstract}

\maketitle


\section{Introduction}
As cornerstones of condensed matter physics, Laughlin's and Jain's trial wavefunctions~\cite{laughlin1983anomalous,jain1989incompressible,jain1989composite} have provided the foundation for our understanding of fractional quantum Hall (FQH) physics. However, extending this success to fractional Chern insulators (FCIs), lattice analogues of FQH states that can arise without an external magnetic field, has remained a theoretical challenge since their prediction~\cite{tang2011high,neupert2011fractional,sun2011nearly,sheng2011fractional,regnault2011fractional,xiao2011interface}. This challenge has since grown: FCI phases have now been realized in two-dimensional materials~\cite{cai2023signatures,park2023observation,xu2023observation,zeng2023thermodynamic,lu2024fractional}, and the tunability of these devices has rapidly exposed a rich landscape of competing phases nearby~\cite{park2023observation,lu2025extended,xu2025signatures,han2025signatures,han2026evidence,sun2026twist}. Describing this landscape calls for a similar wavefunction framework that addresses two challenges: (i) an accurate and robust ground-state description valid throughout an entire FCI phase, and (ii) a systematic, phase-agnostic account of the neighboring competing phases and the transitions between them.

The Laughlin and Jain ansatzes provide a natural starting point for challenge~(i), as they have been extensively examined in conventional Landau-level settings~\cite{haldane1985finite,yoshioka1983ground,jain1990theory,kamilla1996excitons,jain1997composite}. Considerable effort has gone into adapting them to lattice systems~\cite{qi2011generic,claassen2015position,jian2013crystal,wu2012gauge,wu2013bloch}, including wavefunction constructions based on ideal-band geometry and vortexability~\cite{parameswaran2012fractional,roy2014band,wang2021exact,ledwith2020fractional,ledwith2023vortexability}. These constructions, however, are built to mimic the lowest Landau level (LLL), and their accuracy can degrade rapidly away from the LLL limit. Even when ideal-band or vortexability conditions are satisfied, they do not fully determine the many-body physics. For example, the stability of the FCI can depend on features not captured by these conditions~\cite{shi2026effects,grushin2012enhancing,zhang2025beyond,sarkar2026similar}. A systematic understanding of the FCI phase therefore remains incomplete, motivating a flexible ansatz that can capture lattice effects beyond the Laughlin-Jain construction.

Challenge~(ii) seems to be even harder. The Laughlin and Jain ansatzes were designed for FQH systems and need not describe the phases neighboring an FCI, which may lack anyonic excitations and instead develop symmetry-breaking order. These neighboring phases can be gapped or gapless, with distinct symmetry and entanglement structures~\cite{chen2010local,wen2017colloquium,kitaev2006topological,levin2005string}, so an FCI can lose its topological order through many different routes. The competing phases include Fermi liquids~\cite{wang2024fractional,sheng2011fractional,park2023observation,lu2025extended}, Wigner crystals~\cite{han2026evidence}, charge-density waves~\cite{xie2021fractional,aronson2025displacement,wilhelm2021interplay,reddy2023fractional,qin2025stripe}, and more exotic ones such as anomalous composite Fermi liquids~\cite{anderson2024trion,goldman2023zero,dong2023composite}, anomalous Hall crystals~\cite{liu2025fractional,chen2026fractional}, and superconducting states~\cite{xu2025signatures,sun2026twist,seo2026family,nguyen2025hierarchy,hua2026multi}. The transitions themselves may involve deconfined quantum criticality beyond the Ginzburg--Landau paradigm~\cite{senthil2004deconfined,senthil2004quantum,wang2017deconfined}, as proposed for the continuous transition between a bosonic Laughlin state and a superfluid~\cite{barkeshli2014continuous,barkeshli2015continuous,song2024phase,lu2025continuous}. Describing this diversity of phases and transitions within a single wavefunction framework therefore remains a central challenge, both conceptually and computationally.

Recently, a \emph{hyperdeterminant (HyperDet)} structure~\cite{hu2024hyperdeterminants,hu2026composite} was recognized in the projective construction of composite-fermion states~\cite{jain1989incompressible,jain1989composite} for FCIs, with the Laughlin and Jain wavefunctions appearing as special cases. Within the FCI phase, these composite-fermion HyperDet wavefunctions provide an accurate description of the ground state beyond the LLL limit, but their quality degrades across FCI phase boundaries. The reason is structural. Their construction rests on a prescribed enlargement of the virtual Hilbert space motivated by flux attachment~\cite{pasquier1998dipole,read1994theory,shankar1999hamiltonian,murthy2003hamiltonian,murthy2012hamiltonian}, together with a fixed virtual-to-physical fusion map. Outside the FCI phase, these assumptions no longer provide an adequate description of the ground state.

In this work, we simultaneously address these two challenges by replacing the flux-attachment-based construction with a parton construction~\cite{lin2026hyperdeterminant}. Representing bosonic or fermionic physical particles through auxiliary fermionic partons, we promote the virtual-to-physical \emph{fusion map} to a fully learnable variational degree of freedom encoded in the \emph{fusion tensor $\mathcal F$}. We propose that the resulting \emph{HyperDet wavefunction $\psi_{\mathcal F}(\bm s)$} provides a phase-agnostic variational class for both bosonic and fermionic many-body systems, accommodating short-range or long-range entangled phases~\cite{wen2017colloquium,chen2010local,levin2005string,kitaev2006topological}, with or without anyonic excitations. This is demonstrated through both the expressiveness of the HyperDet wavefunction ansatz and its interpretability in terms of parton-level microscopics, including both the parton Chern numbers and the symmetry-fractionalization data.

We validate the expressiveness of the HyperDet wavefunction by benchmarking it against both bosonic and fermionic lattice models. By direct variational Monte Carlo (VMC) optimization of the fusion tensor itself, we can reach wavefunction overlaps above $99.95\%$ with ED ground states on clusters of up to 36 sites throughout the entire FCI phase in both cases. More surprisingly, the HyperDet wavefunction maintains this performance as we sweep across multiple nearby phases, with over $99.9\%$ overlap with ED ground states for almost all sampled points, including two distinct superfluid phases with condensates at the $\Gamma$ (SF@$\Gamma$) and M (SF@M) points for the bosonic model~\cite{lu2025continuous}, and ED-resolved candidate anomalous Hall crystal (AHC) and charge-density wave (CDW) phases for the fermionic model. These wide cross-phase sweeps substantiate that the same HyperDet architecture can represent states with fundamentally different topological order and symmetry-breaking orders.

Beyond this quantitative accuracy, we further show that the optimized fusion tensor can be used to probe phase transitions by exposing the internal fusion structure.  We examine the singular-value spectrum of the bipartite fusion matrix, obtained by reshaping the fusion tensor according to bipartitions of the parton species. In the bosonic model, we find a single dominant singular value characterizes the FCI phase, while additional singular values gain weight as the system enters the neighboring superfluid phases.
More broadly, sweeps across competing phases in both bosonic and fermionic models show that the redistribution of spectral weight tracks the physical phase boundaries, with changes consistent with the known first-order or continuous character of the transitions. These spectra thus serve as structural diagnostics of the optimized many-body states, revealing changes in their internal fusion structure across phase transitions without evaluating phase-specific physical observables.

The optimized fusion tensor is itself a direct source of parton-level topological data.  We extract the parton Chern numbers $(\mathcal C^{(1)},\mathcal C^{(2)})=(1,1)$ for the 1/2-filled bosonic FCI and $(\mathcal C^{(1)},\mathcal C^{(2)},\mathcal C^{(3)})=(1,1,1)$ for the 1/3-filled fermionic FCI, in agreement with parton topological contents of the semion topological order~\cite{arovas1984fractional,wen1995topological} and the Laughlin topological order~\cite{laughlin1983anomalous}, respectively. The HyperDet wavefunction also features a clear intrinsic gauge structure, which permits physical symmetries to be lifted onto the virtual parton legs, with stabilizer-valued ambiguities of the lifted group relations encoding parton symmetry-fractionalization data~\cite{wen2002quantum,essin2013classifying,barkeshli2019symmetry}. We recover the expected $\pi$-flux translation fractionalization without imposing any projective-symmetry data for the bosonic FCI model; however, no faithful translation virtual lift is found for the fermionic FCI within the tested representation. These results show that in our framework, the parton topology and, where faithful symmetry lifts exist, the symmetry-fractionalization data are learned from HyperDet optimization rather than imposed \emph{a priori}. This access to the underlying parton structure makes HyperDet more than a flexible variational ansatz: it also provides a bridge between numerical ground-state searches and parton field-theory descriptions.

The paper is organized as follows. We begin by introducing the HyperDet wavefunction and its variational fusion tensor in~Sec.~\ref{sec:hyperdet_ansatz}. In Sec.~\ref{sec:fusion_channels}, we show how factorizable fusion tensors reduce the HyperDet to a product of determinants, and develop the bipartite fusion matrix singular-value spectrum as a practical structural proxy for both factorizability and many-body phase transitions. To establish the basis for the parton analysis, Sec.~\ref{sec:determinant expansion} develops an exact weighted Tucker reconstruction of the fusion tensor and a corresponding exact determinant expansion of the wavefunction. We then formulate direct VMC optimization in Sec.~\ref{sec:vmc_optimization} and benchmark the ansatz across bosonic and fermionic FCIs and their competing phases in Sec.~\ref{sec:expressiveness}. Turning from accuracy to interpretation, Sec.~\ref{sec:interpretability} examines the parton topology and translation-fractionalization data encoded in the optimized fusion tensor. We conclude in Sec.~\ref{sec:discussion} with the physical implications, scalability, and future directions of the HyperDet framework. Technical derivations and extended numerical results are provided in the Appendix~\cite{Appendix}.

\section{HyperDet Wavefunction Ansatz}\label{sec:hyperdet_ansatz}

In this section we present a self-contained introduction of the HyperDet wavefunction ansatz for $N$ physical particles occupying $N_s$ orbitals. The physical single-particle Hilbert space is spanned by $\ket{\phi_s}$ with $s = 1, \dots, N_s$. We focus on the hard-core case for simplicity, where $N\leq N_s$. For systems with additional internal degrees of freedom, such as spin, or with soft bosons that permit multiple occupancy, these degrees of freedom can be incorporated by enlarging the single-particle orbital index $N_s$.

We introduce $m$ copies of fermionic partons as the auxiliary degrees of freedom and fill exactly the same number of parton orbitals $N_\alpha=N$ for each species $\alpha=1,2,\ldots,m$. We assume that the parton species are mutually disentangled, and the fermionic partons within each species $\alpha$ can be described by a single Slater determinant
\begin{equation}\label{eq:parton single Slater determinant}
	|\Psi^{(\alpha)}\rangle = \sum_{P}(-1)^{P}\bigotimes_{\ell=1}^{N}|\varphi_{P(\ell)}^{(\alpha)}\rangle_\ell,
\end{equation}
where the sum runs over all permutations $P\in S_N$ of the $N$ occupied parton orbitals, and $\ell\in\{1,2,\ldots,N\}$ is a first-quantized particle-slot index: $|\varphi_{P(\ell)}^{(\alpha)}\rangle_\ell$ places the $P(\ell)$-th parton orbital into the $\ell$-th particle slot. Unlike the projective construction of parton/ancilla qubits in Ref.~\cite{jain1989incompressible,zhang2020pseudogap,zhou2025variational,wen1999projective}, where auxiliary degrees of freedom are defined on local physical sites, the $N$ occupied fermionic parton orbitals $|\varphi^{(\alpha)}_{P(\ell)}\rangle_\ell$ introduced here are not necessarily local.

The full parton mean-field state is the direct product over species $|\Psi^{\text{parton}} \rangle = \bigotimes_{\alpha=1}^m|\Psi^{(\alpha)}\rangle$. We introduce a canonical isomorphism $U_\pi^{\text{slot}}$ to reorder this species-major state into particle-slot-major form~\cite{Appendix}:
\begin{equation}
	U_\pi^{\text{slot}}|\Psi^{\text{parton}}\rangle = \sum_{P_1,\ldots,P_m}\Big[\prod_{\alpha}(-1)^{P_\alpha}\Big]\bigotimes_{\ell=1}^N\bigotimes_{\alpha=1}^m|\varphi_{P_{\alpha}(\ell)}^{(\alpha)}\rangle_\ell.\label{eq:particle-slot-major parton mean-field state}
\end{equation}
All relative fermionic signs are already accounted for in the first-quantized Slater antisymmetrizers.

Because both the parton mean-field state and the physical states are arranged in particle-slot order, we can introduce a particle-slot-wise fusion map $\hat F_\ell$ to fuse the parton species together and map them to the $\ell$-th physical state $|\phi_{s_\ell}\rangle$. In terms of the occupied parton orbital basis $\{|\varphi^{(\alpha)}_{\ell_\alpha}\rangle\}$, it reads
\begin{equation}\label{eq:fusion map in parton orbital basis}
	\hat F_\ell = \sum_{\ell_1,\ldots,\ell_m} \mathcal F_{s_\ell; \ell_1,\ldots,\ell_m} |\phi_{s_\ell}\rangle\langle\varphi^{(1)}_{\ell_1}\cdots\varphi^{(m)}_{\ell_m}|
\end{equation}
where the fusion tensor coefficients are
\begin{equation}\label{eq:fusion tensor coefficients in parton orbital basis}
	\mathcal F_{s_\ell; \ell_1,\ldots,\ell_m} = \langle\phi_{s_\ell}|\hat F_{\ell}|\varphi^{(1)}_{\ell_1}\cdots\varphi^{(m)}_{\ell_m}\rangle.
\end{equation}
The generically nonlocal fusion map $\hat F_\ell$ differs significantly from the on-site Gutzwiller projection~\cite{gutzwiller1963effect,gutzwiller1965correlation} widely used in studies of high-$T_c$ superconductivity and spin liquids~\cite{wen2002quantum,ran2007projected,anderson1987resonating,zhang1988renormalised,edegger2007gutzwiller}. This freedom is essential and should be distinguished from the possible nonlocality of the occupied parton orbitals.

Fusing the parton mean-field state in Eq.~\eqref{eq:particle-slot-major parton mean-field state} according to Eq.~\eqref{eq:fusion map in parton orbital basis} gives the physical state $\textstyle|\Psi^{\text{phys}}\rangle = \big(\bigotimes_{\ell=1}^N \hat F_\ell\big) U_\pi^{\text{slot}}|\Psi^{\text{parton}}\rangle$. Its amplitude for the physical configuration $|\bm s\rangle=|\phi_{s_1}\cdots\phi_{s_N}\rangle$ takes the mathematical form of the \emph{anchored combinatorial hyperdeterminant}~\cite{barvinok1995new} constructed from the fusion tensor coefficients:
\begin{align}
	\langle\bm s|\Psi^{\text{phys}}\rangle &= \sum_{P_1,\cdots,P_m}\Big[\prod_{\alpha=1}^{m}(-1)^{P_\alpha}\Big] \prod_{\ell=1}^{N} \mathcal F_{s_\ell; P_1(\ell),\ldots,P_m(\ell)}\nonumber\\
	& \equiv \mathop{\mathsf{HyperDet}}[\mathcal F_{\bm s}].\label{eq:hyperdeterminant wavefunction amplitude}
\end{align}

Equation~\eqref{eq:hyperdeterminant wavefunction amplitude} is the HyperDet wavefunction ansatz.
For $m=1$ it reduces to an ordinary Slater determinant. Every additional species introduces another independent antisymmetrization. Therefore, unlike ordinary Slater determinants, HyperDet can describe both symmetric/bosonic (even-$m$) and antisymmetric/fermionic (odd-$m$) wavefunctions~\cite{lin2026hyperdeterminant}. This statistical theorem explicitly separates exchange statistics from the detailed fusion maps: auxiliary particles are always fermions, but the number of independently antisymmetrized species fixes the exchange parity.
The physical one-body basis used to label $|\bm s\rangle$ is likewise unrestricted: it can be real-space or momentum-space physical orbitals, being either localized or non-local. This freedom facilitates the practical use of the HyperDet wavefunction.

The full hyper-rectangular fusion tensor $\mathcal F$ in Eq.~\eqref{eq:fusion tensor coefficients in parton orbital basis} is of size $N_s\times N^m$ since there are $N_s$ possible physical orbitals. Each physical configuration $|\bm s\rangle$ of $N$ occupied physical orbitals slices out a hypercubic, configuration-restricted fusion tensor $\mathcal F_{\bm s}$ of size $N^{m+1}$ for evaluating the hyperdeterminant in Eq.~\eqref{eq:hyperdeterminant wavefunction amplitude}. This process is illustrated in Fig.~\ref{fig:HyperDet_illustration}.
\begin{figure}[btp!]
	\centering
	\includegraphics[width=1.0\linewidth]{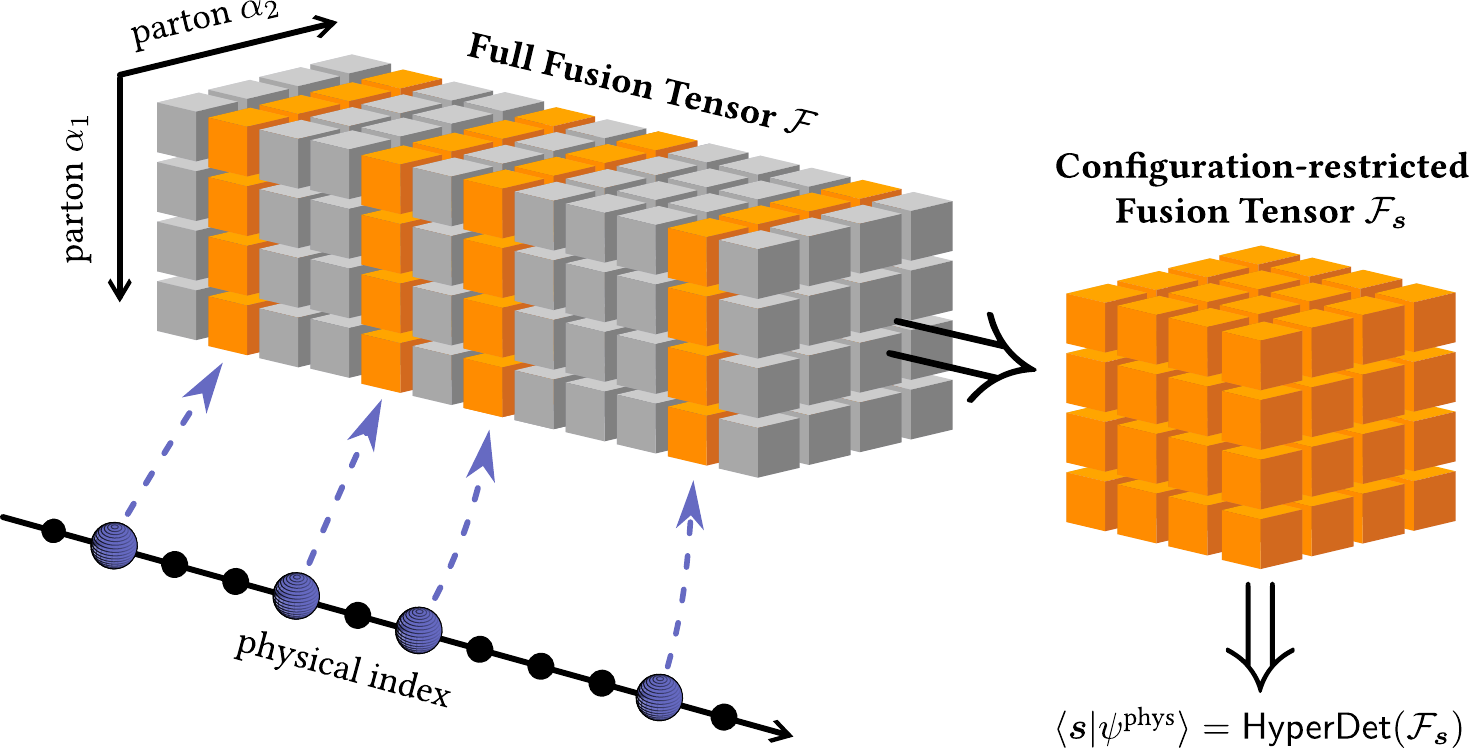}
	\caption{Illustration of a HyperDet wavefunction with $N=4$ filled physical particles and $m=2$ parton species for a given configuration $|\bm s\rangle$ on a $12$-site graph (obtained by flattening the 2D lattice, for example). Physical degrees of freedom live at the bottom and parton degrees of freedom live above. Each parton species occupies the same number of orbitals as there are physical particles: $N_\alpha=N_\beta=N=4$. Each configuration $|\bm s\rangle$ specifies the $N\times N_\alpha\times N_\beta=4^3$ hypercubic, configuration-restricted fusion tensor $\mathcal F_{\bm s}$ sliced from the full hyper-rectangular fusion tensor $\mathcal F$ (which contains all variational parameters) of size $N_s\times N_\alpha\times N_\beta=12\times 4^2$.}
	\label{fig:HyperDet_illustration}
\end{figure}

\section{Factorization Simplification and SVD Spectrum of Bipartite Fusion Matrix}\label{sec:fusion_channels}
To appreciate the many-body structure and correlations encoded in the fusion map, it is helpful to switch from the occupied parton orbital basis $\{|\varphi^{(\alpha)}_{\ell_\alpha}\rangle\}$ to a local parton basis $\{|\phi^{(\alpha)}_u\rangle\}$ labeled by the vertices $u$ of a graph $\mathcal V$.  We take the same vertices to label the physical orbitals, and refer to each vertex as a ``site''.
Locality is defined with respect to this orbital graph; for localized real-space orbitals, this reduces to ordinary spatial locality.

The occupied orbitals are generally linear combinations of these local basis states, $|\varphi^{(\alpha)}_\ell\rangle = \sum_{u\in\mathcal V} \varphi^{(\alpha)}_\ell(u) |\phi^{(\alpha)}_u\rangle$. After the basis transformation, the fusion tensor in the parton orbital basis takes the form:
\begin{align}\label{eq:fusion tensor connection relation}
	\mathcal F_{s_\ell;\ell_1,\ldots,\ell_m} &=  \sum_{u_1,\ldots,u_m\in\mathcal V} \mathcal F^{\text{loc}}_{s_\ell; u_1,\ldots,u_m} \prod_{\alpha=1}^{m}\varphi_{\ell_\alpha}^{(\alpha)}(u_{\alpha}),
\end{align}
where we introduce the local fusion tensor coefficients
\begin{equation}\label{eq:fusion tensor coefficients in parton local basis}
	\mathcal F^{\text{loc}}_{s_\ell; u_1,\ldots,u_m} = \langle\phi_{s_\ell}|\hat F_\ell|\phi^{(1)}_{u_1}\cdots\phi^{(m)}_{u_m}\rangle.
\end{equation}

One particular situation is when the fusion map is strictly \emph{on-site} at the occupied physical orbitals
\begin{equation}\label{eq:on-site fusion tensor}
	\mathcal F^{\text{loc}}_{s_\ell;u_1,\ldots,u_m}=\prod_\alpha\delta_{s_\ell,u_\alpha}.
\end{equation}
In this case, Eq.~\eqref{eq:fusion tensor connection relation} factorizes into a product of matrices over physical and parton virtual legs:
\begin{equation}\label{eq:factorizable fusion tensor}
	\mathcal F^{\text{fac}}_{s_\ell;\ell_1,\ldots,\ell_m} =\prod_{\alpha=1}^{m}\varphi_{\ell_\alpha}^{(\alpha)}(s_\ell),
\end{equation}
and the evaluation of the HyperDet wavefunction reduces to the simple product-of-determinants form:
\begin{equation}\label{eq:HyperDet of factorizable fusion tensor}
	\mathop{\mathsf{HyperDet}}[\mathcal F^{\text{fac}}_{\bm s}] = \prod_{\alpha=1}^{m}\det\Big[\varphi^{(\alpha)}_{j}\big(s_{\ell}\big)\Big].
\end{equation}
The on-site fusion map in Eqs.~\eqref{eq:on-site fusion tensor} and \eqref{eq:factorizable fusion tensor} corresponds exactly to the conventional local parton-splitting rules: in second quantization, when working in real-space orbitals for both physical and parton degrees of freedom, the on-site fusion map exactly reproduces the form widely used in the literature~\cite{wen1999projective,balram2018parton,balram2019parton,lu2012symmetry,gu2010grassmann}
\begin{equation}\label{eq:on-site fusion map in second quantized form}
	c_{\bm r}=\prod_{\alpha=1}^m f^{(\alpha)}_{\bm r}.
\end{equation}
A typical example is the $\nu=1/m$ Laughlin wavefunction $\Psi_{1/m}(\bm r)=\prod_{i<j}(z_{i}-z_{j})^{m}e^{-\sum_i |z_i|^2}$.  In the parton description of FQH physics~\cite{jain1989composite,jain1989incompressible,jain1990theory}, each parton species occupies its own integer-filled Landau level, forming a product of parton Slater determinants $\Psi^{\text{parton}}(\{z^{(1)}\}, \dots, \{z^{(m)}\}) = \prod_{\alpha}\det[\varphi^{(\alpha)}_j(z^{(\alpha)}_k)]$, where $\varphi^{(\alpha)}$ describes orbitals in the lowest Landau level of parton species $\alpha$. The on-site fusion map here simply supplies the coordinate-identification rule $z^{(\alpha)}_\ell=z_\ell$ for each parton species, while the zeros, phases, and correlations of the wavefunction are governed by the choice of occupied parton orbitals.

The factorization in Eq.~\eqref{eq:factorizable fusion tensor} is a special case. As demonstrated in Sec.~\ref{sec:expressiveness}, optimized fusion tensors need not admit this factorization, and the resulting HyperDet wavefunctions cannot be reduced to products of determinants~\cite{hu2024hyperdeterminants,lin2026hyperdeterminant}, reflecting a generic $\textsf{\#P-Hard}$ evaluation complexity~\cite{barvinok1995new,gurvits2005complexity,hillar2013most}. The full HyperDet therefore provides a more general ansatz, with the traditional product-of-determinants construction contained as a restricted limit. Allowing \emph{non-factorizable} fusion tensors is central to our approach: it gives the ansatz access to states that the conventional construction cannot represent.

This parton-to-physical fusion complexity can be further quantified through the singular value decomposition (SVD) of the bipartite fusion matrix, or the species-$\alpha$ unfolding of the fusion tensor:
\begin{equation}\label{eq:bipartite fusion matrix}
	\mathcal F_{\ell_\alpha\mid\ell_{\bar\alpha}}(s_\ell) := \mathcal F_{s_\ell;\ell_\alpha\mid\ell_1,\ldots,\widehat{\ell_\alpha},\ldots,\ell_m},
\end{equation}
constructed by squashing the original fusion tensor under virtual-leg bipartitions: the selected parton species $\ell_\alpha$, and the other parton species $\ell_{\bar\alpha}=\{\ell_1,\ldots,\ell_{\alpha-1}, \ell_{\alpha+1},\ldots,\ell_m\}$. For the $m=2$ two-parton case, this bipartite fusion matrix is exactly the physical-leg-restricted fusion tensor itself. For general $m$, its species-resolved SVD is
\begin{equation}\label{eq:SVD of bipartite fusion matrix}
	\mathcal F_{\ell_\alpha\mid\ell_{\bar\alpha}}(s_\ell) = \sum_{c=1}^{r^{(\alpha)}_{s_\ell}} \lambda^{(\alpha)}_c(s_\ell) [\bm U_{s_\ell}]_{\ell_\alpha,c} [\bm V_{s_\ell}]_{\ell_{\bar\alpha},c}^*.
\end{equation}
Here $r^{(\alpha)}(s_\ell)$ is the rank of the species-$\alpha$ unfolding. A fusion tensor factorizes completely if every species unfolding has rank one.

Given this understanding, we propose that the normalized, physical-leg averaged SVD spectrum of the bipartite fusion matrix
\begin{equation}\label{eq:SVD spectrum of bipartite fusion matrix}
	\{\lambda^{(\alpha)}\} = \Big\langle\mathsf{Spec}[\mathcal F_{\ell_\alpha\mid\ell_{\bar\alpha}}(s_\ell)]\Big\rangle_{s_\ell}
\end{equation}
can be used to characterize the fusion complexity of the underlying many-body state. For phase transitions where fusion tensor undergoes structural changes, we expect the redistribution of the bipartite fusion matrix SVD spectrum to provide an internal structural diagnostic for many-body phase transitions.  This will be shown in Sec.~\ref{sec:expressiveness}

We note that the SVD spectrum of the bipartite fusion matrix is invariant only under the unitary part of the HyperDet gauge group $\mathrm{GL}(N;\mathbb{C})^m\rtimes S_m$ (see below) and therefore depends on which representative is chosen on a gauge orbit unless the non-unitary part is fixed.  We remove this ambiguity with the balanced gauge-fixing procedure detailed in the Appendix~\cite{Appendix}.

\section{On-site Weighted Tucker Reconstruction and Determinant Expansion}\label{sec:determinant expansion}
Even when the fusion tensor does not factorize, the species-resolved SVDs in Eq.~\eqref{eq:SVD of bipartite fusion matrix} can be used to express the HyperDet wavefunction as a sum of products of determinants. We first construct this exact expansion and then show how it identifies reconstructed parton orbitals. These orbitals provide the basis for the parton analysis in Sec.~\ref{sec:interpretability}.

For each physical orbital $s_\ell$, we use the nonzero singular values and corresponding left singular vectors of the bipartite fusion matrix in Eq.~\eqref{eq:SVD of bipartite fusion matrix} to define the species-$\alpha$ matrix
\begin{equation}\label{eq:occupied-orbital amplitudes matrix A}
	[\bm A^{(\alpha)}]_{(s_\ell,c),\ell_\alpha} = \sqrt{\lambda^{(\alpha)}_c(s_\ell)}\,[\bm U_{s_\ell}^{(\alpha)}]_{\ell_\alpha,c}.
\end{equation}
Here $c=1,\ldots,r_{s_\ell}^{(\alpha)}$ labels the retained SVD channels. Stacking the rows $(s_\ell,c)$ over all $N_s$ physical orbitals gives an $M_\alpha\times N$ matrix, with $M_\alpha=\sum_{s_\ell}r_{s_\ell}^{(\alpha)}$. Thus $\bm A^{(\alpha)}$ is determined by the full $N_s\times N^m$ hyper-rectangular fusion tensor and is independent of the physical configuration $|\bm s\rangle$ at which the wavefunction is evaluated.

The matrices $\{\bm A^{(\alpha)}\}$ reconstruct the fusion tensor through an exact weighted Tucker decomposition~\cite{tucker1966some,kolda2009tensor}:
\begin{equation}\label{eq:simultaneous fusion reconstruction}
	\mathcal F_{s_\ell;\ell_1,\ldots,\ell_m}=\sum_{c_1,\ldots,c_m}\mathcal T_{s_\ell;c_1,\ldots,c_m}\prod_{\alpha=1}^{m} [\bm A^{(\alpha)}]_{(s_\ell,c_\alpha),\ell_\alpha},
\end{equation}
where the \emph{weighted Tucker core} is
\begin{equation}\label{eq:weighted Tucker core}
	\mathcal T_{s_\ell;c_1,\ldots,c_m}=\sum_{\ell_1,\ldots,\ell_m=1}^{N}\mathcal F_{s_\ell;\ell_1,\ldots,\ell_m}\prod_{\alpha=1}^{m}\frac{[\bm U_{s_\ell}^{(\alpha)}]^*_{\ell_\alpha,c_\alpha}}{\sqrt{\lambda_{c_\alpha}^{(\alpha)}(s_\ell)}}.
\end{equation}
The identity follows because projecting onto the support of any species unfolding leaves the fusion tensor unchanged, and projections acting on different species commute (see Appendix~\cite{Appendix} for the derivation). Retaining all nonzero singular values therefore gives an exact reconstruction.

Substituting Eq.~\eqref{eq:simultaneous fusion reconstruction} into the HyperDet definition in Eq.~\eqref{eq:hyperdeterminant wavefunction amplitude} gives the determinant expansion
\begin{align}\label{eq:HyperDet determinant expansion}
	\Psi_{\mathcal F}^{\text{phys}}(\bm s) &= \sum_{\{c_\alpha(\ell)\}}\left[\prod_{\ell=1}^{N}\mathcal T_{s_\ell;c_1(\ell),\ldots,c_m(\ell)}\right]\nonumber\\
	&\qquad\qquad\prod_{\alpha=1}^{m}\det\left[[\bm A^{(\alpha)}]_{(s_\ell,c_\alpha(\ell)),j}\right]_{\ell,j=1}^{N}.
\end{align}
The sum runs over all \emph{channel configurations}, with one independent channel index $c_\alpha(\ell)\in\{1,\ldots,r^{(\alpha)}_{s_\ell}\}$ for each physical slot $\ell$ and parton species $\alpha$:
\begin{equation}\label{eq:channel configuration sum}
	\sum_{\{c_\alpha(\ell)\}}\equiv\prod_{\alpha=1}^{m}\prod_{\ell=1}^{N}\sum_{c_\alpha(\ell)=1}^{r^{(\alpha)}_{s_\ell}}.
\end{equation}
For a fixed channel configuration, each determinant is formed by selecting the rows $(s_1,c_\alpha(1)),\ldots,(s_N,c_\alpha(N))$ of the same matrix $\bm A^{(\alpha)}$ and retaining all $N$ columns. The core tensors $\mathcal T_s$ thus weight and couple these channel choices across species.

The determinant expansion now gives $\bm A^{(\alpha)}$ a precise orbital interpretation. If its column rank were smaller than $N$, every determinant for species $\alpha$ in Eq.~\eqref{eq:HyperDet determinant expansion} would vanish, and hence so would the physical wavefunction. A nonzero wavefunction therefore requires $\mathop{\mathrm{rank}}\bm A^{(\alpha)}=N$ for every species. Its $N$ linearly independent columns can then be regarded as occupied-orbital amplitudes in the reconstructed one-body space $\mathcal H_{\text{rec}}^{(\alpha)}=\bigoplus_{s_\ell}\mathbb C^{r_{s_\ell}^{(\alpha)}}$, whose basis states are labeled by the composite coordinate $(s_\ell,c)$. After orthonormalization, these columns define a single $N$-particle Slater determinant for each species. The determinants in Eq.~\eqref{eq:HyperDet determinant expansion} are configuration amplitudes of these reconstructed Slater states, up to configuration-independent normalization factors. Their occupied subspaces are the starting point for the parton analysis in Sec.~\ref{sec:parton_chern_numbers}.

In this representation, the weighted Tucker core $\mathcal T_{s_\ell}$, or precisely the induced virtual-to-physical map $\mathcal G\equiv\bigotimes_{s_\ell} \mathcal G_{s_\ell}$ where
\begin{align}\label{eq:Tucker-core fusion}
	&\mathcal G_{s_\ell} := \ket{0}_{\text{phys}}\bra{0}_{\text{aux}}\nonumber\\
	&\qquad + \ket{1}_{\text{phys}}\sum_{c_1,\ldots,c_m}\mathcal T_{s_\ell;c_1,\ldots,c_m}\bra{(s_\ell,c_1),\ldots,(s_\ell,c_m)}_{\text{aux}},
\end{align}
fuses auxiliary channel states attached to the same physical orbital $s_\ell$. Fusion is therefore \emph{on-site in the reconstructed channel space}. This does not require the original fusion map to be on-site in the microscopic parton basis of Eq.~\eqref{eq:fusion tensor coefficients in parton local basis}. A relative-coordinate rewriting makes the distinction explicit. On a periodic lattice, write $s_\ell\equiv(\bm R_{s_\ell},a_{s_\ell})$ and introduce $\delta_\alpha\equiv(\bm d_\alpha,b_\alpha)$, where $\bm d_\alpha$ is a cell displacement and $b_\alpha$ labels an orbital in the displaced cell. The shorthand
\begin{equation*}
	s_\ell\oplus\delta_\alpha\equiv(\bm R_{s_\ell}+\bm d_\alpha,b_\alpha)=u_\alpha
\end{equation*}
defines a bijection between the microscopic parton coordinates $u_\alpha$ and the relative-coordinate labels $\delta_\alpha$ for each fixed $s_\ell$. Reindexing Eq.~\eqref{eq:fusion tensor connection relation} gives
\begin{equation}\label{eq:relative-coordinate re-writing of fusion tensor relation}
	\mathcal F_{s_\ell;\ell_1,\ldots,\ell_m} = \sum_{\delta_1,\ldots,\delta_m} \mathcal F^{\text{loc,rel}}_{s_\ell;\delta_1,\ldots,\delta_m} \prod_{\alpha=1}^{m} \varphi^{(\alpha)}_{\ell_\alpha}(s_\ell\oplus\delta_\alpha),
\end{equation}
with
\begin{equation}\label{eq:relative-coordinate resolved local fusion tensor}
	\mathcal F^{\text{loc,rel}}_{s_\ell;\delta_1,\ldots,\delta_m} \equiv \langle\phi_{s_\ell}|\hat F_\ell|\phi^{(1)}_{s_\ell\oplus\delta_1},\ldots,\phi^{(m)}_{s_\ell\oplus\delta_m}\rangle.
\end{equation}
Each $\delta_\alpha$ runs over all cell displacements and target orbitals. These labels can be treated as internal flavors attached to $s_\ell$, even when the corresponding microscopic parton coordinates are far apart. The extension to a generic graph is similar: rewrite using a distance shell and a vertex label within that shell. The Tucker channels in Eq.~\eqref{eq:weighted Tucker core} play the same role as the internal flavors under the on-site rewriting in Eq.~\eqref{eq:relative-coordinate re-writing of fusion tensor relation} and Eq.~\eqref{eq:relative-coordinate resolved local fusion tensor}, but their dimensions are fixed by the unfolding ranks $r_{s_\ell}^{(\alpha)}$, rather than by the number of relative-coordinate labels. These dimensions are minimal among exact Tucker representations of the fixed fusion tensor: any such representation must have a channel dimension at least as large as the corresponding unfolding rank.

The conventional product-of-determinants form Eq.~\eqref{eq:HyperDet of factorizable fusion tensor} is recovered when every species unfolding has rank one. In that limit, the channel sum contains a single term, and the residual scalar core can be absorbed into the corresponding rows of one species matrix. At higher ranks, the weighted Tucker core $\mathcal T_{s_\ell}$ in Eq.~\eqref{eq:HyperDet determinant expansion} couples multiple channels and retains the nonfactorizable fusion structure. Its species indices remain independent and need not collapse to a single shared channel index. The auxiliary channel dimensions may grow with particle number, and the number of channel assignments in Eq.~\eqref{eq:channel configuration sum} is generally exponential in system size. The reconstruction thus establishes an exact parton interpretation for general fusion tensors while retaining the computational complexity of the full HyperDet.

\section{Variational Monte Carlo Optimization of Fusion Tensor}\label{sec:vmc_optimization}
The core ingredient for evaluating the HyperDet wavefunction Eq.~\eqref{eq:hyperdeterminant wavefunction amplitude} is the hypercubic, configuration-restricted fusion tensor $\mathcal F_{\bm s}$ sliced from the full hyper-rectangular fusion tensor $\mathcal F$. We thus parameterize physical states by this $\mathbb C$-valued fusion tensor, and directly take the entries of $\mathcal F$ as the variational parameters in our HyperDet wavefunction ansatz, denoted as $|\Psi^{\text{phys}}_{\mathcal F}\rangle$ from now on.

Given a many-body Hamiltonian $H$ at filling $\nu$, the best fusion tensor for a fixed number of parton species $m$ can be optimized using the standard time-dependent variational principle (TDVP). For each physical configuration $|\bm s\rangle$, TDVP determines the infinitesimal fusion-tensor update $\mathcal F\to\mathcal F+\delta\mathcal F$ by aligning it with the infinitesimal imaginary-time evolution $|\Psi^{\text{phys}}_{\mathcal F+\delta\mathcal F}\rangle \simeq e^{-\delta\tau H}|\Psi^{\text{phys}}_{\mathcal F}\rangle$, with $\delta\tau\ll1$ interpreted as the learning rate. Since our HyperDet wavefunction ansatz is holomorphic with respect to the fusion tensor~\footnote{This is \emph{not} the holomorphicity condition in the lowest Landau level with respect to the physical coordinates.}, its infinitesimal update satisfies the well-known stochastic-reconfiguration (SR)~\cite{sorella1998green,sorella2001generalized} rule
\begin{equation}\label{eq:SR update rule}
	\delta\mathcal F = -\delta\tau \bm{\mathcal S}^{-1}\bm f,
\end{equation}
with the force vector $\bm f$ and Fisher information matrix $\bm{\mathcal S}$~\cite{park2020geometry}:
\begin{align}\label{eq:force vector and Fisher information matrix}
	f_{\mu} &= \langle\!\langle\hat{\mathcal O}_\mu^*H\rangle\!\rangle - \langle\!\langle\hat{\mathcal O}_\mu^*\rangle\!\rangle\langle\!\langle H\rangle\!\rangle\nonumber\\[0.375em]
	\mathcal S_{\mu\nu} &= \langle\!\langle\hat{\mathcal O}_\mu^*\hat{\mathcal O}_\nu\rangle\!\rangle - \langle\!\langle\hat{\mathcal O}_\mu^*\rangle\!\rangle\langle\!\langle\hat{\mathcal O}_\nu\rangle\!\rangle
\end{align}
Here $\langle\!\langle\hat A\rangle\!\rangle\equiv\langle\Psi^{\text{phys}}_{\mathcal F}|\hat A|\Psi^{\text{phys}}_{\mathcal F}\rangle/\langle\Psi^{\text{phys}}_{\mathcal F}|\Psi^{\text{phys}}_{\mathcal F}\rangle$ denotes the normalized expectation value. The subscripts $\mu,\nu=1,2,\ldots,N_s\times N^m$ label the flattened indices of the fusion tensor elements, and the logarithmic derivative $\hat{\mathcal O}_\mu=\sum_{\bm s}|\bm s\rangle\Big[\partial_{\mathcal F_\mu}\ln\big(\Psi^{\text{phys}}_{\mathcal F}(\bm s)\big)\Big]\langle\bm s|$. There is one additional subtlety: the gauge redundancy inherited by our HyperDet ansatz (see below) may affect the SR updates. We address this by projecting the raw SR updates onto the Frobenius-horizontal complement of the gauge tangent space, as detailed in the Appendix~\cite{Appendix}.

Resolving the symmetry sector is helpful for a topologically degenerate ground-state manifold. For a system with an abelian symmetry group $G$ whose target sector satisfies $U_g|\Psi^{\text{phys}}\rangle=\chi(g)|\Psi^{\text{phys}}\rangle$, the VMC sampling can be modified by adding a positive semi-definite penalty $Q_\chi=(U_g-\chi(g)I)^\dagger(U_g-\chi(g)I)$,
\begin{equation}\label{eq:sector-pinning penalty}
	H\mapsto H_\kappa \equiv H + \kappa Q_{\chi},\quad \kappa\geq0.
\end{equation}
Alternatively, one can select this sector by applying the irrep projector
\begin{equation}\label{eq:irrep-projector}
	|\Psi^{\text{phys}}_{\mathcal F;\chi}\rangle := P_\chi |\Psi^{\text{phys}}_{\mathcal F}\rangle,\quad P_\chi = \dfrac{1}{|G|}\sum_{g\in G}\chi(g)^* U_g.
\end{equation}
Sampling the sector-projected state $|\Psi^{\text{phys}}_{\mathcal F;\chi}\rangle$ increases the cost of each VMC update by a factor of $|G|$, but the improvement in VMC energy convergence is substantial. It also prevents symmetry-sector mixing, making the comparison with an individual ED eigenstate meaningful when the ground-state manifold is nearly degenerate.

In practice, for both translation-invariant lattice models below, we implemented irrep-projected sampling to examine the expressiveness of our HyperDet wavefunction ansatz. Optimizer SPRING~\cite{goldshlager2024kaczmarz} is chosen for VMC optimization with hyperparameter settings given in the Appendix~\cite{Appendix}. We performed extensive numerical experiments with different initializations, hyperparameter settings and optimizer choices, and found that the performance of our HyperDet wavefunction ansatz is robust and insensitive to these training details.

\section{Expressiveness of the HyperDet Wavefunction Ansatz}\label{sec:expressiveness}
\subsection{Phase Transitions near the $\nu=1/2$ Bosonic FCI}
To showcase the expressiveness of our HyperDet wavefunction ansatz, we first consider a hard-core Bose-Hubbard model on the Haldane honeycomb lattice with extended hoppings~\cite{wang2011fractional,lu2025continuous}
\begin{equation}\label{eq:Bose-Hubbard Hamiltonian}
	H = -\sum_{\langle i,j\rangle}t b_{i}^{\dagger}b_{j}- \sum_{\langle\langle i,j\rangle\rangle}t' e^{i\phi_{i,j}}b_{i}^{\dagger}b_{j}- \sum_{\langle\langle\langle i,j\rangle\rangle\rangle}t'' b_{i}^{\dagger}b_{j}+\text{h.c.}
\end{equation}
where $b_i^\dagger/b_i$ creates/annihilates a hard-core boson. At half band filling $\nu=1/2$, the flatband parameters $t=1$ (as the energy unit), $t'=-0.6$, $\phi_{ij}=\pm0.4\pi$, and $t''=-0.58$ stabilize a bosonic FCI~\cite{wang2011fractional} even without explicit interaction terms. This interaction-driven physics occurs because the hard-core constraint is imposed in Eq.~\eqref{eq:Bose-Hubbard Hamiltonian}. Ref.~\cite{lu2025continuous} further focused on the regime obtained by sweeping the extended-hopping strength $t''\in[-1,0]$, where a DMRG study locates a first-order SF@M--FCI transition near $t''\approx-0.7$ and a continuous FCI--SF@$\Gamma$ transition near $t''\approx-0.495$.
\begin{figure}[btp!] 
	\centering
	\includegraphics[width=1.0\linewidth]{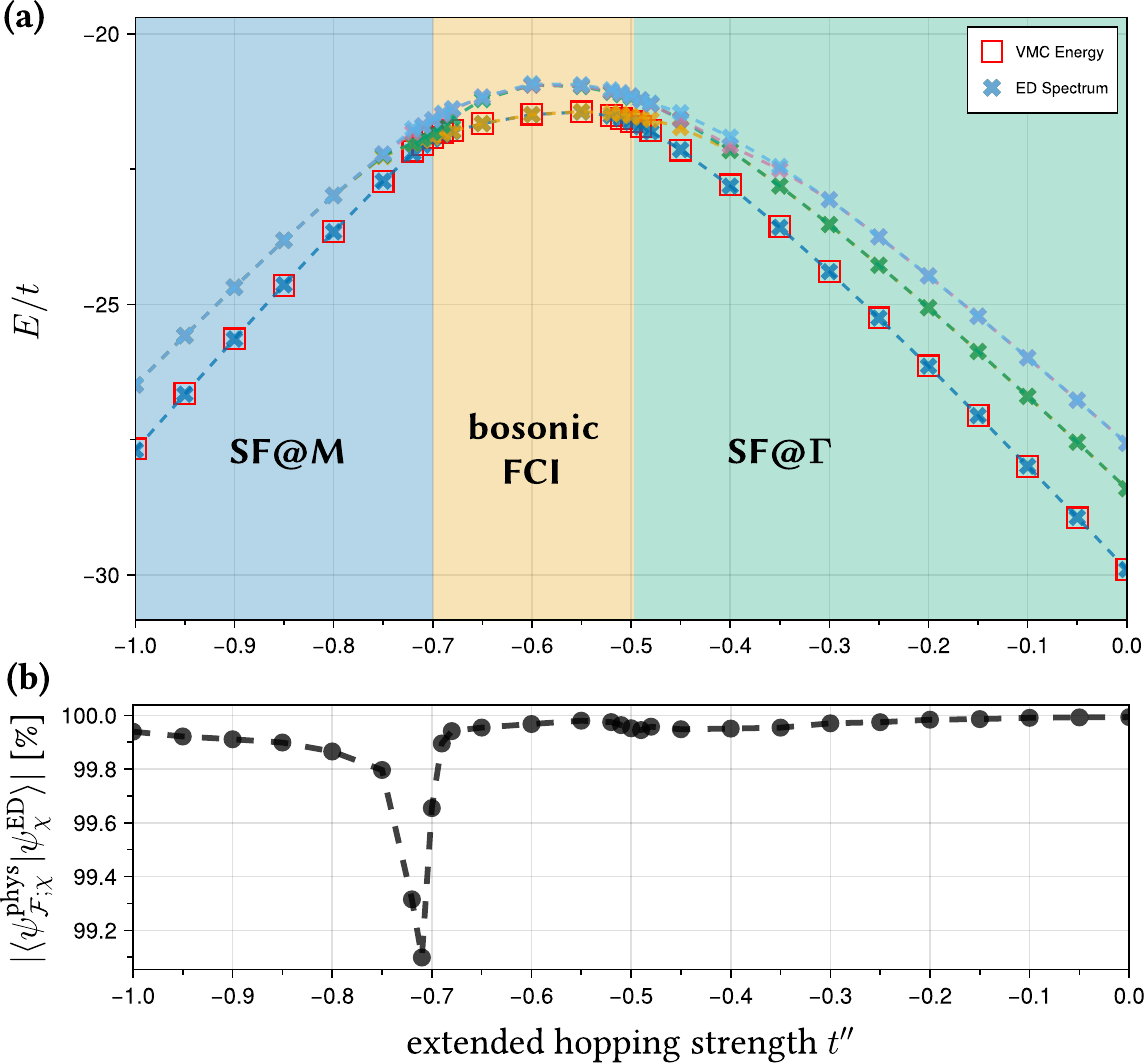}
	\caption{Benchmark of the irrep-projected HyperDet wavefunction ansatz against ED results on a $3\times6\times2$ sample ($2$ is the sublattice) for model Eq.~\eqref{eq:Bose-Hubbard Hamiltonian} at half-filling $\nu=1/2$ as the extended-hopping strength $t''\in[-1,0]$ is varied. (a) Comparison of the lowest twenty ED energies across all sectors and the irrep-projected variational energy in the global ground-state sector. The three phases SF@M, bosonic FCI, and SF@$\Gamma$ are shaded with their phase boundaries adapted from the DMRG calculations in Ref.~\cite{lu2025continuous}. (b) Wavefunction overlap with the ED ground states. Note: the prominent drop in the overlap at $t''\approx-0.71$ is due to the closing of the direct gap in the chosen sector $(0,3)$. Such a tiny-gap region is notoriously difficult for VMC to resolve~\cite{zhang2026projected}. The resulting overlap drop is thereby an artifact of VMC; it does not compromise the expressiveness of our HyperDet wavefunction ansatz.}
	\label{fig:bosonic_combined_phase_diagram}
\end{figure}

We perform VMC optimization of the fusion tensor with two parton species, $m=2$, for this bosonic model by sweeping $t''\in[-1,0]$ following Ref.~\cite{lu2025continuous}. This single sweep is a stringent phase-agnostic test. The middle bosonic FCI phase has intrinsic semion topological order, whereas both neighboring phases break the global $U(1)$ symmetry, and SF@M further breaks lattice translation and rotation symmetry in the thermodynamic limit. Moreover, as established in Ref.~\cite{lu2025continuous}, the two boundaries are qualitatively different: the SF@M--FCI transition is first order, while the FCI--SF@$\Gamma$ transition is a continuous transition widely believed to be described by a change of parton band topology and critical gauge fluctuations~\cite{barkeshli2014continuous,barkeshli2015continuous,song2024phase}. A variational architecture that succeeds throughout the entire sweep thus cannot rely on any bias towards gapped or gapless states, or towards states with or without topological or symmetry-breaking orders.

Specifically, we directly compare the irrep-projected variational energy and variational states $|\Psi^{\text{phys}}_{\mathcal F;\chi}\rangle$ with ED on a $3\times6\times2$ sample ($2$ counts the sublattices), whose full Hilbert space dimension is over $9.41\times 10^7$. At this geometry, the SF@M ground state lies in the $(0,3)$ sector, the two almost degenerate bosonic FCI ground states lie in the $(0,0)$ and $(0,3)$ sectors, which undergo a sector rearrangement within the ground state manifold around $t''\approx-0.6$ without a phase transition, and the SF@$\Gamma$ ground state lies in the $(0,0)$ sector. We track the global ground state during the $t''$-sweep by focusing on the $(0,3)$ sector when $t''\in[-1,-0.6)$, and on $(0,0)$ sector for other $t''$ values. As shown in~\cref{fig:bosonic_combined_phase_diagram}, the variational energy error remains below $10^{-3}$ per particle throughout the sweep, while the wavefunction maintains a consistently excellent overlap over $99.9\%$, \emph{in all of SF@M, FCI, and SF@$\Gamma$ phases}, within two thousand VMC updates.

\begin{figure}[btp!] 
	\centering
	\includegraphics[width=1.0\linewidth]{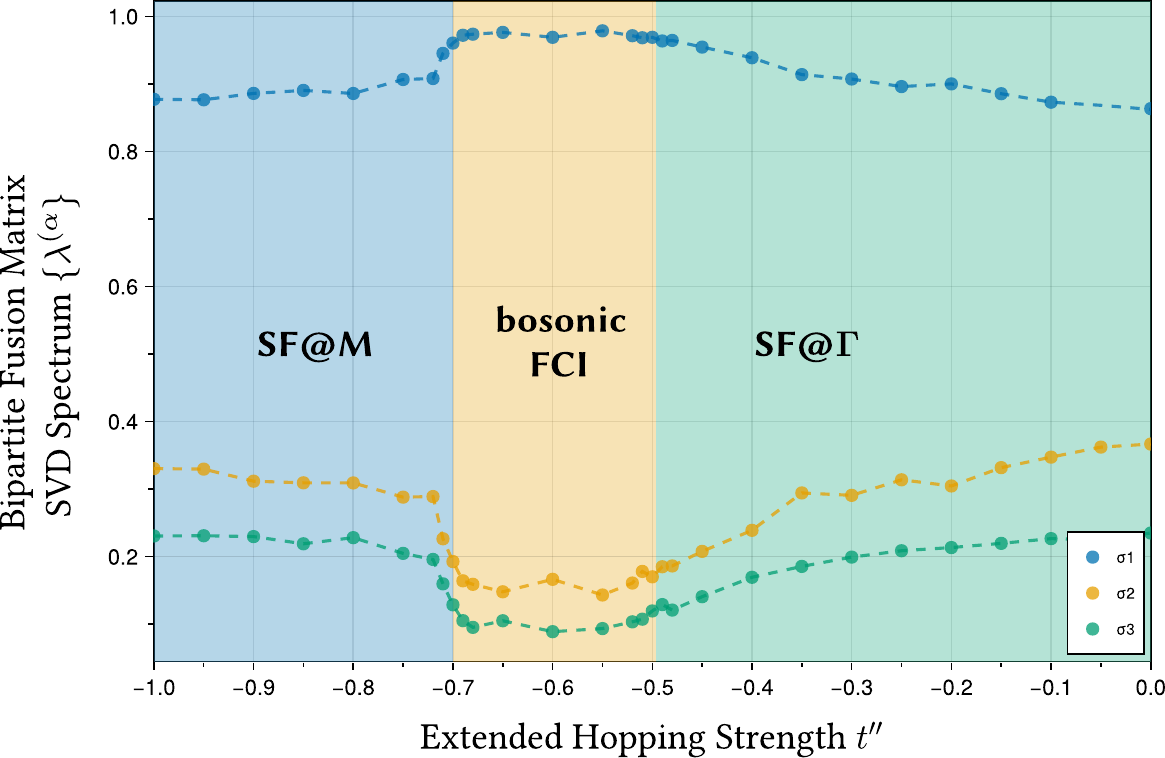}
	\caption{Bipartite Fusion Matrix SVD Spectrum of Bosonic Two-parton Case. Top three largest normalized singular values of the bipartite fusion matrix Eq.~\eqref{eq:SVD spectrum of bipartite fusion matrix} along the extended-hopping sweep $t''\in[-1,0]$ for the bosonic model Eq.~\eqref{eq:Bose-Hubbard Hamiltonian} on a $3\times6\times2$ sample at $\nu=1/2$. The abrupt jump in the singular value strengths around $t''\approx-0.7$, and the smooth change of the slope around $t''\approx-0.5$, are consistent with the first-order and continuous phase boundaries revealed in Ref.~\cite{lu2025continuous}, respectively. The shaded phase regions, taken from Ref.~\cite{lu2025continuous}, are also included for comparison.}
	\label{fig:bosonic bipartite fusion matrix SVD spectrum}
\end{figure}
Traditional parton field theories describe the FCI--SF@$\Gamma$ transition by splitting each boson into two fermionic partons $b_{\bm r}=f^{(1)}_{\bm r}f^{(2)}_{\bm r}$ where each parton occupies a band of Chern number $\mathcal C^{(\alpha)}=1$~\cite{barkeshli2014continuous,barkeshli2015continuous}. From a wavefunction perspective, this kind of local enlargement, or conversely the on-site fusion map Eq.~\eqref{eq:on-site fusion tensor} from the occupied parton orbitals to physical orbitals, in our language, \emph{assumes} a coordinate-identification rule similar to that in Laughlin wavefunctions. This construction forces the fusion tensor into the factorized form as Eq.~\eqref{eq:factorizable fusion tensor}, and therefore simplifies the HyperDet wavefunction to be products of Slater determinants Eq.~\eqref{eq:HyperDet of factorizable fusion tensor}. In the FCI phase, our optimized fusion tensors approach this factorization limit. Specifically, we construct the bipartite fusion matrix Eq.~\eqref{eq:bipartite fusion matrix} from each energy-optimized fusion tensor in the $t''$-sweep and perform the SVD analysis. As shown in~\cref{fig:bosonic bipartite fusion matrix SVD spectrum}, in the FCI phase where $-0.7\leq t''<-0.5$, there is a clear single dominant singular value component that in this $m=2$ example indicates a factorizable fusion tensor and thus a simplification of physical wavefunction to the product of Slater determinants.

However, as the system enters the SF@$\Gamma$ phase, additional singular values gain weight, indicating a departure from the nearly factorized fusion structure found in the FCI.  Indeed, a projected paired-parton Pfaffian ansatz was recently proposed to describe the bosonic FCI--SF@$\Gamma$ transition in Ref.~\cite{lotrivc2026paired} because the conventional product of parton Slater determinants, successful in the FCI, fails to describe the SF@$\Gamma$ phase. Our framework offers an alternative understanding to this transition: we retain a Slater determinant for each parton species throughout the sweep while treating the virtual-to-physical fusion map as a fully variational degree of freedom.  The excellent overlaps with ED ground states across the FCI-SF@$\Gamma$ shows that non-factorizable fusion can accurately describe the SF@$\Gamma$ phase without introducing explicit pairing into the auxiliary parton states.


Beyond this change in fusion structure, the redistribution of the bipartite fusion spectrum also tracks the locations and distinct characters of both transitions identified by DMRG~\cite{lu2025continuous}. We observe a sharp jump near the first-order SF@M--FCI transition at $t''\approx-0.7$, and a smooth evolution with a change of slope near the continuous FCI--SF@$\Gamma$ transition at $t''\approx-0.5$. These observations connect changes within the optimized fusion tensor to the phase boundaries and their transition behavior. The fusion spectrum thus provides a structural diagnostic across phases with different symmetry-breaking and topological orders, without evaluating phase-specific physical observables, and offers a way to probe transitions beyond the Ginzburg--Landau paradigm~\cite{barkeshli2014continuous,barkeshli2015continuous}.


\subsection{Phase Transitions near the $\nu=1/3$ Fermionic FCI}
To further showcase the expressiveness of our HyperDet wavefunction ansatz, we next consider an extended Hubbard model similar to Eq.~\eqref{eq:Bose-Hubbard Hamiltonian}, but for fermions on the checkerboard lattice:
\begin{align}
	&H = -\sum_{\langle i,j\rangle}t e^{i\phi_{i,j}} c_{i}^{\dagger}c_{j}- \sum_{\langle\langle i,j\rangle\rangle}t'_{i,j} c_{i}^{\dagger}c_{j}- \sum_{\langle\langle\langle i,j\rangle\rangle\rangle}t'' c_{i}^{\dagger}c_{j}+ \text{h.c.}\notag \\
	& \quad + V_{1}\sum_{\langle i,j\rangle}n_{i}n_{j} + V_{2}\sum_{\langle\langle i,j\rangle\rangle}n_{i}n_{j} + V_{3}\sum_{\langle\langle\langle i,j\rangle\rangle\rangle}n_{i}n_{j},\label{eq:Fermi-Hubbard Hamiltonian}
\end{align}
where $c_i^\dagger/c_i$ creates/annihilates a fermion and $n_i=c_i^\dagger c_i$ is the fermion number operator. It has been shown earlier in Ref.~\cite{sheng2011fractional} that both FCIs at $\nu=1/3$ and $\nu=1/5$ can be stabilized in the presence of the nearest-neighbor $V_1$ and next-nearest-neighbor $V_2$ interaction at the flatband settings~\cite{sun2011nearly}. Here, we instead push the model to an unexplored regime away from the flatband limit~\cite{sun2011nearly} under the parameter settings $t=1$ (as the energy unit), $t'_{1,2}=\pm1.4/(2+\sqrt2)$, $t''=-1/(2+2\sqrt2)\approx-0.21$, and complex phases $\phi_{i,j}=\pm\pi/4$, and investigate the strong-coupling limit of the $\nu=1/3$ state with all three interaction terms present $\bm V\equiv(V_1,V_2,V_3)=(5.6,1.26,0.56)$ to examine the expressiveness of our HyperDet wavefunction. We emphasize that such large interaction strengths induce strong band-mixing effects, positioning the model in a regime away from the usual LLL limit, as the interaction strength $||\bm V||\approx5.8$, band gap $\Delta_g\approx2.8$, and bandwidth $\Delta\approx0.5$ satisfy $||\bm V||>|\Delta_g|>|\Delta|$. ED diagnostics on the accessible clusters identify a robust Laughlin $U(1)_3$ FCI with significant gap enhancement for $-0.41\lesssim t''\lesssim0.15$, a candidate AHC at more negative $t''$, and a candidate CDW at more positive $t''$. The detailed numerical evidence is provided in the Appendix~\cite{Appendix}.

We perform VMC optimization by sweeping the extended hopping strength $t''\in[-0.62,0.41]$ for the fusion tensor with three parton species, $m=3$. We still benchmark the irrep-projected variational energy and variational states $|\Psi^{\text{phys}}_{\mathcal F;\chi}\rangle$ against ED results, but this time on a $3\times5\times2$ sample ($2$ is the sublattice)~\footnote{Our ansatz works equally well for larger systems, such as $3\times6\times2$ geometry. The $3\times5\times2$ geometry is chosen here because the three almost degenerate ground states of the FCI collapse to the same momentum sector on the $3\times6\times2$ geometry, and VMC is known to have difficulty distinguishing such a tiny-gap region~\cite{zhang2026projected} and can therefore be ``trapped'' in the middle of the ground state manifold of the FCI as an artifact, which can be removed by computing the ground state manifold overlap rather than a single-state overlap, see Appendix~\cite{Appendix} for details.}. Within FCI's threefold ground state manifold, the global ground state switches from the two exact degenerate sectors $(1,0)$ and $(2,0)$ to a single sector $(0,0)$ on this geometry at $t''\approx-0.28$ during the sweep. Again this sector reorganization occurs within the FCI phase without a phase transition. We still track the global ground state during the parameter sweep. We target sector $(1,0)$ for $t''\in[-0.62, -0.41)$ within the candidate AHC, and continue to target it for $t''\in[-0.41,-0.28)$ upon entering the FCI. We then switch the focus to the single ground-state sector $(0,0)$ for the remaining part of the FCI, $t''\in[-0.28,0.15)$, and retain it for $t''\in[0.15,0.41]$ within the candidate CDW.
\begin{figure}[btp!] 
	\centering
	\includegraphics[width=1.0\linewidth]{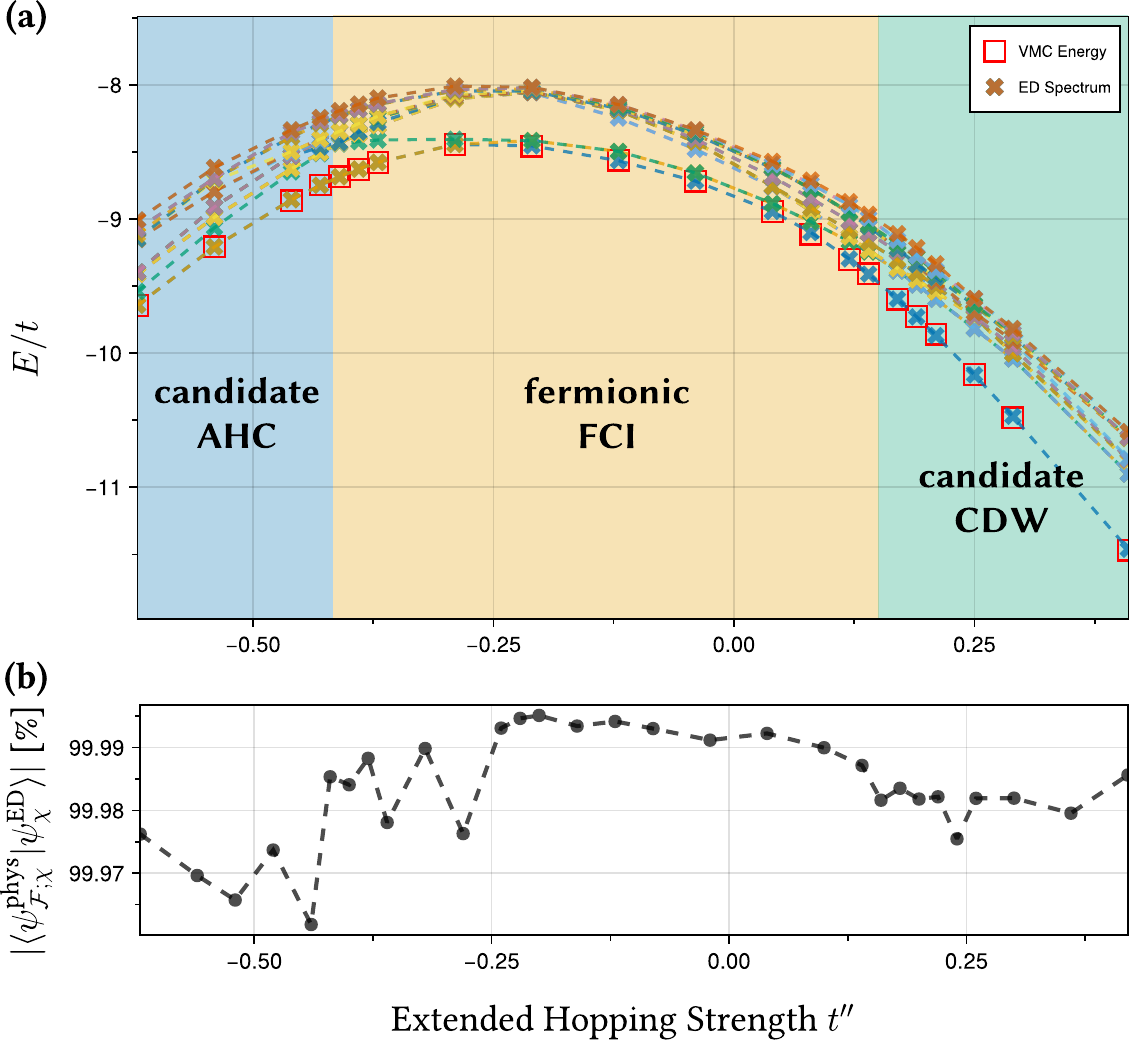}
	\caption{Benchmark of the irrep-projected HyperDet wavefunction ansatz against ED results on a $3\times5\times2$ sample ($2$ is the sublattice) for model Eq.~\eqref{eq:Fermi-Hubbard Hamiltonian} at band filling $\nu=1/3$ as the extended-hopping strength $t''\in[-0.62,0.41]$ is varied. (a) Comparison of the lowest twenty ED energies across all sectors and the chosen global ground state of the irrep-projected variational energy. The phase boundaries between candidate AHC, fermionic FCI, and candidate CDW here are determined from the ED spectrum sweep detailed in Appendix~\cite{Appendix}. Although this is not a thermodynamic-limit phase diagram, it suffices to benchmark the expressiveness of the ansatz on this finite cluster. (b) Wavefunction overlap with the ED ground states.}
	\label{fig:fermionic_combined_phase_diagram}
\end{figure}
The optimization results are summarized in~\cref{fig:fermionic_combined_phase_diagram}. The performance of the HyperDet wavefunction in the fermionic case remains excellent: the variational energy error remains less than $1.0\times10^{-3}$ per particle, with excellent wavefunction overlaps over $99.96\%$ across the entire sweep of AHC, FCI, and CDW phases, and in particular over $99.98\%$ throughout the FCI phase, within two thousand VMC updates.

We also perform the same SVD analysis of the bipartite fusion matrix Eq.~\eqref{eq:bipartite fusion matrix} for this $m=3$ fermionic case. Now there are three bipartitions $1|23$, $2|31$ and $3|12$, and we compute its species-averaged SVD spectrum $\bar\lambda\equiv\big\langle\frac{1}{3}\sum_{\alpha=1}^3 \lambda^{(\alpha)}(s_\ell)\big\rangle_{s_\ell}$ to quotient out the discrete species-label gauge redundancy. We also observe a coincidence of all three species-resolved spectra across the $t''$-sweep that further reveals a permutation-isotropic structure, see the Appendix~\cite{Appendix} for details.
\begin{figure}[btp!] 
	\centering
	\includegraphics[width=1.0\linewidth]{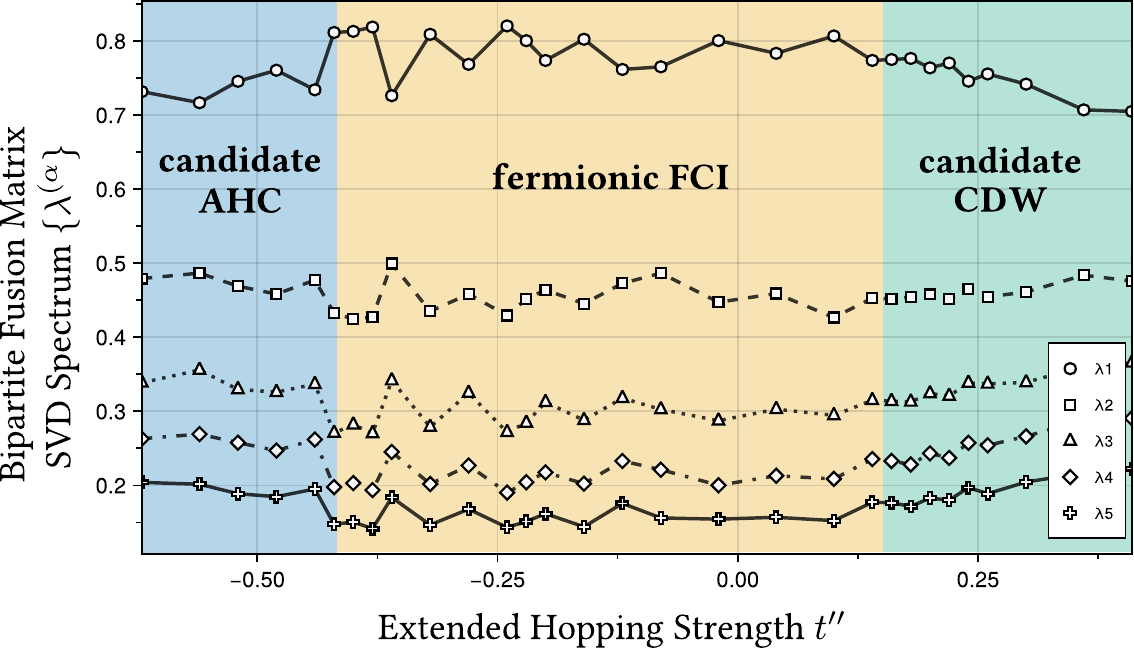}
	\caption{Species-Averaged Bipartite Fusion Matrix SVD Spectrum of Fermionic Three-parton Case. Top five largest normalized, species-averaged singular values of the bipartite fusion matrix Eq.~\eqref{eq:bipartite fusion matrix} obtained from the one-versus-two parton bipartitions of the optimized fusion tensor. The calculation is done for the fermionic model Eq.~\eqref{eq:Fermi-Hubbard Hamiltonian} on a $3\times5\times2$ sample at $\nu=1/3$. The observed singular value strengths jump around $t''\approx-0.41$ and a smooth change of slope around $t''\approx0.15$ is consistent with the ED-resolved finite-size AHC--FCI and FCI--CDW phase boundaries.}
	\label{fig:fermionic_bipartite_SVD_redistribution}
\end{figure}
As shown in~\cref{fig:fermionic_bipartite_SVD_redistribution}, this species-averaged SVD spectrum clearly undergoes structural changes around $t''\approx-0.41$ and $t''\approx0.15$, consistent with our ED-resolved AHC--FCI and FCI--CDW phase boundaries. Although not as sharp as the direct visualization of the fusion tensor factorizability for the $m=2$ bosonic case, this species-averaged SVD spectrum at $m=3$ still provides a useful proxy for visualization of structural changes in the fusion tensor. Due to the increased dimension, the fusion tensor factorizability can no longer be deduced from the singular value strengths of the $m=3$ fermionic spectrum, in contrast to the $m=2$ bosonic case. However, this showcase still supports our conjecture that the fusion-structure redistribution, here inferred by the SVD spectrum of the bipartite fusion matrix Eq.~\eqref{eq:SVD spectrum of bipartite fusion matrix}, provides a diagnostic tool for many-body phase transitions without extra information from phase-specific physical observables.

\section{Interpretability of the HyperDet Wavefunction Ansatz}\label{sec:interpretability}
\subsection{Parton Chern Numbers from the Optimized Fusion Tensor}\label{sec:parton_chern_numbers}
A central question for the interpretability of the optimized fusion tensor is whether its reconstructed occupied subspaces encodes the parton topology information, especially the parton Chern numbers $\mathcal C^{(\alpha)}$. We now show that it does, under the explicit hypothesis stated below. The extraction only relies on the occupied-orbital projectors $\bm P^{(\alpha)}$ constructed from the bipartite fusion matrices in Eq.~\eqref{eq:bipartite fusion matrix}, and assumes no input parton band structure or symmetry-fractionalization data.

Specifically, for the $N$ linearly independent columns of the occupied-orbital amplitude $\bm A^{(\alpha)}$ defined in Eq.~\eqref{eq:occupied-orbital amplitudes matrix A}, we can construct its orthonormal polar frame $\bm Q^{(\alpha)} = \bm A^{(\alpha)}(\bm A^{(\alpha)\dagger}\bm A^{(\alpha)})^{-1/2}$ for this column space, and define the orthogonal projector
\begin{equation}\label{eq:occupied-orbital projector}
	\bm P^{(\alpha)}=\bm A^{(\alpha)}\bigl(\bm A^{(\alpha)\dagger}\bm A^{(\alpha)}\bigr)^{-1}\bm A^{(\alpha)\dagger}=\bm Q^{(\alpha)}\bm Q^{(\alpha)\dagger}
\end{equation}
as the \emph{occupied-orbital projector}. Its physical meaning can be seen clearly as following: let $f_{s,c}^{(\alpha)\dagger}$ creates the auxiliary channel state $|s_\ell,c\rangle$, then for the reconstructed $N$-occupied parton Slater determinant state
\begin{equation}\label{eq:reconstructed Slater determinant state}
	|\Psi_{\text{rec}}^{(\alpha)}\rangle = \prod_{\ell_\alpha=1}^N d_{\ell_\alpha}^{(\alpha)\dagger} |0\rangle,\quad d_{\ell_\alpha}^{(\alpha),\dagger} := \sum_{(s,c)}\bm Q_{(s,c),\ell_\alpha}^{(\alpha)} f_{s,c}^{(\alpha),\dagger},
\end{equation}
the one-body correlation projector is exactly the constructed projector in Eq.~\eqref{eq:occupied-orbital projector}
\begin{equation}\label{eq:one-body correlation matrix meaning of occupied-orbital projector}
	\langle f_{s',c'}^{(\alpha)\dagger}f_{s,c}^{(\alpha)}\rangle_{\Psi_{\mathrm{rec}}^{(\alpha)}} = [\bm P^{(\alpha)}]_{(s,c),(s',c')}.
\end{equation}

We compute the real-space Bott index $\mathsf{Bott}(\bm P^{(\alpha)})$ as a finite-system diagnostic of the reconstructed occupied subspace. Its identification with the bulk parton Chern number $\mathcal C^{(\alpha)}$ relies on the following assumption:~\cite{hastings2010almost,hastings2011topological}
\begin{assumption}[Local Parton Insulator Realization]\label{ass:parton insulator}
	The constructed projector admits a local (or sufficiently quasilocal) gapped parent Hamiltonian whose occupied subspace varies smoothly and stays gapped under boundary twists.
\end{assumption}
\noindent The finite-system branch gap quantifies the stability of the Bott assignment, and the realization of parton insulator can also be achieved when translation fractionalization is present and parton band-filling becomes commensurate (see below). The construction, gauge convention, and numerical diagnostics are detailed in the Appendix~\cite{Appendix}.

We perform a VMC optimization of the HyperDet wavefunction for the bosonic Haldane model on a $4\times4\times2$ geometry at $t''=-0.6$ within the bosonic FCI phase. For the optimized fusion tensor in the $m=2$ case there, we find, in the canonical compact local-SVD gauge, a robust pair of parton Chern numbers $\mathcal C^{(1)}=\mathcal C^{(2)}=1$ for the two parton species, with a large branch gap $2.29$. We also perform a VMC optimization for the fermionic checkerboard model on a $3\times6\times2$ geometry at $t''=-0.12$ within the fermionic FCI phase. For the optimized fusion tensor in the $m=3$ case there, we obtain a robust tuple of parton Chern numbers $\mathcal C^{(1)}=\mathcal C^{(2)}=\mathcal C^{(3)}=1$ for all three parton species, with branch gaps $(0.661,0.575,0.797)$. The construction details and the robustness tests of these parton Chern numbers are provided in the Appendix~\cite{Appendix}. These extracted parton Chern numbers are all consistent with the topological contents expected for the $\nu=1/2$ bosonic FCI and $\nu=1/3$ fermionic FCI states. In the traditional field-theory descriptions of these states, the parton Chern numbers are prescribed as theoretical inputs. Here, by contrast, these parton-band topologies are determined directly from energy optimization. They are obtained without explicit construction of the parton mean-field Hamiltonian.

We would like to emphasize that relating the extracted parton Chern numbers to the physical Hall conductivity requires additional dynamical input beyond the static parton band topology. The Tucker reconstruction Eq.~\eqref{eq:simultaneous fusion reconstruction} makes this explicit: the physical wavefunction is determined by \emph{both} the occupied-orbital amplitudes $\bm A^{(\alpha)}$ and the Tucker core $\mathcal T_s$, whereas the parton Chern numbers are solely extracted from the $\bm A^{(\alpha)}$ matrices. A boundary twist that probes the physical topological responses, by contrast, probes both ingredients. If the twist is carried entirely by the channel amplitudes while the Tucker core is held periodic, the determinant expansion Eq.~\eqref{eq:HyperDet determinant expansion} fixes the physical twisted phase to be the sum of the species twisted phases $\varphi^{\text{phys}}_\mu=\sum_{\alpha}\varphi^{(\alpha)}_\mu\pmod{2\pi}$ for $\mu=1,2$. How the Tucker core transforms under a general twist, and whether it contributes to the Hall response, is an interesting open question that we leave for future work.

\subsection{Parameterization Gauge Redundancy and Fusion Tensor Stabilizer}
Next we turn to the intrinsic gauge structure that governs the symmetry-fractionalization content carried by the fusion tensor. The HyperDet wavefunction comes with an intrinsic gauge structure by its definition. As shown in Appendix~\cite{Appendix}, the physical wavefunction $\Psi^{\text{phys}}(\bm s)=\mathop{\mathsf{HyperDet}}[\mathcal F_{\bm s}]$ remains invariant up to an unimportant normalization factor
\begin{equation}\label{eq:HyperDet gauge transformation}
	\mathsf{HyperDet}[\mathcal F_{\bm s}']=\Big[\prod_{\alpha=1}^m\det\bm G^{(\alpha)}\Big] \mathsf{HyperDet}[\mathcal F_{\bm s}]
\end{equation}
under invertible linear transformations $\bm G^{(\alpha)}\in\mathrm{GL}(N;\mathbb C)$ and discrete parton-species permutation $\sigma\in S_m$ of the parton virtual legs of the fusion tensor:
\begin{equation}\label{eq:lifted gauge action}
	\mathcal F_{\bm s}\to \mathcal F_{\bm s}'=\sigma\Big[\bm G^{(1)}\times_1\cdots\bm G^{(m)}\times_m\Big]\mathcal F_{\bm s}.
\end{equation}
Here $\times_\alpha$ is the mode-$\alpha$ product, and $\sigma$ permutes $m$ virtual parton legs. These transformations constitute a gauge group of the form $\mathsf{GG}=\mathrm{GL}(N;\mathbb C)^m \rtimes S_m$. We emphasize that this gauge group is the \emph{minimal} gauge group present for every fusion tensor. Additional emergent gauge structures can arise for a specific optimized fusion tensor.

Among all gauge transformations, there exists a subset that leaves not only the physical wavefunction invariant but also the fusion tensor itself unchanged. These transformations form the stabilizer of the fusion tensor, denoted as $\mathsf{Stab}(\mathcal F)\subset\mathsf{GG}$, which is the HyperDet analogue of Wen's invariant gauge group in projective constructions of spin liquids~\cite{wen2002quantum}. An obvious part of the stabilizer consists of gauge transformations that multiply each parton species by an independent phase $e^{i\theta_\alpha}$ subject to $\prod_{\alpha=1}^m e^{i\theta_\alpha}=1$, which form $U(1)^{m-1}$. This stabilizer is built-in directly by the fusion map Eqs.~\eqref{eq:particle-slot-major parton mean-field state} and \eqref{eq:hyperdeterminant wavefunction amplitude} in HyperDet's construction, since inter-species parton hybridization is forbidden there.

\subsection{Symmetry Virtual Lifts and Stabilizer Ambiguities}
For a system with the symmetry group $\mathsf{SG}$, each $g\in\mathsf{SG}$ acts as $U_g\Psi^{\text{phys}}(\bm s)\simeq\Psi^{\text{phys}}(g\cdot\bm s)$ up to a phase. The HyperDet gauge structure Eqs.~\eqref{eq:HyperDet gauge transformation} and \eqref{eq:lifted gauge action} motivates a natural question: can this physical symmetry transformation be lifted to the virtual parton legs? Concretely, we seek a $U_g$-induced gauge transformation $\mathcal L_g=(\bm G_g^{(1)},\ldots,\bm G_g^{(m)}; \sigma_g)\in\mathsf{GG}$ that acts on the fusion tensor following the commutative diagram:
\begin{equation*}
	\xymatrix@R=2em@C=5em{\mathcal F_{\bm s} \ar@{:>}[r]^{\mathcal L_g} \ar[d]_{\mathsf{HyperDet}} & \mathcal F_{g\cdot\bm s} \ar[d]^{\mathsf{HyperDet}} \\
	\Psi^{\text{phys}}(\bm s) \ar[r]^{U_g} & \Psi^{\text{phys}}(g\cdot\bm s)}
\end{equation*}
In practice, we search for $\mathcal L_g\in\mathsf{GG}$ by minimizing the normalized residual with respect to the Frobenius norm $\|\cdot\|_F$:
\begin{equation*}
	\epsilon(\mathcal L_g)^2=\dfrac{\sum_{s_\ell}\|\mathcal F(g\cdot s_\ell)-\mathcal L_g\cdot\mathcal F(s_\ell)\|_F^2}{\sum_{s_\ell}\|\mathcal F(g\cdot s_\ell)\|_F^2}
\end{equation*}
As a standard least-squares problem, the minimization can be performed iteratively by optimizing matrices $\bm G_g^{(\alpha)}\in\mathrm{GL}(N;\mathbb C)$ and enumerating every parton-species permutation $\sigma_g\in S_m$ (see Appendix~\cite{Appendix} for details).

Even when a virtual lift exists with nearly vanishing residual $\epsilon(\mathcal L_g)\approx0$, the physical compatibility conditions $\omega(\{g\})=\bm e$, such as the translation commutation relation $\omega(\{T_1,T_2\})=T_1 T_2 T_1^{-1} T_2^{-1}=\bm e$, can still retain $\mathsf{Stab}(\mathcal F)$-valued ambiguities after virtual lifts, quantified by $\omega(\{\mathcal L_g\})\in\mathsf{Stab}(\mathcal F)$. Similar virtual-lift ambiguities are also present in other variational ansatzes, notably the matrix product states (MPS), where group cohomology obstructions $H^{d+1}(\mathsf{SG},U(1))$ are recognized to classify symmetry-protected topological phases~\cite{chen2011classification,chen2010local,chen2013symmetry,perez2008string,williamson2014matrix}.

A nontrivial ambiguity $\omega(\{\mathcal L_g\})$ signals a \emph{projective} representation of the physical symmetry $g\in\mathsf{SG}$ at the parton level --- a property known as \emph{parton symmetry fractionalization}~\cite{wen2002quantum,essin2013classifying,barkeshli2019symmetry}. Since we never impose fractionalization data a priori, the observation of a nontrivial $\omega(\{\mathcal L_g\})$ directly reveals the \emph{emergent} parton symmetry fractionalization. This information is extracted purely from fusion-tensor optimization without theoretical input.

\subsection{Translation Fractionalization}
For translation-symmetric systems, we first examine whether the symmetry can be realized at the parton level. By minimizing the residual $\epsilon(\mathcal L_{T_{1,2}})$ for the optimized fusion tensor obtained from the bosonic FCI phase on a $4\times4\times2$ geometry at $t''=-0.6$, we find that translations in both directions can be virtually lifted with small residuals $\epsilon(\mathcal L_{T_{1,2}})<0.022$, as long as the lifting is \emph{parton-identical} with the trivial permutation $\sigma=\mathrm{id}$. By contrast, large residuals $\epsilon(\mathcal L_{T_{1,2}})\approx 0.30$ are obtained for parton-exchange virtual lifts where $\sigma=(12)$.

The translation commutation relation enforces a compatible identity $\omega(\{T_1,T_2\})=T_1 T_2 T_1^{-1} T_2^{-1}=\bm e$. At the parton level, however, this relation closes only up to an $\mathsf{Stab}(\mathcal F)$-valued ambiguity for the lifted translation operators $\mathcal L_{T_{1,2}}=(\bm G^{(1)}_{T_{1,2}},\bm G^{(2)}_{T_{1,2}};\sigma=\mathrm{id})$, namely
\begin{equation}\label{eq:translation lift ambiguity}
	\omega(\{\mathcal L_{T_1},\mathcal L_{T_2}\})= \mathcal L_{T_1}\mathcal L_{T_2}\mathcal L_{T_1}^{-1}\mathcal L_{T_2}^{-1}\in\mathsf{Stab}(\mathcal F),
\end{equation}

For the simplest compact parton-identical stabilizer $\mathsf{Stab}(\mathcal F)=U(1)^{m-1}$, the lifted commutator Eq.~\eqref{eq:translation lift ambiguity} takes the diagonal form $e^{i\theta_\alpha}\bm I_N$ for each parton species under the constraint $\prod_{\alpha=1}^m e^{i\theta_\alpha}=1$. The periodic boundary conditions of the 2D bosonic model further restrict the fractionalization phases to the four solutions:
\begin{equation*}
	(e^{i\theta_1}, e^{i\theta_2}) = \Big\{(1,1),(-1,-1),(i,-i),(-i,i)\Big\}.
\end{equation*}
We find a robust parton translation fractionalization relation with $(e^{i\theta_1}, e^{i\theta_2})=(-1,-1)$ for both parton species. Specifically, the lifted matrices in Eq.~\eqref{eq:translation lift ambiguity} satisfy (here $\nu=1/2$)
\begin{equation}\label{eq:translation fractionalization}
	\bm G^{(\alpha)}_{T_1}\bm G^{(\alpha)}_{T_2}(\bm G^{(\alpha)}_{T_1})^{-1}(\bm G^{(\alpha)}_{T_2})^{-1}\approx e^{i\nu 2\pi}\bm I_N,
\end{equation}
for both $\alpha=1,2$ with a small centrality error $<1.2\times10^{-4}$.

This non-commutative relation of the lifted translation operators Eq.~\eqref{eq:translation fractionalization} is a direct signature of the \emph{emergent} $\pi$-flux phase felt by partons in the $\nu=1/2$ bosonic FCI. To accommodate an integer number of fluxes per unit cell, the parton unit cell must be doubled, and the parton Brillouin zone is reduced by half. The parton band structure is consequently folded into the reduced Brillouin zone to allow each species to satisfy the integer-filling condition $\nu^{(\alpha)}=1$ for $\alpha=1,2$. This is a necessary condition for each parton mean-field state to be a gapped insulator well-described by a single parton Slater determinant in Eq.~\eqref{eq:parton single Slater determinant}, consistent with the expected parton description of FQH physics~\cite{jain1989incompressible,jain1990theory}. In traditional parton constructions, this is a theoretical input~\cite{barkeshli2014continuous,barkeshli2015continuous}, but here it emerges as a feature of the optimized fusion tensor without any prior assumptions. This highlights the interpretability of the HyperDet wavefunction ansatz as an unbiased detector of parton symmetry fractionalization.

We also extract the translation virtual lifts for optimized fusion tensor obtained from the fermionic FCI model on a commensurate $3\times6\times2$ geometry at $t''=-0.12$. The fitting quality is, however, much worse than the bosonic case, with large fitting residuals $\epsilon(\mathcal L_{T_1})\approx0.76$ and $\epsilon(\mathcal L_{T_2})\approx0.81$ that cannot be improved by a different parton permutation (see Appendix~\cite{Appendix} for details). This failure suggests an unreliable readout of the diagonal translation fractionalization phases $e^{i\theta_\alpha}\approx e^{-i2\pi/3}$ for $\alpha=1,2,3$ in Eq.~\eqref{eq:translation lift ambiguity}. Identifying the origin of this failure requires further projective symmetry group (PSG) analysis and parton band-structure calculations~\cite{wen2002quantum,wang2006spin,wang2010schwinger,lu2012symmetry,essin2013classifying}, which we leave for future work~\cite{lin2026superfluid-FCI}.

The optimized fusion tensor thus robustly supplies reconstructed parton topological data and, when faithful virtual lifts exist, the emergent parton symmetry-fractionalization data. The exact reconstruction establishes the occupied-subspace interpretation for arbitrary fusion rank, while the parton Chern numbers hint on the underlying topological order. Taken together, they complete the microscopic data needed for the low-energy field theories of the underlying states.

\section{Discussion}\label{sec:discussion}
In this work, we generalize the conventional parton construction by promoting the virtual-to-physical fusion map to a fully variational degree of freedom, and demonstrate that the resulting hyperdeterminant (HyperDet) wavefunction $\langle\bm s|\Psi^{\text{phys}}\rangle=\mathop{\mathsf{HyperDet}}[\mathcal F_{\bm s}]$ can serve as a phase-agnostic ansatz for bosonic and fermionic strongly correlated systems.  For both lattice models studied in this work, the same HyperDet architecture achieves overlaps exceeding $99.9\%$ with ED ground states for nearly all sampled parameters across the FCI and its neighboring competing phases. These benchmarks show that the learned virtual-to-physical fusion map can adapt to states with different patterns of symmetry breaking, fractionalization, and topological order.

As the core variational object of the HyperDet wavefunction, the fusion tensor $\mathcal F$ is obtained by evaluating the virtual-to-physical fusion map on the physical many-body configuration and the occupied parton orbitals. As Eq.~\eqref{eq:fusion tensor connection relation} makes explicit, the fusion tensor draws on \emph{two} independent sources of variational freedom: the occupied parton orbitals $\varphi^{(\alpha)}(u_\alpha)$ expressed in a local parton basis, and the long-overlooked local fusion coefficients $\mathcal F^{\text{loc}}_{s_\ell;u_1,\ldots,u_m}$ that record the parton-basis to physical-basis fusion transformation. In conventional wavefunction ansatzes, such as Jastrow-Slater states~\cite{foulkes2001quantum} and Gutzwiller-projected parton states~\cite{gros1989physics,edegger2007gutzwiller}, the parton orbitals are the only variational degrees of freedom.  The parton-basis to physical-basis fusion transformation, by contrast, has not been recognized as an independent variational degree of freedom in these constructions: the corresponding fusion processes are fixed by prescription when the ansatz is written down. Such an implicit prescription restricts the variational space and thereby limits the wavefunction expressiveness.  This is demonstrated in the FCI--SF@$\Gamma$ transition the bosonic model: by treating the fusion map as a variational degree of freedom, our HyperDet wave function is able to accurately describe this transition.

In addition to its accuracy, the HyperDet wavefunction also gives direct access to the occupied parton subspaces encoded in the energy-optimized fusion tensor. We find that the gauge-fixed bipartite fusion spectra track structural changes across the phase boundaries, without having to evaluate phase-specific physical observables. The exact weighted Tucker reconstruction in Eq.~\eqref{eq:simultaneous fusion reconstruction} and determinant expansion in Eq.~\eqref{eq:HyperDet determinant expansion} identify a configuration-independent occupied-orbital matrix $\bm A^{(\alpha)}$ and a reconstructed Slater determinant for each parton species; the corresponding parton projectors $\bm P^{(\alpha)}$ yield Bott index tuples that reproduce the expected parton Chern numbers of the bosonic and fermionic FCI states, without prescribing a parton Hamiltonian. The transparent structure of the virtual-to-physical fusion map further allows us to ask whether the physical symmetry can be lifted to the parton level, from which we recover the translation-fractionalization pattern of the bosonic state without imposing a projective symmetry class assignment. The optimized fusion tensor therefore serves both as a variational representation of the HyperDet wavefunction and as an interpretable source of parton microscopics that connects the variational state to its field-theory description.

Both models studied in this work are described by lattice Hamiltonians, but the HyperDet formulation is not restricted to real-space physical orbitals: the physical basis can be either orthonormal or overcomplete (such as a coherent-state basis), and can live in real space or in momentum space. Its momentum-space formulation offers a natural route toward multiband calculations for 2D materials including twisted transition metal dichalcogenides~\cite{cai2023signatures,park2023observation,xu2023observation,zeng2023thermodynamic} and rhombohedral multilayer graphene~\cite{lu2024fractional,lu2025extended,han2025signatures,han2026evidence} for the following reasons.  First, the single-particle Hamiltonians of these moire systems are usually given by a continuum Hamiltonian and the physics is most conveniently described in the band basis.  Second, it has been shown that band mixing has a significant effect on FCI stability~\cite{abouelkomsan2024band,hou2025stabilizing,dong2024anomalous,yu2025moireIV,li2025multiband,xu2024maximally,he2025fractional,gonccalves2026spinless}.  However, many-band ED simulation rapidly become intractable as more bands are added.  In contrast, the HyperDet wavefunction is well suited for this purpose: multiband effects can be captured straightforwardly by incorporating all relevant Bloch orbitals into the momentum-space graph on which the full hyper-rectangular fusion tensor lives.  Because the configuration-restricted fusion tensor $\mathcal F_{\bm s}$ entering the hyperdeterminant definition in Eq.~\eqref{eq:hyperdeterminant wavefunction amplitude} retains its size $N^{m+1}$ at fixed particle number $N$, the evaluation of the HyperDet does \emph{not} scale with the single-particle Hilbert space dimension while the number of variational parameters grows linearly.  We leave this promising application of the HyperDet wavefunction for future study.

Another distinctive feature of the HyperDet wavefunction is that no microscopic parton basis is prescribed as input. The standard parton construction $c_{\bm r}=\prod_{\alpha}f^{(\alpha)}_{\bm r}$ aligns physical and parton operators in a common real-space basis. The HyperDet wavefunction, in contrast, absorbs species-dependent basis transformations into the fusion coefficients through Eq.~\eqref{eq:fusion tensor connection relation}, without requiring alignment with the physical basis or among species. Each species can therefore be represented in a suitable basis, opening a route toward efficient descriptions of exotic physical states with parton Fermi seas coexisting with gapped parton Chern insulators. An established example is the $\nu=1/2$ fermionic composite Fermi liquid (CFL)~\cite{rezayi1994fermi,halperin1993theory}, whose lattice trial wavefunction combines two real-space Chern-band parton Slater determinants with a momentum-space parton Fermi sea Slater determinant~\cite{mishmash2016entanglement} already in the factorizable limit as Eq.~\eqref{eq:HyperDet of factorizable fusion tensor}. The numerically observed entanglement enhencement $S_2\sim L\ln L$ for these trial states~\cite{shao2015entanglement,mishmash2016entanglement} inherited from the gapless parton Fermi surface~\cite{wolf2006violation,gioev2006entanglement,swingle2010entanglement}, thus provides a concrete \emph{lower bound} on the entanglement capacity of the HyperDet ansatz beyond a strict area law scaling~\cite{eisert2010colloquium,hastings2007area,hastings2006spectral}. Extending fusion-tensor optimization to CFL-like states and quantifying their fusion complexity are therefore natural directions for future study.

The unsuccessful translation-lift fits in the fermionic example raise a separate question about the relation between physical symmetry and its realization on the optimized fusion tensor. The large residuals prevent a reliable fractionalization readout within the tested gauge structure $\mathsf{GG}=\mathrm{GL}(N;\mathbb C)^m\rtimes S_m$, but do not establish an obstruction at the level of the physical state. There are several possible reasons. For example, the failure may reflect poor convergence toward PSG-compatible states, since our optimization is purely energetic. Another possibility is that the true energy ground state may not support the full lift of the physical crystalline symmetries. Distinguishing these possibilities requires an explicit construction of PSG-compatible parton orbitals and fusion tensors, which is addressed elsewhere~\cite{lin2026superfluid-FCI}.

Scalability is currently the main limitation when applying the \emph{exact} HyperDet wavefunction ansatz to larger system sizes due to the inherent combinatorial complexity of evaluating hyperdeterminants. The exact determinant expansion in Eq.~\eqref{eq:HyperDet determinant expansion} makes this limitation explicit through the generally combinatorial number of channel configurations. This obstacle is not unique to the HyperDet wavefunction. Indeed, exact contraction of generic two-dimensional projected entangled pair states (PEPS) is also \textsf{\#P-complete}~\cite{schuch2007computational}, while practical PEPS algorithms rely on systematically improvable approximate contractions with polynomial cost in system size. This parallel points to an important direction: developing tractable and controllable approximations to the HyperDet wavefunction. One approach is to introduce a \emph{locality structure} into the fusion tensor $\mathcal F^{\text{loc}}_{s_\ell;u_1,\ldots,u_m}$ in Eq.~\eqref{eq:fusion tensor coefficients in parton local basis} by restricting it to a bounded domain for each physical site. This reduces the hyperdeterminant evaluation over all $N$ occupied parton states to a small-cluster contraction while leaving the remainder in product-of-determinants form, conceptually similar to approximate tensor-network contraction and tensor-renormalization schemes~\cite{levin2007tensor,orus2014practical}. This approach, termed \emph{projective expansion}, has been developed in detail in our recent work~\cite{lin2026hyperdeterminant}. We leave its application for future work.

\vspace*{1em}

\section*{Acknowledgements}
We thank Andrei Bernevig, Zhen Bi, Daniel Parker, and Nicolas Regnault for helpful discussions. X. Hu thanks Jihang Zhu for suggesting SVD plots for bipartite fusion matrix. This work is supported by DOE Award No. DE-SC0012509. This research used resources of the National Energy Research Scientific Computing Center, a DOE Office of Science User Facility supported by the Office of Science of the U.S. Department of Energy under Contract No. DE-AC02-05CH11231 using NERSC awards BES-ERCAP0037104 and BES-ERCAP0037097. This work was facilitated through the use of advanced computational, storage, and networking infrastructure provided by the AI-core as well as the Hyak supercomputer system funded by the University of Washington Molecular Engineering Materials Center at the University of Washington (DMR-2308979).

\section*{Code Availability}
The dynamic-programming algorithm for evaluation of HyperDet wavefunction is wrapped as a Julia module \textsf{HyperDet\_DP} and is publicly available on GitHub \footnote{See \href{https://github.com/xiaodong-hu/HyperDet_DP}{\textsf{https://github.com/xiaodong-hu/HyperDet\_DP}}}.

\section*{Data Availability}
The large ED and VMC data are available from the authors upon reasonable request.

\bibliography{ref}

@article{tang2011high,
  author    = {Tang, Evelyn and Mei, Jia-Wei and Wen, Xiao-Gang},
  journal   = {Physical review letters},
  number    = {23},
  pages     = {236802},
  publisher = {APS},
  title     = {High-temperature fractional quantum {Hall} states},
  volume    = {106},
  year      = {2011}
}

@article{neupert2011fractional,
  author    = {Neupert, Titus and Santos, Luiz and Chamon, Claudio and Mudry, Christopher},
  journal   = {Physical review letters},
  number    = {23},
  pages     = {236804},
  publisher = {APS},
  title     = {Fractional quantum {Hall} states at zero magnetic field},
  volume    = {106},
  year      = {2011}
}

@article{sheng2011fractional,
  author    = {Sheng, DN and Gu, Zheng-Cheng and Sun, Kai and Sheng, L},
  journal   = {Nature communications},
  number    = {1},
  pages     = {389},
  publisher = {Nature Publishing Group UK London},
  title     = {Fractional quantum {Hall} effect in the absence of {Landau} levels},
  volume    = {2},
  year      = {2011}
}

@article{regnault2011fractional,
  author    = {Regnault, Nicolas and Bernevig, B Andrei},
  journal   = {Physical Review X},
  number    = {2},
  pages     = {021014},
  publisher = {APS},
  title     = {Fractional {Chern} insulator},
  volume    = {1},
  year      = {2011}
}

@article{cai2023signatures,
  author    = {Cai, Jiaqi and Anderson, Eric and Wang, Chong and Zhang, Xiaowei and Liu, Xiaoyu and Holtzmann, William and Zhang, Yinong and Fan, Fengren and Taniguchi, Takashi and Watanabe, Kenji and others},
  journal   = {Nature},
  number    = {7981},
  pages     = {63--68},
  publisher = {Nature Publishing Group UK London},
  title     = {Signatures of fractional quantum anomalous {Hall} states in twisted {MoTe$_2$}},
  volume    = {622},
  year      = {2023}
}

@article{park2023observation,
  author    = {Park, Heonjoon and Cai, Jiaqi and Anderson, Eric and Zhang, Yinong and Zhu, Jiayi and Liu, Xiaoyu and Wang, Chong and Holtzmann, William and Hu, Chaowei and Liu, Zhaoyu and others},
  journal   = {Nature},
  number    = {7981},
  pages     = {74--79},
  publisher = {Nature Publishing Group UK London},
  title     = {Observation of fractionally quantized anomalous {Hall} effect},
  volume    = {622},
  year      = {2023}
}

@article{xu2023observation,
  author    = {Xu, Fan and Sun, Zheng and Jia, Tongtong and Liu, Chang and Xu, Cheng and Li, Chushan and Gu, Yu and Watanabe, Kenji and Taniguchi, Takashi and Tong, Bingbing and others},
  journal   = {Physical Review X},
  number    = {3},
  pages     = {031037},
  publisher = {APS},
  title     = {Observation of integer and fractional quantum anomalous {Hall} effects in twisted bilayer {MoTe$_2$}},
  volume    = {13},
  year      = {2023}
}

@article{zeng2023thermodynamic,
  author    = {Zeng, Yihang and Xia, Zhengchao and Kang, Kaifei and Zhu, Jiacheng and Kn{\"u}ppel, Patrick and Vaswani, Chirag and Watanabe, Kenji and Taniguchi, Takashi and Mak, Kin Fai and Shan, Jie},
  journal   = {Nature},
  number    = {7981},
  pages     = {69--73},
  publisher = {Nature Publishing Group UK London},
  title     = {Thermodynamic evidence of fractional {Chern} insulator in moir{\'e} {MoTe$_2$}},
  volume    = {622},
  year      = {2023}
}

@article{lu2024fractional,
  author    = {Lu, Zhengguang and Han, Tonghang and Yao, Yuxuan and Reddy, Aidan P and Yang, Jixiang and Seo, Junseok and Watanabe, Kenji and Taniguchi, Takashi and Fu, Liang and Ju, Long},
  journal   = {Nature},
  number    = {8000},
  pages     = {759--764},
  publisher = {Nature Publishing Group UK London},
  title     = {Fractional quantum anomalous {Hall} effect in multilayer graphene},
  volume    = {626},
  year      = {2024}
}

@article{lu2025extended,
  author    = {Lu, Zhengguang and Han, Tonghang and Yao, Yuxuan and Hadjri, Zach and Yang, Jixiang and Seo, Junseok and Shi, Lihan and Ye, Shenyong and Watanabe, Kenji and Taniguchi, Takashi and others},
  journal   = {Nature},
  number    = {8048},
  pages     = {1090--1095},
  publisher = {Nature Publishing Group UK London},
  title     = {Extended quantum anomalous {Hall} states in graphene/hBN moir{\'e} superlattices},
  volume    = {637},
  year      = {2025}
}

@article{goldshlager2024kaczmarz,
  author    = {Goldshlager, Gil and Abrahamsen, Nilin and Lin, Lin},
  journal   = {Journal of Computational Physics},
  pages     = {113351},
  publisher = {Elsevier},
  title     = {A Kaczmarz-inspired approach to accelerate the optimization of neural network wavefunctions},
  volume    = {516},
  year      = {2024}
}

@article{gu2010grassmann,
  author  = {Gu, Zheng-Cheng and Verstraete, Frank and Wen, Xiao-Gang},
  journal = {arXiv preprint arXiv:1004.2563},
  title   = {Grassmann tensor network states and its renormalization for strongly correlated fermionic and bosonic states},
  year    = {2010}
}

@article{hu2024hyperdeterminants,
  author    = {Hu, Xiaodong and Xiao, Di and Ran, Ying},
  journal   = {Physical Review B},
  number    = {24},
  pages     = {245125},
  publisher = {APS},
  title     = {Hyperdeterminants and composite fermion states in fractional {Chern} insulators},
  volume    = {109},
  year      = {2024}
}

@article{hu2026composite,
  author    = {Hu, Xiaodong and Ran, Ying and Xiao, Di},
  journal   = {Physical Review Letters},
  number    = {6},
  pages     = {066504},
  publisher = {APS},
  title     = {Composite Fermion Theory of Fractional {Chern} Insulator Stability},
  volume    = {136},
  year      = {2026}
}

@article{jain1989incompressible,
  author    = {Jain, Jainendra K},
  journal   = {Physical Review B},
  number    = {11},
  pages     = {8079},
  publisher = {APS},
  title     = {Incompressible quantum {Hall} states},
  volume    = {40},
  year      = {1989}
}

@article{jain1990theory,
  author    = {Jain, Jainendra K},
  journal   = {Physical Review B},
  number    = {11},
  pages     = {7653},
  publisher = {APS},
  title     = {Theory of the fractional quantum {Hall} effect},
  volume    = {41},
  year      = {1990}
}

@article{lu2012symmetry,
  author    = {Lu, Yuan-Ming and Ran, Ying},
  journal   = {Physical Review B—Condensed Matter and Materials Physics},
  number    = {16},
  pages     = {165134},
  publisher = {APS},
  title     = {Symmetry-protected fractional {Chern} insulators and fractional topological insulators},
  volume    = {85},
  year      = {2012}
}

@article{lu2025continuous,
  author    = {Lu, Hongyu and Wu, Han-Qing and Chen, Bin-Bin and Meng, Zi Yang},
  journal   = {Physical Review Letters},
  number    = {7},
  pages     = {076601},
  publisher = {APS},
  title     = {Continuous Transition between Bosonic Fractional {Chern} Insulator and Superfluid},
  volume    = {134},
  year      = {2025}
}

@article{barkeshli2014continuous,
  author    = {Barkeshli, Maissam and McGreevy, John},
  journal   = {Physical Review B},
  number    = {23},
  pages     = {235116},
  publisher = {APS},
  title     = {Continuous transition between fractional quantum {Hall} and superfluid states},
  volume    = {89},
  year      = {2014}
}

@article{barkeshli2015continuous,
  author    = {Barkeshli, M and Yao, NY and Laumann, CR},
  journal   = {Physical review letters},
  number    = {2},
  pages     = {026802},
  publisher = {APS},
  title     = {Continuous preparation of a fractional {Chern} insulator},
  volume    = {115},
  year      = {2015}
}

@article{song2024phase,
  author    = {Song, Xue-Yang and Zhang, Ya-Hui and Senthil, T},
  journal   = {Physical Review B},
  number    = {8},
  pages     = {085143},
  publisher = {APS},
  title     = {Phase transitions out of quantum {Hall} states in moir{\'e} materials},
  volume    = {109},
  year      = {2024}
}

@article{senthil2004deconfined,
  author    = {Senthil, Todadri and Vishwanath, Ashvin and Balents, Leon and Sachdev, Subir and Fisher, Matthew PA},
  journal   = {Science},
  number    = {5663},
  pages     = {1490--1494},
  publisher = {American Association for the Advancement of Science},
  title     = {Deconfined quantum critical points},
  volume    = {303},
  year      = {2004}
}

@article{senthil2004quantum,
  author    = {Senthil, T and Balents, Leon and Sachdev, Subir and Vishwanath, Ashvin and Fisher, Matthew PA},
  journal   = {Physical Review B—Condensed Matter and Materials Physics},
  number    = {14},
  pages     = {144407},
  publisher = {APS},
  title     = {Quantum criticality beyond the Landau-Ginzburg-Wilson paradigm},
  volume    = {70},
  year      = {2004}
}

@article{wang2011fractional,
  author    = {Wang, Yi-Fei and Gu, Zheng-Cheng and Gong, Chang-De and Sheng, DN},
  journal   = {Physical review letters},
  number    = {14},
  pages     = {146803},
  publisher = {APS},
  title     = {Fractional quantum {Hall} effect of hard-core bosons in topological flat bands},
  volume    = {107},
  year      = {2011}
}

@article{wang2017deconfined,
  author    = {Wang, Chong and Nahum, Adam and Metlitski, Max A and Xu, Cenke and Senthil, T},
  journal   = {Physical Review X},
  number    = {3},
  pages     = {031051},
  publisher = {APS},
  title     = {Deconfined quantum critical points: symmetries and dualities},
  volume    = {7},
  year      = {2017}
}

@article{wen1999projective,
  author    = {Wen, Xiao-Gang},
  journal   = {Physical Review B},
  number    = {12},
  pages     = {8827},
  publisher = {APS},
  title     = {Projective construction of non-Abelian quantum {Hall} liquids},
  volume    = {60},
  year      = {1999}
}

@article{zhang2020pseudogap,
  author    = {Zhang, Ya-Hui and Sachdev, Subir},
  journal   = {Physical Review Research},
  number    = {2},
  pages     = {023172},
  publisher = {APS},
  title     = {From the pseudogap metal to the Fermi liquid using ancilla qubits},
  volume    = {2},
  year      = {2020}
}

@article{zhou2025variational,
  author    = {Zhou, Boran and Jin, Hui-Ke and Zhang, Ya-Hui},
  journal   = {Physical Review B},
  number    = {11},
  pages     = {115159},
  publisher = {APS},
  title     = {Variational wavefunction for a {Mott} insulator at finite {$U$} using ancilla qubits},
  volume    = {112},
  year      = {2025}
}

@article{laughlin1983anomalous,
  author    = {Laughlin, Robert B},
  journal   = {Physical Review Letters},
  number    = {18},
  pages     = {1395},
  publisher = {APS},
  title     = {Anomalous quantum {Hall} effect: An incompressible quantum fluid with fractionally charged excitations},
  volume    = {50},
  year      = {1983}
}

@article{lotrivc2026paired,
  author    = {Lotri{\v{c}}, Tev{\v{z}} and Simon, Steven H},
  journal   = {Physical Review Letters},
  number    = {9},
  pages     = {096601},
  publisher = {APS},
  title     = {Paired parton trial states for the superfluid-fractional chern insulator transition},
  volume    = {136},
  year      = {2026}
}

@article{sun2011nearly,
  author    = {Sun, Kai and Gu, Zhengcheng and Katsura, Hosho and Das Sarma, S},
  journal   = {Physical review letters},
  number    = {23},
  pages     = {236803},
  publisher = {APS},
  title     = {Nearly flatbands with nontrivial topology},
  volume    = {106},
  year      = {2011}
}

@article{haldane1985finite,
  author    = {Haldane, F Duncan M and Rezayi, Edward H},
  journal   = {Physical review letters},
  number    = {3},
  pages     = {237},
  publisher = {APS},
  title     = {Finite-size studies of the incompressible state of the fractionally quantized {Hall} effect and its excitations},
  volume    = {54},
  year      = {1985}
}

@article{yoshioka1983ground,
  author    = {Yoshioka, D and Halperin, Bertrand I and Lee, PA},
  journal   = {Physical review letters},
  number    = {16},
  pages     = {1219},
  publisher = {APS},
  title     = {Ground state of two-dimensional electrons in strong magnetic fields and 1 3 quantized {Hall} effect},
  volume    = {50},
  year      = {1983}
}

@article{jain1989composite,
  author    = {Jain, Jainendra K},
  journal   = {Physical review letters},
  number    = {2},
  pages     = {199},
  publisher = {APS},
  title     = {Composite-fermion approach for the fractional quantum {Hall} effect},
  volume    = {63},
  year      = {1989}
}

@article{kamilla1996excitons,
  author    = {Kamilla, RK and Wu, XG and Jain, JK},
  journal   = {Physical Review B},
  number    = {7},
  pages     = {4873},
  publisher = {APS},
  title     = {Excitons of composite fermions},
  volume    = {54},
  year      = {1996}
}

@article{jain1997composite,
  author    = {Jain, JK and Kamilla, RK},
  journal   = {International Journal of Modern Physics B},
  number    = {22},
  pages     = {2621--2660},
  publisher = {World Scientific},
  title     = {Composite fermions in the Hilbert space of the lowest electronic Landau level},
  volume    = {11},
  year      = {1997}
}

@article{ledwith2023vortexability,
  author    = {Ledwith, Patrick J and Vishwanath, Ashvin and Parker, Daniel E},
  journal   = {Physical Review B},
  number    = {20},
  pages     = {205144},
  publisher = {APS},
  title     = {Vortexability: A unifying criterion for ideal fractional {Chern} insulators},
  volume    = {108},
  year      = {2023}
}

@article{roy2014band,
  author    = {Roy, Rahul},
  journal   = {Physical Review B},
  number    = {16},
  pages     = {165139},
  publisher = {APS},
  title     = {Band geometry of fractional topological insulators},
  volume    = {90},
  year      = {2014}
}

@article{wang2021exact,
  author    = {Wang, Jie and Cano, Jennifer and Millis, Andrew J and Liu, Zhao and Yang, Bo},
  journal   = {Physical review letters},
  number    = {24},
  pages     = {246403},
  publisher = {APS},
  title     = {Exact Landau level description of geometry and interaction in a flatband},
  volume    = {127},
  year      = {2021}
}

@article{ledwith2020fractional,
  author    = {Ledwith, Patrick J and Tarnopolsky, Grigory and Khalaf, Eslam and Vishwanath, Ashvin},
  journal   = {Physical Review Research},
  number    = {2},
  pages     = {023237},
  publisher = {APS},
  title     = {Fractional {Chern} insulator states in twisted bilayer graphene: An analytical approach},
  volume    = {2},
  year      = {2020}
}

@article{parameswaran2012fractional,
  author    = {Parameswaran, SA and Roy, R and Sondhi, Shivaji L},
  journal   = {Physical Review B—Condensed Matter and Materials Physics},
  number    = {24},
  pages     = {241308},
  publisher = {APS},
  title     = {Fractional {Chern} insulators and the {$W_\infty$} algebra},
  volume    = {85},
  year      = {2012}
}

@article{qi2011generic,
  author    = {Qi, Xiao-Liang},
  journal   = {Physical review letters},
  number    = {12},
  pages     = {126803},
  publisher = {APS},
  title     = {Generic Wave-Function Description of Fractional Quantum Anomalous {Hall} States and Fractional Topological Insulators},
  volume    = {107},
  year      = {2011}
}

@article{murthy2003hamiltonian,
  author    = {Murthy, Ganpathy and Shankar, R},
  journal   = {Reviews of Modern Physics},
  number    = {4},
  pages     = {1101},
  publisher = {APS},
  title     = {Hamiltonian theories of the fractional quantum {Hall} effect},
  volume    = {75},
  year      = {2003}
}

@article{shankar1999hamiltonian,
  author    = {Shankar, R},
  journal   = {Physical review letters},
  number    = {12},
  pages     = {2382},
  publisher = {APS},
  title     = {Hamiltonian description of composite fermions: aftermath},
  volume    = {83},
  year      = {1999}
}

@article{murthy2012hamiltonian,
  author    = {Murthy, Ganpathy and Shankar, R},
  journal   = {Physical Review B—Condensed Matter and Materials Physics},
  number    = {19},
  pages     = {195146},
  publisher = {APS},
  title     = {Hamiltonian theory of fractionally filled {Chern} bands},
  volume    = {86},
  year      = {2012}
}

@article{chen2010local,
  author    = {Chen, Xie and Gu, Zheng-Cheng and Wen, Xiao-Gang},
  journal   = {Physical Review B—Condensed Matter and Materials Physics},
  number    = {15},
  pages     = {155138},
  publisher = {APS},
  title     = {Local unitary transformation, long-range quantum entanglement, wave function renormalization, and topological order},
  volume    = {82},
  year      = {2010}
}

@article{arovas1984fractional,
  author    = {Arovas, Daniel and Schrieffer, John R and Wilczek, Frank},
  journal   = {Physical review letters},
  number    = {7},
  pages     = {722},
  publisher = {APS},
  title     = {Fractional statistics and the quantum {Hall} effect},
  volume    = {53},
  year      = {1984}
}

@article{wen2017colloquium,
  author    = {Wen, Xiao-Gang},
  journal   = {Reviews of Modern Physics},
  number    = {4},
  pages     = {041004},
  publisher = {APS},
  title     = {Colloquium: Zoo of quantum-topological phases of matter},
  volume    = {89},
  year      = {2017}
}

@article{kitaev2006topological,
  author    = {Kitaev, Alexei and Preskill, John},
  journal   = {Physical review letters},
  number    = {11},
  pages     = {110404},
  publisher = {APS},
  title     = {Topological entanglement entropy},
  volume    = {96},
  year      = {2006}
}

@article{levin2005string,
  author    = {Levin, Michael A and Wen, Xiao-Gang},
  journal   = {Physical Review B—Condensed Matter and Materials Physics},
  number    = {4},
  pages     = {045110},
  publisher = {APS},
  title     = {String-net condensation: A physical mechanism for topological phases},
  volume    = {71},
  year      = {2005}
}

@article{liu2025fractional,
  author  = {Liu, Feng and Xu, Fan and Xu, Cheng and Li, Jiayi and Sun, Zheng and Xiao, Jiayong and Mao, Ning and Chang, Xumin and Tao, Xinglin and Watanabe, Kenji and others},
  journal = {arXiv preprint arXiv:2512.03622},
  title   = {From fractional {Chern} insulators to topological electronic crystals in moir\'e {MoTe$_2$}: quantum geometry tuning via remote layer},
  year    = {2025}
}

@article{chen2026fractional,
  author    = {Chen, Jialin and Li, Qiaoyi and Wang, Xiaoyu and Li, Wei},
  journal   = {Science Bulletin},
  publisher = {Elsevier},
  title     = {Fractional {Chern} insulator and quantum anomalous {Hall} crystal in twisted {MoTe$_2$}},
  year      = {2026}
}

@article{xu2025signatures,
  author  = {Xu, Fan and Sun, Zheng and Li, Jiayi and Zheng, Ce and Xu, Cheng and Gao, Jingjing and Jia, Tongtong and Su, Yanfei and Watanabe, Kenji and Taniguchi, Takashi and others},
  journal = {arXiv preprint arXiv:2504.06972},
  title   = {Signatures of unconventional superconductivity near reentrant and fractional quantum anomalous {Hall} insulators},
  year    = {2025}
}

@article{xie2021fractional,
  author    = {Xie, Yonglong and Pierce, Andrew T and Park, Jeong Min and Parker, Daniel E and Khalaf, Eslam and Ledwith, Patrick and Cao, Yuan and Lee, Seung Hwan and Chen, Shaowen and Forrester, Patrick R and others},
  journal   = {Nature},
  number    = {7889},
  pages     = {439--443},
  publisher = {Nature Publishing Group UK London},
  title     = {Fractional {Chern} insulators in magic-angle twisted bilayer graphene},
  volume    = {600},
  year      = {2021}
}

@article{aronson2025displacement,
  author    = {Aronson, Samuel H and Han, Tonghang and Lu, Zhengguang and Yao, Yuxuan and Butler, Jackson P and Watanabe, Kenji and Taniguchi, Takashi and Ju, Long and Ashoori, Raymond C},
  journal   = {Physical Review X},
  number    = {3},
  pages     = {031026},
  publisher = {APS},
  title     = {Displacement field-controlled fractional {Chern} insulators and charge density waves in a graphene/hBN moir{\'e} superlattice},
  volume    = {15},
  year      = {2025}
}

@article{sun2026twist,
  author  = {Sun, Zheng and Xu, Fan and Li, Jiayi and Jiang, Yifan and Gao, Jingjing and Xu, Cheng and Jia, Tongtong and Cheng, Kehao and Zhang, Jinyang and Tian, Wanghao and others},
  journal = {arXiv preprint arXiv:2603.16412},
  title   = {Twist-angle evolution from valley-polarized fractional topological phases to valley-degenerate superconductivity in twisted bilayer {MoTe$_2$}},
  year    = {2026}
}

@article{han2026evidence,
  author  = {Han, Tonghang and Butler, Jackson P and Ye, Shenyong and Hua, Zhenqi and Dutta, Surajit and Hadjri, Zach and Wu, Zhenghan and Yang, Jixiang and Seo, Junseok and Pattanakanvijit, Phatthanon and others},
  journal = {arXiv preprint arXiv:2604.00113},
  title   = {Evidence of metallic {Wigner} crystal in rhombohedral graphene},
  year    = {2026}
}

@article{anderson2024trion,
  author    = {Anderson, Eric and Cai, Jiaqi and Reddy, Aidan P and Park, Heonjoon and Holtzmann, William and Davis, Kai and Taniguchi, Takashi and Watanabe, Kenji and Smolenski, Tomasz and Imamo{\u{g}}lu, Ata{\c{c}} and others},
  journal   = {Nature},
  number    = {8039},
  pages     = {590--595},
  publisher = {Nature Publishing Group UK London},
  title     = {Trion sensing of a zero-field composite {Fermi} liquid},
  volume    = {635},
  year      = {2024}
}

@article{wang2024fractional,
  author    = {Wang, Chong and Zhang, Xiao-Wei and Liu, Xiaoyu and He, Yuchi and Xu, Xiaodong and Ran, Ying and Cao, Ting and Xiao, Di},
  journal   = {Physical Review Letters},
  number    = {3},
  pages     = {036501},
  publisher = {APS},
  title     = {Fractional {Chern} insulator in twisted bilayer {MoTe$_2$}},
  volume    = {132},
  year      = {2024}
}

@article{reddy2023fractional,
  author    = {Reddy, Aidan P and Alsallom, Faisal and Zhang, Yang and Devakul, Trithep and Fu, Liang},
  journal   = {Physical Review B},
  number    = {8},
  pages     = {085117},
  publisher = {APS},
  title     = {Fractional quantum anomalous {Hall} states in twisted bilayer {MoTe$_2$} and {WSe$_2$}},
  volume    = {108},
  year      = {2023}
}

@article{wilhelm2021interplay,
  author    = {Wilhelm, Patrick and Lang, Thomas C and L{\"a}uchli, Andreas M},
  journal   = {Physical Review B},
  number    = {12},
  pages     = {125406},
  publisher = {APS},
  title     = {Interplay of fractional {Chern} insulator and charge density wave phases in twisted bilayer graphene},
  volume    = {103},
  year      = {2021}
}

@article{dong2023composite,
  author    = {Dong, Junkai and Wang, Jie and Ledwith, Patrick J and Vishwanath, Ashvin and Parker, Daniel E},
  journal   = {Physical Review Letters},
  number    = {13},
  pages     = {136502},
  publisher = {APS},
  title     = {Composite {Fermi} liquid at zero magnetic field in twisted {MoTe$_2$}},
  volume    = {131},
  year      = {2023}
}

@article{pasquier1998dipole,
  author    = {Pasquier, V and Haldane, FDM},
  journal   = {Nuclear Physics B},
  number    = {3},
  pages     = {719--726},
  publisher = {Elsevier},
  title     = {A dipole interpretation of the $\nu=1/2$ state},
  volume    = {516},
  year      = {1998}
}

@article{read1994theory,
  author  = {Read, Nicholas},
  journal = {Semiconductor Science and Technology},
  number  = {11S},
  pages   = {1859--1864},
  title   = {Theory of the half-filled {Landau} level},
  volume  = {9},
  year    = {1994}
}

@article{wen2002quantum,
  author    = {Wen, Xiao-Gang},
  journal   = {Physical Review B},
  number    = {16},
  pages     = {165113},
  publisher = {APS},
  title     = {Quantum orders and symmetric spin liquids},
  volume    = {65},
  year      = {2002}
}

@article{ran2007projected,
  author    = {Ran, Ying and Hermele, Michael and Lee, Patrick A and Wen, Xiao-Gang},
  journal   = {Physical review letters},
  number    = {11},
  pages     = {117205},
  publisher = {APS},
  title     = {Projected-wave-function study of the spin-1/2 {Heisenberg} model on the kagom{\'e} lattice},
  volume    = {98},
  year      = {2007}
}

@article{anderson1987resonating,
  author    = {Anderson, Philip W},
  journal   = {science},
  number    = {4793},
  pages     = {1196--1198},
  publisher = {American Association for the Advancement of Science},
  title     = {The resonating valence bond state in {La$_2$CuO$_4$} and superconductivity},
  volume    = {235},
  year      = {1987}
}

@article{zhang1988renormalised,
  author  = {Zhang, Fu-Chen and Gros, Claudius and Rice, T Maurice and Shiba, Hiroyuki},
  journal = {Superconductor Science and Technology},
  number  = {1},
  pages   = {36--46},
  title   = {A renormalised {Hamiltonian} approach to a resonant valence bond wavefunction},
  volume  = {1},
  year    = {1988}
}

@article{edegger2007gutzwiller,
  author    = {Edegger, Bernhard and Muthukumar, Vangal N and Gros, Claudius},
  journal   = {Advances in Physics},
  number    = {6},
  pages     = {927--1033},
  publisher = {Taylor \& Francis},
  title     = {Gutzwiller-RVB theory of high-temperature superconductivity: Results from renormalized mean-field theory and variational {Monte Carlo} calculations},
  volume    = {56},
  year      = {2007}
}

@article{gutzwiller1963effect,
  author    = {Gutzwiller, Martin C},
  journal   = {Physical Review Letters},
  number    = {5},
  pages     = {159},
  publisher = {APS},
  title     = {Effect of correlation on the ferromagnetism of transition metals},
  volume    = {10},
  year      = {1963}
}

@article{gutzwiller1965correlation,
  author    = {Gutzwiller, Martin C},
  journal   = {Physical Review},
  number    = {6A},
  pages     = {A1726},
  publisher = {APS},
  title     = {Correlation of electrons in a narrow $s$ band},
  volume    = {137},
  year      = {1965}
}

@article{barvinok1995new,
  author    = {Barvinok, Alexander I},
  journal   = {Mathematical Programming},
  number    = {1},
  pages     = {449--470},
  publisher = {Springer},
  title     = {New algorithms for linear k-matroid intersection and matroid k-parity problems},
  volume    = {69},
  year      = {1995}
}

@incollection{gelfand1994hyperdeterminants,
  author    = {Gelfand, Israel M and Kapranov, Mikhail M and Zelevinsky, Andrei V},
  booktitle = {Discriminants, Resultants, and Multidimensional Determinants},
  pages     = {444--479},
  publisher = {Springer},
  title     = {Hyperdeterminants},
  year      = {1994}
}

@inproceedings{gurvits2005complexity,
  author       = {Gurvits, Leonid},
  booktitle    = {International Symposium on Mathematical Foundations of Computer Science},
  organization = {Springer},
  pages        = {447--458},
  title        = {On the complexity of mixed discriminants and related problems},
  year         = {2005}
}

@article{hillar2013most,
  author    = {Hillar, Christopher J and Lim, Lek-Heng},
  journal   = {Journal of the ACM (JACM)},
  number    = {6},
  pages     = {1--39},
  publisher = {ACM New York, NY, USA},
  title     = {Most tensor problems are {NP}-hard},
  volume    = {60},
  year      = {2013}
}

@article{wang2010schwinger,
  author    = {Wang, Fa},
  journal   = {Physical Review B—Condensed Matter and Materials Physics},
  number    = {2},
  pages     = {024419},
  publisher = {APS},
  title     = {Schwinger boson mean field theories of spin liquid states on a honeycomb lattice: Projective symmetry group analysis and critical field theory},
  volume    = {82},
  year      = {2010}
}

@article{wang2006spin,
  author    = {Wang, Fa and Vishwanath, Ashvin},
  journal   = {Physical Review B—Condensed Matter and Materials Physics},
  number    = {17},
  pages     = {174423},
  publisher = {APS},
  title     = {Spin-liquid states on the triangular and Kagom{\'e} lattices: A projective-symmetry-group analysis of {Schwinger} boson states},
  volume    = {74},
  year      = {2006}
}

@article{balram2019parton,
  author    = {Balram, Ajit C and Barkeshli, Maissam and Rudner, Mark S},
  journal   = {Physical Review B},
  number    = {24},
  pages     = {241108},
  publisher = {APS},
  title     = {Parton construction of particle-hole-conjugate Read-{Rezayi} parafermion fractional quantum {Hall} states and beyond},
  volume    = {99},
  year      = {2019}
}

@article{balram2018parton,
  author    = {Balram, Ajit C and Barkeshli, Maissam and Rudner, Mark S},
  journal   = {Physical Review B},
  number    = {3},
  pages     = {035127},
  publisher = {APS},
  title     = {Parton construction of a wave function in the anti-{Pfaffian} phase},
  volume    = {98},
  year      = {2018}
}

@article{kolda2009tensor,
  author    = {Kolda, Tamara G and Bader, Brett W},
  journal   = {SIAM review},
  number    = {3},
  pages     = {455--500},
  publisher = {SIAM},
  title     = {Tensor decompositions and applications},
  volume    = {51},
  year      = {2009}
}

@article{sorella1998green,
  author    = {Sorella, Sandro},
  journal   = {Physical review letters},
  number    = {20},
  pages     = {4558},
  publisher = {APS},
  title     = {Green function {Monte Carlo} with stochastic reconfiguration},
  volume    = {80},
  year      = {1998}
}

@article{sorella2001generalized,
  author    = {Sorella, Sandro},
  journal   = {Physical Review B},
  number    = {2},
  pages     = {024512},
  publisher = {APS},
  title     = {Generalized {Lanczos} algorithm for variational quantum {Monte Carlo}},
  volume    = {64},
  year      = {2001}
}

@article{chen2024empowering,
  author    = {Chen, Ao and Heyl, Markus},
  journal   = {Nature Physics},
  number    = {9},
  pages     = {1476--1481},
  publisher = {Nature Publishing Group UK London},
  title     = {Empowering deep neural quantum states through efficient optimization},
  volume    = {20},
  year      = {2024}
}

@article{park2020geometry,
  author    = {Park, Chae-Yeun and Kastoryano, Michael J},
  journal   = {Physical Review Research},
  number    = {2},
  pages     = {023232},
  publisher = {APS},
  title     = {Geometry of learning neural quantum states},
  volume    = {2},
  year      = {2020}
}

@article{wen1995topological,
  author    = {Wen, Xiao-Gang},
  journal   = {Advances in Physics},
  number    = {5},
  pages     = {405--473},
  publisher = {Taylor \& Francis},
  title     = {Topological orders and edge excitations in fractional quantum {Hall} states},
  volume    = {44},
  year      = {1995}
}

@article{foulkes2001quantum,
  author    = {Foulkes, William MC and Mitas, Lubos and Needs, RJ and Rajagopal, Guna},
  journal   = {Reviews of Modern Physics},
  number    = {1},
  pages     = {33},
  publisher = {APS},
  title     = {Quantum {Monte Carlo} simulations of solids},
  volume    = {73},
  year      = {2001}
}

@article{gros1989physics,
  author    = {Gros, Claudius},
  journal   = {Annals of Physics},
  number    = {1},
  pages     = {53--88},
  publisher = {Elsevier},
  title     = {Physics of projected wavefunctions},
  volume    = {189},
  year      = {1989}
}

@article{barkeshli2019symmetry,
  author    = {Barkeshli, Maissam and Bonderson, Parsa and Cheng, Meng and Wang, Zhenghan},
  journal   = {Physical Review B},
  number    = {11},
  pages     = {115147},
  publisher = {APS},
  title     = {Symmetry fractionalization, defects, and gauging of topological phases},
  volume    = {100},
  year      = {2019}
}

@article{essin2013classifying,
  author    = {Essin, Andrew M and Hermele, Michael},
  journal   = {Physical Review B—Condensed Matter and Materials Physics},
  number    = {10},
  pages     = {104406},
  publisher = {APS},
  title     = {Classifying fractionalization: Symmetry classification of gapped $\mathbb Z_2$ spin liquids in two dimensions},
  volume    = {87},
  year      = {2013}
}

@article{orus2014practical,
  author    = {Or{\'u}s, Rom{\'a}n},
  journal   = {Annals of physics},
  pages     = {117--158},
  publisher = {Elsevier},
  title     = {A practical introduction to tensor networks: Matrix product states and projected entangled pair states},
  volume    = {349},
  year      = {2014}
}

@article{eisert2010colloquium,
  author    = {Eisert, Jens and Cramer, Marcus and Plenio, Martin B},
  journal   = {Reviews of modern physics},
  number    = {1},
  pages     = {277--306},
  publisher = {APS},
  title     = {Colloquium: Area laws for the entanglement entropy},
  volume    = {82},
  year      = {2010}
}

@article{hastings2006spectral,
  author    = {Hastings, Matthew B and Koma, Tohru},
  journal   = {Communications in mathematical physics},
  number    = {3},
  pages     = {781--804},
  publisher = {Springer},
  title     = {Spectral gap and exponential decay of correlations},
  volume    = {265},
  year      = {2006}
}

@article{wolf2006violation,
  author    = {Wolf, Michael M},
  journal   = {Physical review letters},
  number    = {1},
  pages     = {010404},
  publisher = {APS},
  title     = {Violation of the entropic area law for fermions},
  volume    = {96},
  year      = {2006}
}

@article{gioev2006entanglement,
  author    = {Gioev, Dimitri and Klich, Israel},
  journal   = {Physical review letters},
  number    = {10},
  pages     = {100503},
  publisher = {APS},
  title     = {Entanglement entropy of fermions in any dimension and the {Widom} conjecture},
  volume    = {96},
  year      = {2006}
}

@article{swingle2010entanglement,
  author    = {Swingle, Brian},
  journal   = {Physical review letters},
  number    = {5},
  pages     = {050502},
  publisher = {APS},
  title     = {Entanglement entropy and the {Fermi} surface},
  volume    = {105},
  year      = {2010}
}

@article{hastings2007area,
  author  = {Hastings, Matthew B},
  journal = {Journal of statistical mechanics: theory and experiment},
  number  = {08},
  pages   = {P08024--P08024},
  title   = {An area law for one-dimensional quantum systems},
  volume  = {2007},
  year    = {2007}
}

@article{shao2015entanglement,
  author    = {Shao, Junping and Kim, Eun-Ah and Haldane, FDM and Rezayi, Edward H},
  journal   = {Physical review letters},
  number    = {20},
  pages     = {206402},
  publisher = {APS},
  title     = {Entanglement entropy of the $\nu=1/2$ composite fermion non-{Fermi} liquid state},
  volume    = {114},
  year      = {2015}
}

@article{mishmash2016entanglement,
  author    = {Mishmash, Ryan V and Motrunich, Olexei I},
  journal   = {Physical Review B},
  number    = {8},
  pages     = {081110},
  publisher = {APS},
  title     = {Entanglement entropy of composite {Fermi} liquid states on the lattice: In support of the {Widom} formula},
  volume    = {94},
  year      = {2016}
}

@article{halperin1993theory,
  author    = {Halperin, Bertrand I and Lee, Patrick A and Read, Nicholas},
  journal   = {Physical Review B},
  number    = {12},
  pages     = {7312},
  publisher = {APS},
  title     = {Theory of the half-filled {Landau} level},
  volume    = {47},
  year      = {1993}
}

@article{rezayi1994fermi,
  author    = {Rezayi, E and Read, N},
  journal   = {Physical review letters},
  number    = {6},
  pages     = {900},
  publisher = {APS},
  title     = {{Fermi}-liquid-like state in a half-filled {Landau} level},
  volume    = {72},
  year      = {1994}
}

@article{levin2007tensor,
  author    = {Levin, Michael and Nave, Cody P},
  journal   = {Physical review letters},
  number    = {12},
  pages     = {120601},
  publisher = {APS},
  title     = {Tensor renormalization group approach to two-dimensional classical lattice models},
  volume    = {99},
  year      = {2007}
}

@article{zhang2026projected,
  author  = {Zhang, Hang and Armegioiu, Victor and Carrasquilla, Juan and Mishra, Siddhartha and M{\"u}ller, Johannes and Nys, Jannes and Zeinhofer, Marius},
  journal = {arXiv preprint arXiv:2606.07825},
  title   = {Projected Inverse Iteration: An Eigenvalue Approach to Ground-State Computation with Neural Quantum States},
  year    = {2026}
}

@article{shi2026effects,
  author    = {Shi, Jingtian and Cano, Jennifer and Morales-Dur{\'a}n, Nicol{\'a}s},
  journal   = {Physical Review Research},
  number    = {2},
  pages     = {L022045},
  publisher = {APS},
  title     = {Effects of {Berry} curvature on ideal fractional {Chern} insulator many-body gaps},
  volume    = {8},
  year      = {2026}
}

@article{wu2013bloch,
  author    = {Wu, Yang-Le and Regnault, Nicolas and Bernevig, B Andrei},
  journal   = {Physical review letters},
  number    = {10},
  pages     = {106802},
  publisher = {APS},
  title     = {Bloch model wave functions and pseudopotentials for all fractional Chern insulators},
  volume    = {110},
  year      = {2013}
}

@article{wu2012gauge,
  author    = {Wu, Yang-Le and Regnault, Nicolas and Bernevig, B Andrei},
  journal   = {Physical Review B—Condensed Matter and Materials Physics},
  number    = {8},
  pages     = {085129},
  publisher = {APS},
  title     = {Gauge-fixed Wannier wave functions for fractional topological insulators},
  volume    = {86},
  year      = {2012}
}

@article{han2025signatures,
  author    = {Han, Tonghang and Lu, Zhengguang and Hadjri, Zach and Shi, Lihan and Wu, Zhenghan and Xu, Wei and Yao, Yuxuan and Cotten, Armel A and Sharifi Sedeh, Omid and Weldeyesus, Henok and others},
  journal   = {Nature},
  number    = {8072},
  pages     = {654--661},
  publisher = {Nature Publishing Group UK London},
  title     = {Signatures of chiral superconductivity in rhombohedral graphene},
  volume    = {643},
  year      = {2025}
}

@article{bernevig2008model,
  author    = {Bernevig, B Andrei and Haldane, FDM},
  journal   = {Physical review letters},
  number    = {24},
  pages     = {246802},
  publisher = {APS},
  title     = {Model fractional quantum {Hall} states and {Jack} polynomials},
  volume    = {100},
  year      = {2008}
}

@article{sterdyniak2011extracting,
  author    = {Sterdyniak, A and Regnault, N and Bernevig, B Andrei},
  journal   = {Physical review letters},
  number    = {10},
  pages     = {100405},
  publisher = {APS},
  title     = {Extracting excitations from model state entanglement},
  volume    = {106},
  year      = {2011}
}

@article{chandran2011bulk,
  author    = {Chandran, Anushya and Hermanns, M and Regnault, N and Bernevig, B Andrei},
  journal   = {Physical Review B—Condensed Matter and Materials Physics},
  number    = {20},
  pages     = {205136},
  publisher = {APS},
  title     = {Bulk-edge correspondence in entanglement spectra},
  volume    = {84},
  year      = {2011}
}

@article{williamson2014matrix,
  author  = {Williamson, Dominic J and Bultinck, Nick and Mari{\"e}n, Michael and Sahinoglu, Mehmet B and Haegeman, Jutho and Verstraete, Frank},
  journal = {arXiv preprint arXiv:1412.5604},
  title   = {Matrix product operators for symmetry-protected topological phases: Gauging and edge theories},
  year    = {2014}
}

@article{chen2013symmetry,
  author    = {Chen, Xie and Gu, Zheng-Cheng and Liu, Zheng-Xin and Wen, Xiao-Gang},
  journal   = {Physical Review B—Condensed Matter and Materials Physics},
  number    = {15},
  pages     = {155114},
  publisher = {APS},
  title     = {Symmetry protected topological orders and the group cohomology of their symmetry group},
  volume    = {87},
  year      = {2013}
}

@article{chen2011classification,
  author    = {Chen, Xie and Gu, Zheng-Cheng and Wen, Xiao-Gang},
  journal   = {Physical Review B—Condensed Matter and Materials Physics},
  number    = {3},
  pages     = {035107},
  publisher = {APS},
  title     = {Classification of gapped symmetric phases in one-dimensional spin systems},
  volume    = {83},
  year      = {2011}
}

@article{perez2008string,
  author    = {P{\'e}rez-Garc{\'\i}a, David and Wolf, Michael M and Sanz, Mikel and Verstraete, Frank and Cirac, J Ignacio},
  journal   = {Physical review letters},
  number    = {16},
  pages     = {167202},
  publisher = {APS},
  title     = {String order and symmetries in quantum spin lattices},
  volume    = {100},
  year      = {2008}
}

@article{schuch2007computational,
  author    = {Schuch, Norbert and Wolf, Michael M and Verstraete, Frank and Cirac, J Ignacio},
  journal   = {Physical review letters},
  number    = {14},
  pages     = {140506},
  publisher = {APS},
  title     = {Computational complexity of projected entangled pair states},
  volume    = {98},
  year      = {2007}
}

@article{xiao2011interface,
  author    = {Xiao, Di and Zhu, Wenguang and Ran, Ying and Nagaosa, Naoto and Okamoto, Satoshi},
  journal   = {Nature communications},
  number    = {1},
  pages     = {596},
  publisher = {Nature Publishing Group UK London},
  title     = {Interface engineering of quantum {Hall} effects in digital transition metal oxide heterostructures},
  volume    = {2},
  year      = {2011}
}

@article{abouelkomsan2024band,
  author    = {Abouelkomsan, Ahmed and Reddy, Aidan P and Fu, Liang and Bergholtz, Emil J},
  journal   = {Physical Review B},
  number    = {12},
  pages     = {L121107},
  publisher = {APS},
  title     = {Band mixing in the quantum anomalous {Hall} regime of twisted semiconductor bilayers},
  volume    = {109},
  year      = {2024}
}

@article{hou2025stabilizing,
  author  = {Hou, Run and Nevidomskyy, Andriy H},
  journal = {arXiv preprint arXiv:2511.16641},
  title   = {Stabilizing Fractional Chern States in Twisted {MoTe$_2$}: Multi-band Correlations via Non-perturbative Renormalization Group},
  year    = {2025}
}

@article{dong2024anomalous,
  author    = {Dong, Junkai and Wang, Taige and Wang, Tianle and Soejima, Tomohiro and Zaletel, Michael P and Vishwanath, Ashvin and Parker, Daniel E},
  journal   = {Physical Review Letters},
  number    = {20},
  pages     = {206503},
  publisher = {APS},
  title     = {Anomalous {Hall} crystals in rhombohedral multilayer graphene. I. Interaction-driven Chern bands and fractional quantum {Hall} states at zero magnetic field},
  volume    = {133},
  year      = {2024}
}

@article{yu2025moireIV,
  author    = {Yu, Jiabin and Herzog-Arbeitman, Jonah and Kwan, Yves H and Regnault, Nicolas and Bernevig, B Andrei},
  journal   = {Physical Review B},
  number    = {7},
  pages     = {075110},
  publisher = {APS},
  title     = {Moir{\'e} fractional {Chern} insulators. IV. Fluctuation-driven collapse in multiband exact diagonalization calculations on rhombohedral graphene},
  volume    = {112},
  year      = {2025}
}

@article{li2025multiband,
  author    = {Li, Heqiu and Bernevig, B Andrei and Regnault, Nicolas},
  journal   = {Physical Review B},
  number    = {7},
  pages     = {075130},
  publisher = {APS},
  title     = {Multiband exact diagonalization and an iteration approach to search for fractional {Chern} insulators in rhombohedral multilayer graphene},
  volume    = {112},
  year      = {2025}
}

@article{xu2024maximally,
  author    = {Xu, Cheng and Li, Jiangxu and Xu, Yong and Bi, Zhen and Zhang, Yang},
  journal   = {Proceedings of the National Academy of Sciences},
  number    = {8},
  pages     = {e2316749121},
  publisher = {National Academy of Sciences},
  title     = {Maximally localized Wannier functions, interaction models, and fractional quantum anomalous {Hall} effect in twisted bilayer {MoTe$_2$}},
  volume    = {121},
  year      = {2024}
}

@article{he2025fractional,
  author  = {He, Yuchi and Simon, SH and Parameswaran, SA},
  journal = {arXiv preprint arXiv:2505.06354},
  title   = {Fractional {Chern} insulators and competing states in a twisted {MoTe$_2$} lattice model},
  year    = {2025}
}

@article{gonccalves2026spinless,
  author    = {Gon{\c{c}}alves, Miguel and Mendez-Valderrama, Juan Felipe and Herzog-Arbeitman, Jonah and Yu, Jiabin and Xu, Xiaodong and Xiao, Di and Bernevig, B Andrei and Regnault, Nicolas},
  journal   = {Physical review letters},
  number    = {19},
  pages     = {196503},
  publisher = {APS},
  title     = {Spinless and spinful charge excitations in moir{\'e} fractional {Chern} insulators},
  volume    = {136},
  year      = {2026}
}

@article{claassen2015position,
  author    = {Claassen, Martin and Lee, Ching Hua and Thomale, Ronny and Qi, Xiao-Liang and Devereaux, Thomas P},
  journal   = {Physical review letters},
  number    = {23},
  pages     = {236802},
  publisher = {APS},
  title     = {Position-momentum duality and fractional quantum {Hall} effect in {Chern} insulators},
  volume    = {114},
  year      = {2015}
}

@article{jian2013crystal,
  author    = {Jian, Chao-Ming and Qi, Xiao-Liang},
  journal   = {Physical Review B—Condensed Matter and Materials Physics},
  number    = {16},
  pages     = {165134},
  publisher = {APS},
  title     = {Crystal-symmetry preserving Wannier states for fractional {Chern} insulators},
  volume    = {88},
  year      = {2013}
}

@article{zhang2025beyond,
  author  = {Zhang, Yitong and Sarkar, Siddhartha and Wan, Xiaohan and Parker, Daniel E and Lin, Shi-Zeng and Sun, Kai},
  journal = {arXiv preprint arXiv:2510.22831},
  title   = {Beyond the Lowest {Landau} Level: Unlocking More Robust Fractional States Using Flat {Chern} Bands with Higher Vortexability},
  year    = {2025}
}

@article{hastings2011topological,
  author    = {Hastings, Matthew B and Loring, Terry A},
  journal   = {Annals of Physics},
  number    = {7},
  pages     = {1699--1759},
  publisher = {Elsevier},
  title     = {Topological insulators and {$C^*$}-algebras: Theory and numerical practice},
  volume    = {326},
  year      = {2011}
}

@article{hastings2010almost,
  author    = {Hastings, Matthew B and Loring, Terry A},
  journal   = {Journal of Mathematical Physics},
  number    = {1},
  publisher = {AIP Publishing},
  title     = {Almost commuting matrices, localized Wannier functions, and the quantum {Hall} effect},
  volume    = {51},
  year      = {2010}
}

@article{sarkar2026similar,
  author  = {Sarkar, Siddhartha and Zhang, Yitong and Sun, Kai},
  journal = {arXiv preprint arXiv:2606.07323},
  title   = {How Similar Can Fractional {Chern} Insulators Be to Fractional Quantum {Hall} States? Moir{\'e}-Enhanced Gaps and Excitation-Spectrum Correspondence},
  year    = {2026}
}

@article{grushin2012enhancing,
  author    = {Grushin, Adolfo G and Neupert, Titus and Chamon, Claudio and Mudry, Christopher},
  journal   = {Physical Review B—Condensed Matter and Materials Physics},
  number    = {20},
  pages     = {205125},
  publisher = {APS},
  title     = {Enhancing the stability of a fractional Chern insulator against competing phases},
  volume    = {86},
  year      = {2012}
}

@article{goldman2023zero,
  author    = {Goldman, Hart and Reddy, Aidan P and Paul, Nisarga and Fu, Liang},
  journal   = {Physical review letters},
  number    = {13},
  pages     = {136501},
  publisher = {APS},
  title     = {Zero-field composite Fermi liquid in twisted semiconductor bilayers},
  volume    = {131},
  year      = {2023}
}

@article{seo2026family,
  author    = {Seo, Junseok and Cotten, Armel A and Ye, Shenyong and Xu, Mingchi and Sedeh, Omid Sharifi and Weldeyesus, Henok and Han, Tonghang and Lu, Zhengguang and Wu, Zhenghan and Xu, Wei and others},
  journal   = {Nature},
  pages     = {1--3},
  publisher = {Nature Publishing Group UK London},
  title     = {Family of magnetic field-boosted superconductors in rhombohedral graphene},
  year      = {2026}
}

@article{hua2026multi,
  author  = {Hua, Zhenqi and Ye, Shenyong and Pattanakanvijit, Phatthanon and Shi, Gang and Han, Tonghang and Aitken, Emily and Yang, Jixiang and Seo, Junseok and Liu, Haoyang and Hao, Ran and others},
  journal = {arXiv preprint arXiv:2607.06520},
  title   = {Multi-Knob Switchable Chiral Superconductivity Quartet in Rhombohedral Graphene},
  year    = {2026}
}

@article{nguyen2025hierarchy,
  author  = {Nguyen, Ron Q and Wu, Hai-Tian and Morissette, Erin and Zhang, Naiyuan J and Qin, Peiyu and Watanabe, Kenji and Taniguchi, Takashi and Hui, Aaron W and Feldman, Dima E and Li, JIA},
  journal = {arXiv preprint arXiv:2507.22026},
  title   = {A hierarchy of superconductivity and topological charge density wave states in rhombohedral graphene},
  year    = {2025}
}

@article{qin2025stripe,
  author  = {Qin, Peiyu and Wu, Hai-Tian and Nguyen, Ron Q and Morissette, Erin and Zhang, Naiyuan J and Watanabe, K and Taniguchi, T and Li, JIA},
  journal = {arXiv preprint arXiv:2504.05129},
  title   = {Stripe order in the metallic and superconducting phases of rhombohedral hexalayer graphene},
  year    = {2025}
}

@article{tucker1966some,
  author    = {Tucker, Ledyard R},
  journal   = {Psychometrika},
  number    = {3},
  pages     = {279--311},
  publisher = {Springer},
  title     = {Some mathematical notes on three-mode factor analysis},
  volume    = {31},
  year      = {1966}
}

@misc{lin2026superfluid-FCI,
  author = {Lin, Guan-Lin and Hu, Xiaodong and Xiao, Di and Ran, Ying},
  note   = {to be submitted}
}

@article{lin2026hyperdeterminant,
  author  = {Lin, Guan-Lin and Xiao, Di and Ran, Ying},
  journal = {arXiv preprint arXiv:2607.23392},
  title   = {Hyperdeterminant wavefunctions},
  year    = {2026}
}

@misc{Appendix,
  note = {See appendix at {\color{mypurple}\textsf{URL\_to\_be\_Inserted\_by\_Publisher}} for (1) HyperDet wavefunction construction details, (2) fusion tensor reconstruction and HyperDet wavefunction determinant expansion, (3) statistical theorem, (4) built-in $\mathrm{GL}(N;\mathbb C)\rtimes S_m$ gauge structure, (5) gauge fixing of stochastic reconfiguration updates, (6) hyperparameter settings and example loss curves, (7) VMC Performance for Other Geometries, (8) virtual-lift fitting result, (9) parton Chern extraction, and (10) exact-diagonalization diagnostics of fermionic phase diagrams. The appendix also cites Ref.~\cite{sorella1998green,sorella2001generalized,chen2024empowering,goldshlager2024kaczmarz,lu2025continuous,sun2011nearly,sheng2011fractional,regnault2011fractional,bernevig2008model,sterdyniak2011extracting,chandran2011bulk,hastings2010almost,hastings2011topological,barkeshli2014continuous,barkeshli2015continuous}.}
}

\clearpage
\appendix

\onecolumngrid

\begin{center}
	{\large\bfseries Appendices for ``HyperDet Wavefunction: A Phase-Agnostic Ansatz for \\ Strongly Correlated Systems''}\\
	\vspace{1.2em}
	{Xiaodong Hu$^{1}$, Guan-Lin Lin$^{2}$, Ying Ran$^{2}$, and Di Xiao$^{1,3}$}\\
	\vspace{0.6em}
	{\small\itshape $^{1}$Department of Materials Science and Engineering, University of Washington, Seattle, WA 98195, USA}\\
	{\small\itshape $^{2}$Department of Physics, Boston College, Chestnut Hill, Massachusetts 02467, USA}\\
	{\small\itshape $^{3}$Department of Physics, University of Washington, Seattle, Washington 98195, USA}\\[0.1em]
	{\small(Dated: \today)}
	\vspace{1.5em}
\end{center}

\twocolumngrid

\appendixtableofcontents 


\section{Definition of Anchored Hyperdeterminant}\label{app:anchored_hyperdet}
For completeness, we give the precise convention used throughout the main text. For a physical configuration $\bm s=(s_1,\ldots,s_N)$, define the configuration-restricted fusion tensor
\begin{equation}
	[\mathcal F_{\bm s}]_{\ell;\ell_1,\ldots,\ell_m}:=\mathcal F_{s_\ell;\ell_1,\ldots,\ell_m},
	\qquad \ell,\ell_\alpha=1,\ldots,N.
\end{equation}
This is an order-$(m+1)$ tensor with one \emph{anchored} physical-slot index and $m$ parton indices. We define its anchored combinatorial hyperdeterminant by
\begin{align}
	\mathop{\mathsf{HyperDet}}[\mathcal F_{\bm s}] &= \sum_{P_1,\ldots,P_m\in S_N} \left[\prod_{\alpha=1}^m \sgn(P_\alpha)\right]\nonumber\\
	&\quad\times\prod_{\ell=1}^N [\mathcal F_{\bm s}]_{\ell;P_1(\ell),\ldots,P_m(\ell)}.
	\label{eqA:anchored_hyperdet_def}
\end{align}
The first index is \emph{not antisymmetrized} in this definition. It labels the physical particle slot, or equivalently the ordered entries of $\bm s$. The wavefunction amplitude is therefore
\begin{equation}
	\Psi_{\mathcal F}^{\text{phys}}(\bm s)=\mathop{\mathsf{HyperDet}}[\mathcal F_{\bm s}] .
	\label{eqA:psi_hyperdet}
\end{equation}
This convention differs significantly from \emph{Cayley's first hyperdeterminant} widely used in the literature~\cite{gelfand1994hyperdeterminants}, where \emph{all} tensor dimensions are antisymmetrized, so the hyperdeterminant vanishes identically for odd dimensions. Here only the parton legs are antisymmetrized, while the physical leg is anchored by the ordered physical configuration.

\section{Koszul Signs of the Particle-Slot Reordering}\label{app:koszul_signs}
In the main text we derive the HyperDet wavefunction in a first-quantized particle-slot convention. In that convention, each parton species $\alpha$ is described by an ordinary Slater determinant over $N$ particle slots, and the relative fermionic signs inside that species are already carried by the Slater antisymmetrizer $\sum_{P_\alpha}(-1)^{P_\alpha}$. The different parton species are then combined by an ordinary tensor product. The map $U_\pi^{\text{slot}}$ used in the main text is simply the linear reordering map that converts this species-major tensor product into a particle-slot-major tensor product. No additional fermionic sign is assigned to this reordering in the main-text convention.

Some readers may nevertheless wonder whether a hidden sign is missing when the parton tensor factors are rearranged. This worry actually comes from a different, equally legitimate, bookkeeping convention --- \emph{the graded fermionic tensor convention} used in second-quantized language. The purpose of this appendix section is to explain the difference between these two conventions. We show that,
\begin{itemize}
	\item \emph{Even if the whole construction is embedded into the graded fermionic convention, the residual extra sign (the Koszul sign) is a global, configuration-independent normalization factor and therefore can be ignored}.
\end{itemize}
Therefore it does not modify the HyperDet wavefunction derived in the main text.

Let us first define the ordinary particle-slot reordering used in the main text. For each parton species $\alpha=1,\ldots,m$ and particle slot $\ell=1,\ldots,N$, let $V_{\alpha,\ell}$ denote the one-particle parton Hilbert space assigned to that species and slot. The species-major tensor product is
\begin{equation}
	\mathcal H_{\text{species}}=\bigotimes_{\alpha=1}^{m}\bigotimes_{\ell=1}^{N}V_{\alpha,\ell},
\end{equation}
while the particle-slot-major tensor product is
\begin{equation}
	\mathcal H_{\text{slot}}=\bigotimes_{\ell=1}^{N}\bigotimes_{\alpha=1}^{m}V_{\alpha,\ell}.
\end{equation}
The map $U_\pi^{\text{slot}}:\mathcal H_{\text{species}}\rightarrow\mathcal H_{\text{slot}}$ is the canonical linear isomorphism induced by the permutation of tensor factors
\begin{equation}
	(\alpha,\ell)\mapsto(\ell,\alpha).
\end{equation}
On a simple tensor it acts as
\begin{align}
	U_\pi^{\text{slot}}\bigg(\bigotimes_{\alpha=1}^{m}\bigotimes_{\ell=1}^{N}v_{\alpha,\ell}\bigg)=\bigotimes_{\ell=1}^{N}\bigotimes_{\alpha=1}^{m}v_{\alpha,\ell}.
\end{align}
Here ``canonical'' means that no basis choice or dynamical assumption is involved; the map only changes the order in which the same tensor factors are written. In the main-text convention, this is an ordinary tensor-product isomorphism and hence carries no extra sign.

Now we contrast this with the graded fermionic convention. In second quantization, the fermionic Fock space is naturally $\mathbb Z_2$-graded by fermion parity. A state or operator is called even if it contains an even number of fermionic creation or annihilation operators, and odd if it contains an odd number. In a graded tensor convention, exchanging two homogeneous objects with parities $|a|,|b|\in\{0,1\}$ produces the so-called \emph{Koszul sign}
\begin{equation}
	a\otimes b\mapsto(-1)^{|a||b|}b\otimes a.
\end{equation}
Since every one-particle parton factor is fermion-odd, exchanging two such factors produces a minus sign. Equivalently, if $f_{\alpha,q}^{\dagger}$ creates the $q$-th occupied orbital of parton species $\alpha$, then all fermionic creation operators anticommute:
\begin{equation}
	f_{\alpha,p}^{\dagger}f_{\beta,q}^{\dagger}=-f_{\beta,q}^{\dagger}f_{\alpha,p}^{\dagger}
\end{equation}
whenever $(\alpha,p)\neq(\beta,q)$.

This graded convention is not the convention used in the first-quantized derivation in the main text. In the main text, the antisymmetry of each species is explicitly imposed by the Slater signs $(-1)^{P_\alpha}$, and different species are organized as separate tensor factors. In the graded second-quantized convention, by contrast, all parton creation operators are embedded into one fermionic algebra, so reordering parton factors from different species also produces signs. The two descriptions are just different bookkeeping conventions. We now show that their difference is a single global sign.

For fixed permutations $\bm P=(P_1,\ldots,P_m)$, the species-major order appearing in the parton mean-field state is represented in second quantization as
\begin{equation}
	|\bm P\rangle_{\text{species}}=\prod_{\alpha=1}^{m}\prod_{\ell=1}^{N}f_{\alpha,P_\alpha(\ell)}^{\dagger}|0\rangle,
\end{equation}
where products are ordered with increasing $\alpha$ first and increasing $\ell$ second. The particle-slot-major order is
\begin{equation}
	|\bm P\rangle_{\text{slot}}=\prod_{\ell=1}^{N}\prod_{\alpha=1}^{m}f_{\alpha,P_\alpha(\ell)}^{\dagger}|0\rangle,
\end{equation}
where products are ordered with increasing $\ell$ first and increasing $\alpha$ second. Passing from the first product to the second product requires a number of adjacent swaps of fermion-odd factors. The corresponding sign is $(-1)^I$, where $I$ is the number of inversions between the two orderings.

A concrete example is useful. For $m=2$ species and $N=2$ particles, the species-major order is
\begin{equation*}
	(1,1),(1,2),(2,1),(2,2),
\end{equation*}
while the particle-slot-major order is
\begin{equation*}
	(1,1),(2,1),(1,2),(2,2).
\end{equation*}
Only one transposition is needed, namely exchanging $(1,2)$ and $(2,1)$. Therefore
\begin{align*}
	&f_{1,P_1(1)}^{\dagger}f_{1,P_1(2)}^{\dagger}f_{2,P_2(1)}^{\dagger}f_{2,P_2(2)}^{\dagger}=\nonumber\\
	&\quad-f_{1,P_1(1)}^{\dagger}f_{2,P_2(1)}^{\dagger}f_{1,P_1(2)}^{\dagger}f_{2,P_2(2)}^{\dagger}.
\end{align*}
This minus sign is a Koszul sign. It is \emph{not} any of the Slater-determinant signs $(-1)^{P_\alpha}$. Instead, it comes only from changing the tensor-product ordering convention.

For general $m$ and $N$, the same counting is straightforward. Consider two factors $(\alpha,\ell)$ and $(\beta,k)$ with $\alpha<\beta$. In species-major order, $(\alpha,\ell)$ always appears before $(\beta,k)$. In particle-slot-major order, the comparison is controlled first by the slot label. Therefore $(\beta,k)$ appears before $(\alpha,\ell)$ precisely when $k<\ell$. Thus each pair of species $\alpha<\beta$ contributes one inversion for each pair of slots $k<\ell$. The total number of inversions is
\begin{equation}
	I_{m,N}=\binom{m}{2}\binom{N}{2}.
\end{equation}
Consequently,
\begin{equation}
	\prod_{\alpha=1}^{m}\prod_{\ell=1}^{N}f_{\alpha,P_\alpha(\ell)}^{\dagger}=(-1)^{\binom{m}{2}\binom{N}{2}}\prod_{\ell=1}^{N}\prod_{\alpha=1}^{m}f_{\alpha,P_\alpha(\ell)}^{\dagger}.
\end{equation}
Equivalently, the graded version of the particle-slot reordering map is related to the ordinary map used in the main text by
\begin{equation}
	U_{\pi,\text{graded}}^{\text{slot}}=(-1)^{\binom{m}{2}\binom{N}{2}}U_\pi^{\text{slot}}.
	\label{eqA:graded fermion sign for particle-slot convention}
\end{equation}

The crucial point is that this sign depends only on a fixed parton species $m$ and a fixed filling physical particles $N$. It does not depend on the physical configuration $\bm s=(s_1,\ldots,s_N)$, and it does not depend on the permutation choices $P_\alpha$. The permutations $P_\alpha$ only decide which occupied parton orbital is placed into each slot; they do not change how many one-particle tensor factors must be crossed when the total ordering is changed from species-major to particle-slot-major.

Applying the particle-slot-wise fusion map in the graded convention therefore gives
\begin{align}
	&\langle\bm s|\left(\bigotimes_{\ell=1}^{N}\hat F_\ell\right)U_{\pi,\text{graded}}^{\text{slot}}|\Psi^{\text{parton}}\rangle\nonumber\\
	&=(-1)^{\binom{m}{2}\binom{N}{2}}\sum_{P_1,\ldots,P_m}\Big[\prod_{\alpha=1}^{m}(-1)^{P_\alpha}\Big]\prod_{\ell=1}^{N}\mathcal F_{s_\ell;P_1(\ell),\ldots,P_m(\ell)}\nonumber\\
	&=(-1)^{\binom{m}{2}\binom{N}{2}}\mathop{\mathsf{HyperDet}}[\mathcal F_{\bm s}].
\end{align}
Thus the first-quantized ordinary tensor-product derivation in the main text and the second-quantized graded fermionic convention produce \emph{the same} physical HyperDet wavefunction up to an unimportant normalization factor. Since this Koszul sign is independent of $\bm s$, it can be safely ignored in the normalized physical state $|\Psi^{\text{phys}}\rangle$.

\section{Proof of the Statistical Theorem}\label{app:statistical_theorem}
We prove the statistical theorem of the main text. The HyperDet wavefunction of configuration $|\bm s\rangle$ reads
\begin{equation}\label{eqA:hyperdet wavefunction}
	\Psi_{\mathcal F}(\bm s)=\sum_{P_1,\ldots,P_m\in S_N}\left[\prod_{\alpha=1}^{m}\sgn(P_\alpha)\right]\prod_{\ell=1}^{N}\mathcal F_{s_\ell;P_1(\ell),\ldots,P_m(\ell)}.
\end{equation}
For a physical-slot permutation $\sigma\in S_N$ that sends the physical configuration to $\sigma(\bm s)=(s_{\sigma(1)},\ldots,s_{\sigma(N)})$, we have
\begin{equation*}
	\Psi_{\mathcal F}(\sigma(\bm s))=\sum_{P_1,\ldots,P_m}\left[\prod_{\alpha=1}^{m}\sgn(P_\alpha)\right]\prod_{\ell=1}^{N}\mathcal F_{s_{\sigma(\ell)};P_1(\ell),\ldots,P_m(\ell)} .
\end{equation*}
Relabel the particle index by $r=\sigma(\ell)$, or $\ell=\sigma^{-1}(r)$, then
\begin{align*}
	\Psi_{\mathcal F}(\sigma(\bm s))&=\sum_{P_1,\ldots,P_m}\left[\prod_{\alpha=1}^{m}\sgn(P_\alpha)\right]\\
	&\quad\times\prod_{r=1}^{N}\mathcal F_{s_r;P_1(\sigma^{-1}(r)),\ldots,P_m(\sigma^{-1}(r))}.
\end{align*}
Define $P'_\alpha=P_\alpha\circ\sigma^{-1}$, then $P_\alpha=P'_\alpha\circ\sigma$, and the permutation signature satisfies
\begin{equation}
	\sgn(P_\alpha)=\sgn(P'_\alpha)\sgn(\sigma).
\end{equation}
Since this occurs for every parton species, we get
\begin{align}
	&\Psi_{\mathcal F}(\sigma(\bm s))=\sgn(\sigma)^m\sum_{P'_1,\ldots,P'_m}\left[\prod_{\alpha=1}^{m}\sgn(P'_\alpha)\right]\nonumber\\
	&\quad\times\prod_{r=1}^{N}\mathcal F_{s_r;P'_1(r),\ldots,P'_m(r)}=\sgn(\sigma)^m \Psi_{\mathcal F}(\bm s).
\end{align}
And we are done with the proof that:
\begin{itemize}
	\item \emph{The physical HyperDet wavefunction~\cref{eqA:hyperdet wavefunction} is symmetric for even $m$ and antisymmetric for odd $m$}.
\end{itemize}

\section{Determinant Expansion of HyperDet Wavefunction}
\subsection{Mode-$\alpha$ Unfolding and Simultaneous Tucker Reconstruction}
To simplify the notation, We will use $s_\ell\equiv s$ denote the physical orbital if there is no ambiguity. The bipartite fusion matrix Eq.~\eqref{eq:bipartite fusion matrix} used in the main text is exactly the mode-$\alpha$ unfolding:
\begin{equation}\label{eqA:mode-alpha unfolding}
	[\bm F_s^{(\alpha)}]_{\ell_\alpha,\ell_{\bar\alpha}} \equiv \mathcal F_{s_\ell;\ell_\alpha\mid\ell_{\bar\alpha}}.
\end{equation}
Note: the mode-$\alpha$ unfolding just reshapes the data of the fusion tensor but never alter it --- the fusion tensor thus can always be uniquely recovered from it. Therefore, below we may interchangeably use the two (seemingly different) notations: unfolded matrix $\bm F_s^{(\alpha)}$, and the original fusion tensor $\mathcal F_s$.

We can always take the mode-$\alpha$ unfolding matrix's site-$s$ SVD
\begin{equation}\label{eqA:SVD of mode-alpha unfolding}
	\bm F_s^{(\alpha)} = \bm U^{(\alpha)}_s \bm\Lambda_s^{(\alpha)} \bm V_s^{(\alpha)\dagger},
\end{equation}
with species-resolved ranks $\bm\Lambda_s^{(\alpha)} = \mathop{\mathrm{diag}}\{\lambda_s^{(1)},\ldots,\lambda_s^{(r_s^{(\alpha)})}\}$, and introduce a support projector
\begin{equation}\label{eqA:support projector}
	\bm\Pi_s^{(\alpha)} := \bm U_s^{(\alpha)} \bm U_s^{(\alpha)\dagger}
\end{equation}

A nice property of the mode-$\alpha$ unfolding~\cref{eqA:mode-alpha unfolding} is that the mode-$\alpha$ product
\begin{equation*}
	(\mathcal F_s\times_\alpha\bm M)_{\ell_1,\ldots,\ell_m} := \sum_{j_\alpha=1}^N \mathcal F_{s; \ell_1,\ldots,j_\alpha,\ldots,\ell_m} [\bm M]_{\ell_\alpha,j_\alpha},
\end{equation*}
reduces to the left matrix product
\begin{equation}
	(\mathcal F_s\times_\alpha\bm M)_{(\alpha)} \equiv \bm M\bm F_s^{(\alpha)}.
\end{equation}
As a result, the fusion tensor is invariant under the mode-$\alpha$ product with the support projector:
\begin{equation}\label{eqA:mode-alpha product invariance under support projector}
	\mathcal F_s \times_\alpha \bm\Pi_s^{(\alpha)} = \bm F_s^{(\alpha)}.
\end{equation}
Applying mode product with support projector to every species leaves $\mathcal F_s$ unchanged. And these actions commute because their sums concern independent indices.

In the main text, we take the site-$s$ SVD channel $c$ as a local auxiliary degrees of freedom to label the channel state $|s,c\rangle$ as a composite coordinate $(s,c)$. This is \emph{not} an additional occupied-orbital index but spans a local \emph{reconstructed} Hilbert space $\mathcal H_{\text{rec}}^{(\alpha)} = \bigoplus_s \mathbb C^{r_s^{(\alpha)}}$. Then with the $\alpha$-species \emph{occupied orbital amplitudes}
\begin{equation}
	\bm A^{(\alpha)}_{(s,c),\ell_\alpha} := \sqrt{\lambda_{s;c}^{(\alpha)}}\,[\bm U_s^{(\alpha)}]_{\ell_\alpha,c}\in\mathbb C^{M_\alpha\times N}
\end{equation}
where $M_\alpha:=\sum_{s=1}^{N_s}r_s^{(\alpha)}\geq N$, as well as the invariance under the mode-$\alpha$ product with the support projector~\cref{eqA:mode-alpha product invariance under support projector}, we can prove the fusion tensor reconstruction straightforwardly:
\begin{align}
	&\mathcal F_{s;\ell_1,\ldots,\ell_m} \equiv \Big(\mathcal F_s \times_1 \bm\Pi_s^{(1)}\cdots\times_m\bm\Pi_s^{(m)}\Big)_{\ell_1,\ldots,\ell_m} \nonumber\\
	&\equiv \sum_{j_1,\ldots,j_m} \mathcal F_{s; j_1,\ldots,j_m} \prod_{\alpha=1}^m [\bm\Pi_s^{(\alpha)}]_{\ell_\alpha,j_\alpha} \nonumber\\
	&\equiv \sum_{c_1,\ldots,c_m} \bigg[\sum_{j_1,\ldots,j_m}\mathcal F_{s;j_1,\ldots,j_m}\prod_{\alpha=1}^m \dfrac{1}{\sqrt{\lambda_{s;c_\alpha}^{(\alpha)}}}[\bm U_s^{(\alpha)}]_{j_\alpha,c_\alpha}^*\bigg]\nonumber\\
	&\quad\quad\times\prod_{\alpha=1}^m \sqrt{\lambda_{s;c_\alpha}^{(\alpha)}} [\bm U_s^{(\alpha)}]_{\ell_\alpha,c_\alpha}\nonumber\\
	&\equiv \sum_{c_1,\ldots,c_m} \mathcal T_{s;c_1,\ldots,c_m} \prod_{\alpha=1}^m \bm A_{(s,c_\alpha);\ell_\alpha}^{(\alpha)},
\end{align}
where we recognize the \emph{weighted Tucker core}
\begin{equation}\label{eqA:weighted Tucker core}
	\mathcal T_{s; c_1,\ldots, c_m} := \sum_{\ell_1,\ldots,\ell_m} \mathcal F_{s;\ell_1,\ldots,\ell_m} \prod_{\alpha=1}^m \dfrac{[\bm U_s^{(\alpha)}]_{\ell_\alpha,c_\alpha}^*}{\sqrt{\lambda_{s;c_\alpha}^{(\alpha)}}}.
\end{equation}

\subsection{Determinant Expansion}
The Tucker reconstruction~\cref{eqA:weighted Tucker core} can be inserted into the HyperDet wavefunction definition. This gives a re-interpretation of the wavefunction evaluation with channel contractions:
\begin{align}
	&\psi_{\mathcal F}^{\text{phys}}(\bm s) \equiv \mathsf{HyperDet}[\mathcal F_{\bm s}]\nonumber\\
	&\equiv \sum_{P_1,\ldots,P_m} \Big[\prod_{\alpha=1}^m \mathrm{sgn}(P_\alpha)\Big] \prod_{\ell=1}^N \mathcal F_{s_\ell;P_1(\ell),\ldots,P_m(\ell)}\nonumber\\
	&= \sum_{P_1,\ldots,P_m}\Big[\prod_{\alpha=1}^m \mathrm{sgn}(P_\alpha)\Big] \nonumber\\
	&\quad\quad \times\prod_{\ell=1}^N \bigg[\sum_{c_1,\ldots,c_m} \mathcal T_{s_\ell;c_1,\ldots,c_m} \prod_{\alpha=1}^m \bm A_{(s_\ell,c_\alpha);P_\alpha(\ell)}^{(\alpha)}\bigg]\nonumber\\
	&= \sum_{P_1,\ldots,P_m} \Big[\prod_{\alpha=1}^m \mathrm{sgn}(P_\alpha)\Big] \nonumber\\
	&\quad\quad \times\sum_{c_1(\ell),\ldots,c_m(\ell)}\prod_{\ell=1}^N \Big[\mathcal T_{s_\ell;c_1(\ell),\ldots,c_m(\ell)} \prod_{\alpha=1}^m \bm A_{(s_\ell,c_\alpha(\ell));P_\alpha(\ell)}^{(\alpha)}\bigg]\nonumber\\
	&\equiv \sum_{\{c_\alpha(\ell)\}} \Big[\prod_{\ell=1}^N \mathcal T_{s_\ell;c_1(\ell),\ldots,c_m(\ell)}\Big] \prod_{\alpha=1}^m \sum_{P_\alpha} (-1)^{P_\alpha} \prod_{\ell=1}^N \bm A_{(s_\ell,c_\alpha(\ell));P_\alpha(\ell)}^{(\alpha)}\nonumber\\
	& \equiv \sum_{\{c_\alpha(\ell)\}} \Big[\prod_{\ell=1}^N \mathcal T_{s_\ell;c_1(\ell),\ldots,c_m(\ell)}\Big] \prod_{\alpha=1}^m \det\big[\bm A_{(s_\ell,c_\alpha(\ell)); j}^{(\alpha)}\big]_{\ell,j=1}^N,\label{eqA:determinant expansion of HyperDet wavefunction}
\end{align}
where in the third line we exchange the product and sum by choosing, for each physical slot $\ell$ and each species $\alpha$, one channel index $c_\alpha(\ell)\in\{1,\ldots,r^{(\alpha)}_{s_\ell}\}$. The summation over channel configurations and the corresponding $N\times N$ minors are defined around Eq.~\eqref{eq:channel configuration sum}. A non-vanishing physical many-body wavefunction requires that $\mathop{\mathrm{rank}}\bm A^{(\alpha)}=N$ for every species.

\section{Gauge Structure of the HyperDet Ansatz}\label{app:gauge_structure}
The fusion tensor $\mathcal F$ is a variational parametrization of the physical wavefunction, not a unique physical observable by itself; it therefore naturally carries gauge redundancies. In this section, we will explore the gauge structure of our HyperDet wavefunction ansatz. Here, by gauge transformation, we mean a reparametrization $\mathcal F\mapsto\mathcal F'$ that leaves every physical amplitude unchanged up to an unimportant configuration-independent normalization factor:
\begin{equation}
	\Psi_{\mathcal F'}^{\text{phys}}(\bm s)=C\Psi_{\mathcal F}^{\text{phys}}(\bm s),\quad\forall\bm s.
\end{equation}
Such a $\mathbb C$-valued scalar $C$ has no effect on the normalized physical many-body state.

For each physical orbital $s_\ell$, let us define the physical-orbital-restricted order-$m$ fusion tensor $\mathcal F(s_\ell)_{\ell_1,\ldots,\ell_m}:=\mathcal F_{s_\ell;\ell_1,\ldots,\ell_m}$, and consider physical-orbital-independent invertible transformations $\bm G^{(\alpha)}\in\mathrm{GL}(N;\mathbb C)$ acting on the virtual leg of each parton species $\alpha=1,\ldots,m$. In component notation, the transformed fusion tensor reads
\begin{equation*}
	\mathcal F'_{s_\ell;\ell_1,\ldots,\ell_m}=\sum_{\ell_1',\ldots,\ell_m'=1}^{N}G^{(1)}_{\ell_1,\ell_1'}G^{(2)}_{\ell_2,\ell_2'}\cdots G^{(m)}_{\ell_m,\ell_m'}\mathcal F_{s_\ell;\ell_1',\ldots,\ell_m'}.
\end{equation*}
Equivalently, using mode-$\alpha$ tensor-matrix multiplication $\times_\alpha$ to denote the contraction of $\bm G^{(\alpha)}$ with the $\alpha$-th virtual leg of the fusion tensor, this can be written in a neat compact form
\begin{equation}
	\mathcal F'(s_\ell)=\mathcal F(s_\ell)\times_1\bm G^{(1)}\times_2\bm G^{(2)}\cdots\times_m\bm G^{(m)}.
\end{equation}
We now prove that these transformations $\{\bm G^{(\alpha)}\in\mathrm{GL}(N;\mathbb C)\}$ constitute the gauge transformations of the HyperDet wavefunction ansatz.

It is enough to first consider a transformation acting on one parton species, say species $\alpha$, because transformations on different species can then be applied successively. For later comparison with physical-orbital-dependent transformations, let us temporarily allow the transformation matrix to depend on the physical orbital $s_\ell$:
\begin{equation}
	\mathcal F'_{s_\ell;\ell_1,\ldots,\ell_m}=\sum_{\ell_\alpha'=1}^{N}G^{(\alpha)}_{\ell_\alpha,\ell_\alpha'}(s_\ell)\mathcal F_{s_\ell;\ell_1,\ldots,\ell_{\alpha-1},\ell_\alpha',\ell_{\alpha+1},\ldots,\ell_m}.
	\label{eqA:site_dependent_virtual_transformation}
\end{equation}
We will see that \cref{eqA:site_dependent_virtual_transformation} becomes a gauge transformation only when $\bm G^{(\alpha)}(s_\ell)$ is independent of $s_\ell$.

For each physical configuration $\bm s=(s_1,\ldots,s_N)$ and a fixed parton permutation $P_\beta$ for all $\beta\neq\alpha$, let us introduce the $N\times N$ matrix
\begin{equation}
	\mathcal X_{\ell,\ell_\alpha}^{\{P_\beta\}_{\beta\neq\alpha}}:=\mathcal F_{s_\ell;P_1(\ell),\ldots,P_{\alpha-1}(\ell),\ell_\alpha,P_{\alpha+1}(\ell),\ldots,P_m(\ell)}.
	\label{eqA:X_matrix_definition}
\end{equation}
With this notation, the HyperDet can be reorganized as an alternating sum of ordinary determinants:
\begin{align}
	&\mathop{\mathsf{HyperDet}}[\mathcal F_{\bm s}]=\sum_{P_1,\ldots,P_m}\Big[\prod_{\beta=1}^{m}(-1)^{P_\beta}\Big]\prod_{\ell=1}^{N}\mathcal F_{s_\ell;P_1(\ell),\ldots,P_m(\ell)}\nonumber\\
	&\qquad=\sum_{\{P_\beta\}_{\beta\neq\alpha}}\Big[\prod_{\beta\neq\alpha}(-1)^{P_\beta}\Big]\sum_{P_\alpha}(-1)^{P_\alpha}\prod_{\ell=1}^{N}\mathcal X_{\ell,P_\alpha(\ell)}^{\{P_\beta\}_{\beta\neq\alpha}}\nonumber\\
	&\qquad=\sum_{\{P_\beta\}_{\beta\neq\alpha}}\Big[\prod_{\beta\neq\alpha}(-1)^{P_\beta}\Big]\det\bm{\mathcal X}^{\{P_\beta\}_{\beta\neq\alpha}}.
	\label{eqA:hyperdet_as_alternating_determinants}
\end{align}
The last equality follows from the definition of the determinant
\begin{equation*}
	\det\bm{\mathcal A}=\sum_{P\in S_N}(-1)^P\prod_{\ell=1}^{N}\mathcal A_{\ell,P(\ell)}
\end{equation*}
applied to the $\alpha$-th parton leg while keeping the other parton permutations fixed.

Under the general transformation in \cref{eqA:site_dependent_virtual_transformation}, the matrix $\bm{\mathcal X}^{\{P_\beta\}_{\beta\neq\alpha}}$ transforms row by row as
\begin{equation}\label{eqA:X_matrix_site_dependent_transformation}
	\left(\mathcal X_{\ell,\ell_\alpha}^{\{P_\beta\}_{\beta\neq\alpha}}\right)'=\sum_{\ell_\alpha'=1}^{N}G^{(\alpha)}_{\ell_\alpha,\ell_\alpha'}(s_\ell)\mathcal X_{\ell,\ell_\alpha'}^{\{P_\beta\}_{\beta\neq\alpha}}.
\end{equation}
If $\bm G^{(\alpha)}(s_\ell)$ is physical-orbital independent, namely $\bm G^{(\alpha)}(s_\ell)=\bm G^{(\alpha)}$ for all $s_\ell$, then the above transformation reduces to the ordinary matrix multiplication
\begin{equation}
	\left(\bm{\mathcal X}^{\{P_\beta\}_{\beta\neq\alpha}}\right)'=\bm{\mathcal X}^{\{P_\beta\}_{\beta\neq\alpha}}\left[\bm G^{(\alpha)}\right]^T,
\end{equation}
and the transformed determinant satisfies
\begin{equation}
	\det\left[\left(\bm{\mathcal X}^{\{P_\beta\}_{\beta\neq\alpha}}\right)'\right]=\det\bm{\mathcal X}^{\{P_\beta\}_{\beta\neq\alpha}}\det\bm G^{(\alpha)}.
\end{equation}
Substituting this into \cref{eqA:hyperdet_as_alternating_determinants}, we find
\begin{align}
	&\mathop{\mathsf{HyperDet}}[\mathcal F'_{\bm s}]=\sum_{\{P_\beta\}_{\beta\neq\alpha}}\Big[\prod_{\beta\neq\alpha}(-1)^{P_\beta}\Big]\det\left[\left(\bm{\mathcal X}^{\{P_\beta\}_{\beta\neq\alpha}}\right)'\right]\nonumber\\
	&\qquad=\det\bm G^{(\alpha)}\sum_{\{P_\beta\}_{\beta\neq\alpha}}\Big[\prod_{\beta\neq\alpha}(-1)^{P_\beta}\Big]\det\bm{\mathcal X}^{\{P_\beta\}_{\beta\neq\alpha}}\nonumber\\
	&\qquad=\det\bm G^{(\alpha)}\mathop{\mathsf{HyperDet}}[\mathcal F_{\bm s}].
\end{align}
Applying the same argument to all $m$ parton species gives the full transformation rule
\begin{equation}\label{eqA:HyperDet single gauge factor}
	\mathop{\mathsf{HyperDet}}[\mathcal F'_{\bm s}]=\left[\prod_{\alpha=1}^{m}\det\bm G^{(\alpha)}\right]\mathop{\mathsf{HyperDet}}[\mathcal F_{\bm s}].
\end{equation}
The prefactor in \cref{eqA:HyperDet single gauge factor} is independent of the physical configuration $\bm s$. Hence the normalized physical wavefunction is unchanged. The physical-orbital-independent transformations $\bm G^{(\alpha)}\in\mathrm{GL}(N;\mathbb C)$ therefore constitute a gauge redundancy of the HyperDet ansatz:
\begin{equation}\label{eqA:HyperDet gauge structure}
	\mathcal F(s_\ell)\sim\mathcal F(s_\ell)\times_1\bm G^{(1)}\times_2\bm G^{(2)}\cdots\times_m\bm G^{(m)}.
\end{equation}

There is one more gauge redundancy that is \emph{discrete} rather than continuous. Since the $m$ parton species are auxiliary copies with equal occupied dimension $N$, their labels can be permuted. For $\pi\in S_m$, define
\begin{equation}
	[\mathcal P_\pi\mathcal F]_{s_\ell;\ell_1,\ldots,\ell_m}:=\mathcal F_{s_\ell;\ell_{\pi^{-1}(1)},\ldots,\ell_{\pi^{-1}(m)}}.
\end{equation}
Then
\begin{align}
	&\mathop{\mathsf{HyperDet}}[(\mathcal P_\pi\mathcal F)_{\bm s}]\nonumber\\
	&\qquad=\sum_{P_1,\ldots,P_m}\Big[\prod_{\alpha=1}^{m}(-1)^{P_\alpha}\Big]\prod_{\ell=1}^{N}
	\mathcal F_{s_\ell;P_{\pi^{-1}(1)}(\ell),\ldots,P_{\pi^{-1}(m)}(\ell)}\nonumber\\
	&\qquad=\sum_{Q_1,\ldots,Q_m}\Big[\prod_{\alpha=1}^{m}(-1)^{Q_\alpha}\Big]\prod_{\ell=1}^{N}
	\mathcal F_{s_\ell;Q_1(\ell),\ldots,Q_m(\ell)}\nonumber\\
	&\qquad=\mathop{\mathsf{HyperDet}}[\mathcal F_{\bm s}],
\end{align}
where the second line follows by relabeling the dummy summation variables as $Q_\alpha=P_{\pi^{-1}(\alpha)}$. Thus parton-species relabeling leaves the physical wavefunction exactly unchanged. Combining it with the continuous transformations above, the gauge redundancy contains the semidirect product
\begin{equation}
	\mathrm{GL}(N;\mathbb C)^m\rtimes S_m,
\end{equation}
where $S_m$ permutes the $m$ continuous gauge factors. This $S_m$ is a discrete gauge redundancy of the parametrization, not a physical global symmetry. Consequently it cannot be interpreted as spontaneously broken or restored by the optimized wavefunction. It only means that labeled parton species are coordinate choices on the same physical state.

By contrast, if $\bm G^{(\alpha)}(s_\ell)$ depends on the physical orbital $s_\ell$, then different rows $\ell$ of the matrix $\bm{\mathcal X}^{\{P_\beta\}_{\beta\neq\alpha}}$ are acted on by different matrices $\bm G^{(\alpha)}(s_\ell)$. In that case \cref{eqA:X_matrix_site_dependent_transformation} cannot be recast as simple matrix multiplication $\bm{\mathcal X}\mapsto\bm{\mathcal X}\bm{\mathcal M}$. Explicitly, the determinant becomes
\begin{align}
	&\det\left[\left(\bm{\mathcal X}^{\{P_\beta\}_{\beta\neq\alpha}}\right)'\right] \nonumber\\
	&\quad=\sum_{P_\alpha}(-1)^{P_\alpha}\prod_{\ell=1}^{N}\sum_{\ell_\alpha'=1}^{N}G^{(\alpha)}_{P_\alpha(\ell),\ell_\alpha'}(s_\ell)\mathcal X_{\ell,\ell_\alpha'}^{\{P_\beta\}_{\beta\neq\alpha}}.
\end{align}
There is generally no configuration-independent scalar that can be factored out of this expression. Consequently, a physical-orbital-dependent virtual-leg transformation changes the relative amplitudes between different physical configurations and is not, in general, a gauge transformation of the HyperDet ansatz. It should instead be regarded as a genuine variational deformation of the fusion tensor.

This distinction is important for interpreting the SVD diagnostics of the bipartite fusion matrix. Tensor ranks and matrix ranks are invariant under the physical-orbital-independent invertible transformations above, but singular values are invariant only under unitary transformations, not under the full non-unitary $\mathrm{GL}(N;\mathbb C)$ gauge redundancy. Moreover, a labeled SVD spectrum of the bipartite fusion matrix for the bipartition $\alpha|\bar\alpha$ is relabeled under the discrete $S_m$ gauge action. Therefore SVD spectra of physical-orbital-restricted or bipartite fusion tensors should be reported only after
\begin{itemize}
	\item a specified continuous $\mathrm{GL}(N;\mathbb C)^m$ gauge fixing;
	\item a reduction to $S_m$ permutation-invariant form such as their unordered collection or species average.
\end{itemize}

\section{Stochastic Reconfiguration with Symmetry Pinning and Symmetry Projection}\label{app:sr_symmetry}
In this appendix we derive the stochastic-reconfiguration (SR) update~\cite{sorella1998green,sorella2001generalized} used to optimize the fusion tensor.

We first introduce a unified notation. Let $|\Psi_{\mathcal F}\rangle$ denote the state actually sampled in VMC, and let $K$ denote the operator used for imaginary-time evolution. The three cases used in this work are
\begin{equation}
	\begin{array}{c|c|c}
		\text{Case} & |\Psi_{\mathcal F}\rangle & K\\
		\hline
		\text{ordinary SR} & |\Psi_{\mathcal F}^{\text{phys}}\rangle & H\\
		\text{irrep-projected SR} & |\Psi_{\mathcal F;\chi}^{\text{phys}}\rangle=P_\chi|\Psi_{\mathcal F}^{\text{phys}}\rangle & H\\
		\text{symmetry-pinned SR} & |\Psi_{\mathcal F}^{\text{phys}}\rangle & H_\kappa=H+\kappa Q_\chi
	\end{array}
\end{equation}
For the symmetry-pinned case, let $G$ be an Abelian symmetry group and $\chi$ a one-dimensional irrep. A state in the $\chi$ sector satisfies
\begin{equation}
	U_g|\Psi\rangle=\chi(g)|\Psi\rangle.
\end{equation}
For a set of generators $g_a$ of $G$, define the positive-semidefinite penalty operator
\begin{equation}
	Q_\chi=\sum_a Q_{\chi,a},\quad Q_{\chi,a}:=(U_{g_a}-\chi(g_a)I)^\dagger(U_{g_a}-\chi(g_a)I).
\end{equation}
Then $\langle\Psi|Q_\chi|\Psi\rangle=0$ if and only if $|\Psi\rangle$ lies in the desired symmetry sector. For translation symmetry $G=\mathbb Z_{L_x}\times\mathbb Z_{L_y}$ with generators $T_\mu$ and characters $\lambda_\mu=\chi(T_\mu)$, this becomes
\begin{equation}
	Q_\chi=\sum_{\mu=x,y}Q_{\chi,\mu},\quad Q_{\chi,\mu}=(U_{T_\mu}-\lambda_\mu I)^\dagger(U_{T_\mu}-\lambda_\mu I).
\end{equation}
Using $|\lambda_\mu|=1$, one may also write
\begin{equation}
	Q_{\chi,\mu}=2I-\lambda_\mu U_{T_\mu}^\dagger-\lambda_\mu^*U_{T_\mu}.
\end{equation}
If $[H,U_g]=0$ for all $g\in G$, then $[H,Q_\chi]=0$. Hence $H_\kappa=H+\kappa Q_\chi$ does not deform the eigenvectors of $H$; it only lifts states outside the chosen sector. The parameter $\kappa>0$ controls the strength of this soft symmetry pinning.

We now derive the SR update for the given unified pair $(|\Psi_{\mathcal F}\rangle,K)$. Imaginary-time evolution gives
\begin{equation}
	e^{-\delta\tau K}|\Psi_{\mathcal F}\rangle=|\Psi_{\mathcal F}\rangle-\delta\tau K|\Psi_{\mathcal F}\rangle+\mathcal O(\delta\tau^2).
\end{equation}
\emph{TDVP searches for the infinitesimal parameter update $\delta\mathcal F$ whose tangent-space variation best approximates this imaginary-time-evolved vector}. Because the HyperDet wavefunction ansatz is holomorphic with respect to the fusion tensor:
\begin{equation}
	\partial_{\mathcal F_\mu^*}\Psi_{\mathcal F}^{\text{phys}}(\bm s)=0,
\end{equation}
the tangent space is spanned only by $\partial_{\mathcal F_\mu}|\Psi_{\mathcal F}^{\text{phys}}\rangle$ and not by independent $\partial_{\mathcal F_\mu^*}|\Psi_{\mathcal F}^{\text{phys}}\rangle$ directions. Here, holomorphicity refers to the variational parameters $\mathcal F$, \emph{not} to the physical coordinates of the many-body system. Therefore we can write the first-order expansion as
\begin{equation}
	|\Psi_{\mathcal F+\delta\mathcal F}\rangle=a_0|\Psi_{\mathcal F}\rangle+\sum_\nu\delta\mathcal F_\nu\cdot\partial_{\mathcal F_\nu}|\Psi_{\mathcal F}\rangle+|\Psi^\perp\rangle,
\end{equation}
where $|\Psi^\perp\rangle$ is orthogonal to $|\Psi_{\mathcal F}\rangle$ and to all tangent vectors $\partial_{\mathcal F_\nu}|\Psi_{\mathcal F}\rangle$. The coefficient $a_0$ dictates the normalization change.

Define the diagonal log-derivative operator
\begin{equation}
	\hat{\mathcal O}_\mu=\sum_{\bm s}|\bm s\rangle\mathcal O_\mu(\bm s)\langle\bm s|
\end{equation}
with $\mathcal O_\mu(\bm s)=\partial_{\mathcal F_\mu}\ln\Psi_{\mathcal F}(\bm s)=\partial_{\mathcal F_\mu}\Psi_{\mathcal F}(\bm s)/\Psi_{\mathcal F}(\bm s)$, then by construction,
\begin{equation}
	\hat{\mathcal O}_\mu|\Psi_{\mathcal F}\rangle=\partial_{\mathcal F_\mu}|\Psi_{\mathcal F}\rangle.
\end{equation}
Projecting the TDVP equation onto $\langle\Psi_{\mathcal F}|$ and defining the normalized expectation value $\langle\!\langle\hat A\rangle\!\rangle=\frac{\langle\Psi_{\mathcal F}|\hat A|\Psi_{\mathcal F}\rangle}{\langle\Psi_{\mathcal F}|\Psi_{\mathcal F}\rangle}$, we get,
\begin{equation}
	a_0+\sum_\nu\delta\mathcal F_\nu\langle\!\langle\hat{\mathcal O}_\nu\rangle\!\rangle=1-\delta\tau\langle\!\langle K\rangle\!\rangle.
\end{equation}
Alternatively, projecting onto $\langle\partial_{\mathcal F_\mu}\Psi_{\mathcal F}|=\langle\Psi_{\mathcal F}|\hat{\mathcal O}_\mu^\dagger$ gives
\begin{equation}
	a_0\langle\!\langle\hat{\mathcal O}_\mu^\dagger\rangle\!\rangle+\sum_\nu\delta\mathcal F_\nu\langle\!\langle\hat{\mathcal O}_\mu^\dagger\hat{\mathcal O}_\nu\rangle\!\rangle=\langle\!\langle\hat{\mathcal O}_\mu^\dagger\rangle\!\rangle-\delta\tau\langle\!\langle\hat{\mathcal O}_\mu^\dagger K\rangle\!\rangle.
\end{equation}
Eliminating $a_0$ yields the standard \emph{SR update rule}:
\begin{equation}
	\bm{\mathcal S}\cdot\delta\mathcal F=-\delta\tau \bm f^K,
	\label{eqA:sr_normal_equation}
\end{equation}
where
\begin{equation}
	\mathcal S_{\mu\nu}=\langle\!\langle\hat{\mathcal O}_\mu^\dagger\hat{\mathcal O}_\nu\rangle\!\rangle-\langle\!\langle\hat{\mathcal O}_\mu^\dagger\rangle\!\rangle\langle\!\langle\hat{\mathcal O}_\nu\rangle\!\rangle
\end{equation}
is the quantum-geometric covariance matrix, and
\begin{equation}
	f_\mu^{K}=\langle\!\langle\hat{\mathcal O}_\mu^\dagger K\rangle\!\rangle-\langle\!\langle\hat{\mathcal O}_\mu^\dagger\rangle\!\rangle\langle\!\langle K\rangle\!\rangle
\end{equation}
is the force vector associated with the operator $K$. In finite Monte Carlo sampling one solves the Tikhonov-regularized system
\begin{equation}
	\delta\mathcal F=-\delta\tau(\bm{\mathcal S}+\lambda I)^{-1}\bm f^{K},
\end{equation}
or applies an equivalent iterative solver such as MinSR~\cite{chen2024empowering}, SPRING~\cite{goldshlager2024kaczmarz}, or related stabilized SR solvers.

The modification to Monte Carlo estimators is minimal. Indeed, the sampled probability distribution is still
\begin{equation}
	p_{\mathcal F}(\bm s)=\dfrac{|\Psi_{\mathcal F}(\bm s)|^2}{\sum_{\bm s'}|\Psi_{\mathcal F}(\bm s')|^2}.
\end{equation}
The local energy associated with $K$ is
\begin{equation}
	E_{\text{loc}}^{K}(\bm s)=\dfrac{\langle\bm s|K|\Psi_{\mathcal F}\rangle}{\langle\bm s|\Psi_{\mathcal F}\rangle}=\sum_{\bm s'}K_{\bm s,\bm s'}\dfrac{\Psi_{\mathcal F}(\bm s')}{\Psi_{\mathcal F}(\bm s)}.
\end{equation}
So
\begin{align}
	\mathcal S_{\mu\nu}&=\mathbb E_{\bm s\sim p_{\mathcal F}}\left[\left(\mathcal O_\mu(\bm s)^*-\overline{\mathcal O_\mu^*}\right)\left(\mathcal O_\nu(\bm s)-\overline{\mathcal O_\nu}\right)\right],\\
	f_\mu^{K}&=\mathbb E_{\bm s\sim p_{\mathcal F}}\left[\left(\mathcal O_\mu(\bm s)^*-\overline{\mathcal O_\mu^*}\right)E_{\text{loc}}^{K}(\bm s)\right].
\end{align}
Equivalently, one may replace $E_{\text{loc}}^{K}(\bm s)$ by $E_{\text{loc}}^{K}(\bm s)-\overline{E_{\text{loc}}^{K}}$ in the last equation.

For symmetry-pinned SR, the only modification is the local energy:
\begin{equation}
	E_{\text{loc}}^{K}(\bm s)=E_{\text{loc}}^{H}(\bm s)+\kappa E_{\text{loc}}^{Q_\chi}(\bm s),
\end{equation}
where for translation generators,
\begin{equation}
	E_{\text{loc}}^{Q_{\chi,\mu}}(\bm s)=2-\lambda_\mu\dfrac{\langle\bm s|U_{T_\mu}^\dagger|\Psi_{\mathcal F}^{\text{phys}}\rangle}{\langle\bm s|\Psi_{\mathcal F}^{\text{phys}}\rangle}-\lambda_\mu^*\dfrac{\langle\bm s|U_{T_\mu}|\Psi_{\mathcal F}^{\text{phys}}\rangle}{\langle\bm s|\Psi_{\mathcal F}^{\text{phys}}\rangle}.
\end{equation}
Thus the penalty contribution can be evaluated from wavefunction ratios between translated configurations. This soft-pinning scheme keeps sampling the unprojected wavefunction $|\Psi_{\mathcal F}^{\text{phys}}\rangle$, but energetically guides it toward the desired symmetry sector.

For irrep-projected SR, the sampled state becomes
\begin{equation}
	|\Psi_{\mathcal F;\chi}^{\text{phys}}\rangle=P_\chi|\Psi_{\mathcal F}^{\text{phys}}\rangle,\quad P_\chi=\dfrac{1}{|G|}\sum_{g\in G}\chi(g)^*U_g.
\end{equation}
Let the symmetry act on the configuration basis as
\begin{equation}
	U_g|\bm s\rangle=\alpha_g(\bm s)|g\cdot\bm s\rangle.
\end{equation}
Then the projected amplitude is
\begin{equation}
	\Psi_{\mathcal F;\chi}^{\text{phys}}(\bm s)=\langle\bm s|P_\chi|\Psi_{\mathcal F}^{\text{phys}}\rangle=\dfrac{1}{|G|}\sum_{g\in G}\chi(g)\alpha_g^*(\bm s)\Psi_{\mathcal F}^{\text{phys}}(g\cdot\bm s).
\end{equation}
The overall factor $1/|G|$ cancels in Metropolis ratios, local energies, and log derivatives. Since $P_\chi$ is linear and independent of $\mathcal F$, the projected ansatz remains holomorphic $\partial_{\mathcal F_\mu^*}\Psi_{\mathcal F;\chi}^{\text{phys}}(\bm s)=0$, and the projected log derivative entering SR becomes
\begin{align}
	\mathcal O_{\mu;\chi}(\bm s) &\equiv \partial_{\mathcal F_\mu} \ln\Big(\Psi_{\mathcal F;\chi}(\bm s)\Big) \equiv \dfrac{1}{\Psi_{\mathcal F;\chi}(\bm s)} \partial_{\mathcal F_\mu}\Big(P_\chi\Psi_{\mathcal F}(\bm s)\Big)\nonumber\\
	&=\dfrac{1}{\Psi_{\mathcal F;\chi}(\bm s)} \langle P_\chi^\dagger\bm s|\partial_{\mathcal F_\mu}\Psi_{\mathcal F}\rangle\nonumber\\
	&=\dfrac{1}{\Psi_{\mathcal F;\chi}(\bm s)} \sum_{g\in G} \chi(g)\alpha_g^*(\bm s) \partial_{\mathcal F_\mu}\Psi_{\mathcal F}(g\cdot\bm s).
\end{align}
The projected local energy is
\begin{equation}
	E_{\text{loc};\chi}^{H}(\bm s)=\dfrac{\langle\bm s|H|\Psi_{\mathcal F;\chi}^{\text{phys}}\rangle}{\langle\bm s|\Psi_{\mathcal F;\chi}^{\text{phys}}\rangle}.
\end{equation}
The SR matrix and force are then computed from the same covariance formulas above, with $\Psi_{\mathcal F}(\bm s)$ replaced by $\Psi_{\mathcal F;\chi}^{\text{phys}}(\bm s)$, $\mathcal O_\mu(\bm s)$ replaced by $\mathcal O_{\mu;\chi}(\bm s)$, and $E_{\text{loc}}^K(\bm s)$ replaced by $E_{\text{loc};\chi}^{H}(\bm s)$.

The difference between the two symmetry strategies is therefore conceptual rather than algebraic. Symmetry projection exactly enforces the target irrep at every optimization step and samples from probability distribution proportional to $|\Psi_{\mathcal F;\chi}^{\text{phys}}|^2$. Symmetry pinning samples the unprojected state and uses $H_\kappa=H+\kappa Q_\chi$ to energetically suppress components outside the target sector. Projection is usually best for obtaining the most accurate sector-resolved energy and wavefunction overlap, while pinning can be useful when one wants the full unprojected fusion tensor to retain information about how physical symmetries act before projection.

\section{Gauge Fixing of SR Updates and Bipartite Fusion Matrix Diagnostics}\label{app:sr_gauge_fixing}
The HyperDet parametrization contains both continuous and discrete gauge redundancies. As shown in \cref{app:gauge_structure}, a physical-orbital-independent transformation $\bm G^{(\alpha)}\in\mathrm{GL}(N;\mathbb C)$ on the $\alpha$-th parton virtual legs sends
\begin{equation}
	\mathcal F(s_\ell)\mapsto\mathcal F'(s_\ell)=\mathcal F(s_\ell)\times_1\bm G^{(1)}\times_2\bm G^{(2)}\cdots\times_m\bm G^{(m)}
\end{equation}
for every physical orbital $s_\ell$, and changes every physical amplitude only by the configuration-independent normalization factor $\prod_{\alpha=1}^m\det\bm G^{(\alpha)}$. Therefore the normalized physical state depends not on the fusion tensor $\mathcal F$ itself, but on its equivalence class under the global gauge group $\mathrm{GL}(N;\mathbb C)^m$.

This redundancy affects VMC optimization in two related but distinct ways. First, the SR tangent space contains directions that only move $\mathcal F$ along its \emph{gauge orbits} and hence do not change the normalized physical wavefunction. These directions should be projected out to prevent numerical gauge drift. Second, quantities such as singular values of physical-orbital-restricted or bipartite fusion tensors are not invariant under the full non-unitary gauge group. Therefore the SVD spectra of the bipartite fusion matrix must be computed \emph{after} choosing a definite continuous gauge representative and then interpreted modulo the discrete species labels. In this section we separate these operations. We use a \emph{horizontal projection} to remove continuous gauge components from SR updates, a \emph{balanced canonical gauge} to define the SVD spectra of the bipartite fusion matrix, and permutation-invariant summaries to remove the discrete species-label ambiguity.

\subsection{Infinitesimal gauge directions}
Let $\bm K^{(\alpha)}\in\mathfrak{gl}(N;\mathbb C)$ denote the infinitesimal generators of the global gauge group. Setting $\bm G^{(\alpha)}=I+\epsilon\bm K^{(\alpha)}+\mathcal O(\epsilon^2)$ gives the infinitesimal gauge variation
\begin{equation}
	\delta_{\bm K}\mathcal F_{s_\ell;\ell_1,\ldots,\ell_m}=\sum_{\alpha=1}^{m}\sum_{a=1}^{N}K^{(\alpha)}_{\ell_\alpha a}\mathcal F_{s_\ell;\ell_1,\ldots,\ell_{\alpha-1},a,\ell_{\alpha+1},\ldots,\ell_m}.
\end{equation}
Equivalently, in compact form we have
\begin{equation}
	\delta_{\bm K}\mathcal F(s_\ell)=\sum_{\alpha=1}^{m}\mathcal F(s_\ell)\times_\alpha\bm K^{(\alpha)}.
\end{equation}
The finite gauge transformation multiplies the HyperDet amplitude by $\prod_\alpha\det\bm G^{(\alpha)}$. Expanding this determinant factor to first order in $\epsilon$, we obtain
\begin{equation}
	\delta_{\bm K}\Psi_{\mathcal F}^{\text{phys}}(\bm s)=\left[\sum_{\alpha=1}^{m}\tr\bm K^{(\alpha)}\right]\Psi_{\mathcal F}^{\text{phys}}(\bm s).
\end{equation}
Thus the logarithmic derivative associated with a pure gauge direction is
\begin{equation}
	\mathcal O_{\bm K}(\bm s)=\dfrac{\delta_{\bm K}\Psi_{\mathcal F}^{\text{phys}}(\bm s)}{\Psi_{\mathcal F}^{\text{phys}}(\bm s)}=\sum_{\alpha=1}^{m}\tr\bm K^{(\alpha)}.
\end{equation}
This is independent of the configuration $\bm s$. Consequently its centered value vanishes:
\begin{equation}
	\overline{\mathcal O}_{\bm K}(\bm s)=\mathcal O_{\bm K}(\bm s)-\langle\mathcal O_{\bm K}\rangle=0.
\end{equation}
Therefore exact SR has zero quantum-geometric norm along pure gauge directions.

In an exact calculation, the covariance matrix $\bm{\mathcal S}$ is singular on these directions and the force vector has no component along them. In finite Monte Carlo sampling, however, gauge components can still appear because of sampling noise, Tikhonov regularization, finite precision, and momentum-based optimizers. Removing these components stabilizes the tensor parametrization without changing the physical state.

\subsection{Frobenius-horizontal projection of SR updates}
At a fixed fusion tensor $\mathcal F$, the gauge variation defines the vertical gauge tangent space
\begin{equation}
	\mathcal V_{\mathcal F}:=\left\{\delta_{\bm K}\mathcal F:\bm K^{(\alpha)}\in\mathfrak{gl}(N;\mathbb C),\alpha=1,\ldots,m\right\}.
\end{equation}
We choose the horizontal subspace to be the Frobenius-orthogonal complement of $\mathcal V_{\mathcal F}$:
\begin{equation}
	\mathcal H_{\mathcal F}:=\left\{\delta\mathcal F:\Re\sum_{s_\ell=1}^{N_s}\langle\delta\mathcal F(s_\ell),\delta_{\bm K}\mathcal F(s_\ell)\rangle_F=0,\quad\forall\bm K^{(\alpha)}\right\}.
\end{equation}
Here $\langle A,B\rangle_F=\sum_{\bm\ell}A_{\bm\ell}^*B_{\bm\ell}$ is the Frobenius inner product on each physical-orbital-restricted tensor. This choice is not a new physical assumption. It is a numerical gauge choice specifying how to represent a physical TDVP update inside the redundant fusion-tensor coordinate system.

The projection can be implemented as a real least-squares problem. Choose a real basis $\{\bm B_p\}_{p=1}^{2N^2}$ of $\mathfrak{gl}(N;\mathbb C)$, for example
\begin{equation}
	\{\bm B_p\}=\{\bm E_{ab},i\bm E_{ab}\}_{a,b=1}^{N}.
\end{equation}
For each parton species $\alpha$ and basis generator $\bm B_p$, define the corresponding vertical vector $\bm v_{\alpha p}:=\mathrm{vec}_{\mathbb R}(\delta_{\bm B_p}^{(\alpha)}\mathcal F)$, where $\mathrm{vec}_{\mathbb R}$ means that the real and imaginary parts of the complex tensor are stacked into a real vector. Let $\bm A$ be the real design matrix whose columns are $\bm v_{\alpha p}$. For a raw SR update $\delta\mathcal F$, define $\bm b:=\mathrm{vec}_{\mathbb R}(\delta\mathcal F)$. The vertical component is the best Frobenius fit of $\bm b$ inside the gauge tangent space:
\begin{equation}
	\bm x_\star=\arg\min_{\bm x}\|\bm A\bm x-\bm b\|_2^2.
\end{equation}
Using a truncated singular-value pseudoinverse, the solution is
\begin{equation}
	\bm x_\star=\bm A^+\bm b.
\end{equation}
The horizontal update is then
\begin{equation}
	\mathrm{vec}_{\mathbb R}(\delta\mathcal F_{\mathrm H})=\bm b-\bm A\bm x_\star.
\end{equation}
Equivalently, $\delta\mathcal F_{\mathrm H}$ is the orthogonal projection of the raw update onto $\mathcal H_{\mathcal F}$.

The design matrix $\bm A$ is generally rank deficient. This is expected and harmless. For example, scalar generators $\bm K^{(\alpha)}=c_\alpha I$ with $\sum_\alpha c_\alpha=0$ can act trivially on the fusion tensor itself. The pseudoinverse should therefore be understood as a projection onto the column space of $\bm A$, not as a unique reconstruction of the gauge generators.

Now let us explain how to incorporate this Frobenius-horizontal projection into SR, MinSR~\cite{chen2024empowering}, and SPRING~\cite{goldshlager2024kaczmarz} optimizers. The notation below follows Ref.~\cite{goldshlager2024kaczmarz}. Let $\bar{\bm O}$ denote the centered logarithmic-derivative matrix sampled in VMC, with rows labeled by Monte Carlo samples and columns labeled by flattened fusion-tensor parameters. Let $\bar{\bm E}$ denote the centered local-energy vector, including the symmetry-pinning contribution if used. The ordinary SR equation can be written as
\begin{equation}
	\delta\mathcal F=-\delta\tau(\bm{\mathcal S}+\lambda I)^{-1}\bm f.
\end{equation}
After solving this equation, we replace the raw update by its \emph{horizontal projection} explained above:
\begin{equation}
	\delta\mathcal F\mapsto\delta\mathcal F_{\mathrm H}.
\end{equation}
Since the subtracted component is a pure gauge tangent, this operation changes only the tensor representative and not the normalized physical wavefunction to first order.

For MinSR-type dual solvers, the same projection is applied to the final primal update. For momentum-based methods, the momentum vector should also be kept in the current horizontal space. Operationally, before using a previous-step momentum vector $\bm\phi_{n-1}$ at the current tensor $\mathcal F_n$, we project it as
\begin{equation}
	\bm\phi_{n-1}\mapsto P_{\mathrm H}(\mathcal F_n)\bm\phi_{n-1}.
\end{equation}
For the SPRING-style update used in this work, define
\begin{equation}
	\bm T=\dfrac{1}{N_{\mathrm{MC}}}\bar{\bm O}\bar{\bm O}^\dagger+\lambda I.
\end{equation}
After transporting the previous momentum to the current horizontal space, the intermediate residual is
\begin{equation}
	\bm\zeta_n=\bar{\bm E}-\mu\dfrac{1}{N_{\mathrm{MC}}}\bar{\bm O}\bm\phi_{n-1}.
\end{equation}
The new unprojected momentum is
\begin{equation}
	\widetilde{\bm\phi}_n=\dfrac{1}{N_{\mathrm{MC}}}\bar{\bm O}^\dagger\bm T^{-1}\bm\zeta_n+\mu\bm\phi_{n-1}.
\end{equation}
We then project
\begin{equation}
	\bm\phi_n=P_{\mathrm H}(\mathcal F_n)\widetilde{\bm\phi}_n.
\end{equation}
The parameter update is
\begin{equation}
	\delta\mathcal F_n=-\eta_n\bm\phi_n,
\end{equation}
possibly with the norm cap used in the optimizer. Since multiplication by a scalar preserves horizontality, an additional projection of $\delta\mathcal F_n$ is optional once $\bm\phi_n$ has already been projected. In practice, keeping the final projection is a useful safety check, while removing it reduces the computational cost.

\subsection{Discrete species-permutation gauge in SR}

The horizontal projection above removes only the connected $\mathrm{GL}(N;\mathbb C)^m$ gauge tangent. The species-permutation factor $S_m$ in \cref{app:gauge_structure} is discrete and has no infinitesimal generator, so there is no vertical tangent direction to subtract. This does not mean that the discrete gauge redundancy is broken by SR. In exact arithmetic, the TDVP/SR equations are equivariant under relabeling parton species: if $\delta\mathcal F$ is the update at $\mathcal F$, then $\mathcal P_\pi\delta\mathcal F$ is the corresponding update at $\mathcal P_\pi\mathcal F$, provided the optimizer treats all species by the same numerical rule.

In a finite Monte Carlo run, random initialization, sampling noise, and linear-solver roundoff can select one labeled representative of the $S_m$ gauge orbit. Such a labeled representative need not be invariant under species exchange, but reflects only gauge fixing induced by the numerical trajectory, not physical symmetry breaking. For this reason we should not enforce species permutation by replacing $\mathcal F$ with an average over its permutations; that average is generally not gauge-equivalent to the original tensor and can change the physical wavefunction.

For diagnostics, the correct treatment is instead to quotient the labels. After balancing the continuous $\mathrm{GL}$ gauge, the SVD spectrum of the bipartite fusion matrix for a labeled bipartition $\alpha|\bar\alpha$ transforms as
\begin{equation}
	\lambda^{(\alpha)}(\mathcal P_\pi\mathcal F)=\lambda^{(\pi^{-1}(\alpha))}(\mathcal F).
\end{equation}
Thus an individual labeled panel is a useful consistency check, but not a physical observable. The species-averaged spectrum or another symmetric function of $\{\lambda^{(\alpha)}\}_{\alpha=1}^m$ is invariant under this discrete gauge redundancy. Therefore, similarity among the species-resolved spectra is more than a consistency check: for $m\ge3$, their near coincidence constitutes a nontrivial permutation \emph{isotropy} of the balanced one-leg spectral diagnostics. This spectral isotropy should not, however, be identified with a physical parton-permutation symmetry or with a permutation-valued stabilizer of the full fusion tensor, which would require a direct tensor-level stabilizer analysis.

\subsection{Balanced gauge for bipartite-fusion-matrix SVD spectra}
Horizontal projection removes gauge drift from the SR updates but does not select a definite representative of the $\mathrm{GL}(N;\mathbb C)^m$ gauge orbit. Since the singular values of the bipartite fusion matrix are invariant only under unitary, rather than general non-unitary gauge transformations, the SVD spectra reported in the main text are meaningful only after the continuous gauge freedom has been fixed in a reproducible way. We do this with a post-processing step that we call the \emph{balanced gauge}, which selects, for every parton leg, a representative on which that leg's marginal Gram matrix is proportional to the identity.

For each parton leg $\alpha$, define the $N\times N$ one-leg Gram matrix by contracting all legs except the parton virtual leg-$\alpha$ for the fusion tensor with its complex conjugate:
\begin{align}\label{eqA:one_leg_gram}
	[\bm R^{(\alpha)}]_{ab}&=\sum_{s_\ell=1}^{N_s}\sum_{\{\ell_\beta:\beta\neq\alpha\}}\mathcal F_{s_\ell;\ell_1,\ldots,\ell_{\alpha-1},a,\ell_{\alpha+1},\ldots,\ell_m}\nonumber\\
	&\qquad\qquad\times\mathcal F^*_{s_\ell;\ell_1,\ldots,\ell_{\alpha-1},b,\ell_{\alpha+1},\ldots,\ell_m}.
\end{align}
It is Hermitian and positive semidefinite, and it plays the role of the reduced density matrix of virtual leg $\alpha$: the spread of its eigenvalues measures how anisotropic the gauge of that leg is. Under a gauge transformation $\bm G^{(\alpha)}\in\mathrm{GL}(N;\mathbb C)$ acting on leg $\alpha$, the Gram matrix transforms covariantly,
\begin{equation}
	\bm R^{(\alpha)}\mapsto \bm G^{(\alpha)}\bm R^{(\alpha)}\bm G^{(\alpha)\dagger}.
\end{equation}

The balancing step diagonalizes the Gram matrix, $\bm R^{(\alpha)}=\bm V^{(\alpha)}\bm\Lambda^{(\alpha)}\bm V^{(\alpha)\dagger}$ with $\bm V^{(\alpha)}$ unitary, and applies to leg $\alpha$ the \emph{whitening transformation}
\begin{equation}
	\bm W^{(\alpha)}=\bm V^{(\alpha)}\mathrm{diag}\Big[\Big(\frac{\gamma^{(\alpha)}}{\max(\lambda^{(\alpha)}_i,\epsilon\lambda^{(\alpha)}_{\max})}\Big)^{1/2}\Big]\bm V^{(\alpha)\dagger},
	\label{eqA:whitener}
\end{equation}
where $\lambda^{(\alpha)}_i$ are the eigenvalues of $\bm R^{(\alpha)}$, $\lambda^{(\alpha)}_{\max}$ is the largest one, $\epsilon>0$ is a small relative floor introduced so that (nearly) vanishing eigenvalues do not cause the inverse square root to diverge --- in practice it is only activated for eigenvalues far below the dominant scale, and the results are insensitive to its precise value (we use $\epsilon\sim10^{-12}$) --- and
\begin{equation}
	\gamma^{(\alpha)}=\exp\Big\langle \ln\max\big(\lambda^{(\alpha)}_i,\epsilon\lambda^{(\alpha)}_{\max}\big)\Big\rangle_i
\end{equation}
is the geometric mean of the floored eigenvalues.

Since $\bm W^{(\alpha)}\bm R^{(\alpha)}\bm W^{(\alpha)\dagger}=\gamma^{(\alpha)}\bm I$, this whitening transformation makes the Gram matrix of leg-$\alpha$ proportional to the identity: after the transformation, every direction in the virtual-leg space carries, on average, the same weight, in direct analogy with the standard whitening of a covariance matrix in multivariate statistics. Crucially, this is a pure gauge choice: the transformed tensor lies on the same $\mathrm{GL}(N;\mathbb C)^m$ orbit and represents the same physical state. The overall scale $\gamma^{(\alpha)}$ is itself irrelevant, because multiplying a virtual leg by a scalar only rescales every physical amplitude by a configuration-independent factor. Here fixing it to the geometric mean simply keeps the procedure numerically stable.

Whitening leg-$\alpha$ generically changes the Gram matrices of the other legs, so the procedure must be repeated. Starting from the raw optimized tensor, we cycle over $\alpha=1,\ldots,m$, applying the whitener~\cref{eqA:whitener} to each leg in turn, until
\begin{equation}\label{eqA:balanced_condition}
	\bm R^{(\alpha)}\propto \bm I,\qquad \alpha=1,\ldots,m,
\end{equation}
holds within numerical tolerance. A convenient convergence measure is the
\emph{anisotropy}
\begin{equation}
	\eta^{(\alpha)}=\frac{\big\|\bm R^{(\alpha)}/\tr \bm R^{(\alpha)}-\bm I/N\big\|_F}{\|\bm I/N\|_F},
\end{equation}
which vanishes precisely when $\bm R^{(\alpha)}\propto \bm I$; we iterate until $\max_\alpha\eta^{(\alpha)}$ falls below a tolerance (typically $10^{-10}$). Between sweeps we also rescale the entire fusion tensor so that its average slice norm stays $\mathcal O(1)$; this common scalar is again gauge-inert.

\subsection{Completeness of the Balanced Gauge}
The purpose of balanced gauge is to make the diagnostics extracted from the optimized fusion tensor \emph{well-defined}. The two diagnostic objects used in the main text are invariant only under the unitary part of the gauge group $\mathrm{GL}(N;\mathbb C)^m$: the normalized singular-value spectrum of the bipartite fusion matrix, and the occupied-orbital projector $\bm P^{(\alpha)}$ whose Bott index yields the parton tuple. The lemma below shows that the postprocessing balanced gauge completely removes the non-unitary part of the gauge freedom. The two corollaries then guarantee that both diagnostics depend only on the gauge orbit of the optimized tensor.
\begin{lemma}[Completeness of the Balanced Gauge]
	Exact balanced representatives on the same $\mathrm{GL}(N;\mathbb C)^m$ gauge orbit differ only by virtual-leg unitary transformations and an overall scalar.
\end{lemma}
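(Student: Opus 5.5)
The natural route is the Kempf--Ness (operator-scaling) argument. Read the balanced condition~\cref{eqA:balanced_condition} as the vanishing of a moment map for the $\mathrm{GL}(N;\mathbb C)^m$ action on the fusion tensor. Then show that a suitable log-norm function is geodesically convex along the non-unitary directions of the gauge orbit. Two balanced points on the same orbit would both be critical points of this function, and convexity will force the non-unitary part connecting them to act on $\mathcal F$ as a scalar.

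\textbf{Reduction to positive gauges.} Let $\mathcal F$ and $\mathcal F'=\mathcal F\times_1\bm G^{(1)}\cdots\times_m\bm G^{(m)}$ both be balanced. Polar-decompose each factor as $\bm G^{(\alpha)}=\bm U^{(\alpha)}\bm P^{(\alpha)}$, with $\bm U^{(\alpha)}$ unitary and $\bm P^{(\alpha)}=e^{\bm H^{(\alpha)}}$ positive definite, so that $\bm H^{(\alpha)}$ is Hermitian. By the covariance $\bm R^{(\alpha)}\mapsto\bm G\bm R^{(\alpha)}\bm G^\dagger$, a unitary leg transformation preserves $\bm R^{(\alpha)}\propto\bm I$. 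Hence $\mathcal F\times_\alpha\bm P^{(\alpha)}$ is balanced as well, and it suffices to treat the case $\bm G^{(\alpha)}=e^{\bm H^{(\alpha)}}$.

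\textbf{Convex interpolation.} Assume, as for any tensor giving a nonzero wavefunction, that each $\bm R^{(\alpha)}$ is nonsingular, so that no flooring is needed. Set $\mathcal F_t=\mathcal F\times_1 e^{t\bm H^{(1)}}\cdots\times_m e^{t\bm H^{(m)}}$ for $t\in[0,1]$. Diagonalize each $\bm H^{(\alpha)}$ in its own unitary eigenbasis, with eigenvalues $h^{(\alpha)}_i$. Since the legs are independent, in the rotated coordinates $\tilde{\mathcal F}$ we get
\begin{equation*}
	\|\mathcal F_t\|_F^2=\sum_{s_\ell}\sum_{i_1,\ldots,i_m}|\tilde{\mathcal F}_{s_\ell;i_1,\ldots,i_m}|^2\exp\Big(2t\sum_\alpha h^{(\alpha)}_{i_\alpha}\Big).
\end{equation*}
Define $g(t)=\ln\|\mathcal F_t\|_F^2-\frac{2t}{N}\sum_\alpha\tr\bm H^{(\alpha)}$. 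The function $g$ is convex, because it is a log-sum-exp of functions linear in $t$ plus a linear term. Its derivative is
\begin{equation*}
	g'(t)=2\sum_\alpha\tr\Big[\bm H^{(\alpha)}\Big(\frac{\bm R^{(\alpha)}_t}{\tr\bm R^{(\alpha)}_t}-\frac{\bm I}{N}\Big)\Big],
\end{equation*}
using that every one-leg Gram matrix has the same trace $\|\mathcal F_t\|_F^2$. This vanishes at $t=0$ and at $t=1$, since both endpoints are balanced.

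\textbf{Rigidity (main obstacle).} A convex function with zero derivative at both endpoints is constant on $[0,1]$, so $g''\equiv0$. Now $g''$ is twice the variance of the exponent $\sum_\alpha h^{(\alpha)}_{i_\alpha}$ under the weights $|\tilde{\mathcal F}|^2$. Its vanishing means this exponent equals a single constant $c$ on the entire support of $\tilde{\mathcal F}$, for every $s_\ell$. It follows that $\mathcal F_1=e^{c}\mathcal F$. The positive parts therefore act on the tensor as a scalar, even though the individual $\bm P^{(\alpha)}$ need not be scalar; the non-scalar remainder is a stabilizer element.

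Restoring the unitary factors gives $\mathcal F'=e^{c}\,\mathcal F\times_1\bm U^{(1)}\cdots\times_m\bm U^{(m)}$, which is the claim. The delicate point is this support-wise rigidity argument, together with the regularity assumption that each $\bm R^{(\alpha)}$ is invertible so that the $\epsilon$-floor is inactive. I would state that assumption explicitly, and justify it from $\mathrm{rank}\,\bm A^{(\alpha)}=N$.
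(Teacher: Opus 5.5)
Your proof is correct and is essentially the paper's argument. You polar-decompose the gauge, interpolate along its Hermitian part, and use convexity of a norm function whose derivative vanishes at both balanced endpoints to force the non-unitary part to act on $\mathcal F$ as a scalar; the paper differs only cosmetically, stripping the scalar part into $z$ first and working with $f(t)=\|e^{t\widehat S}\mathcal F\|_F^2$ for traceless $\bm S^{(\alpha)}$ (concluding $\widehat S\mathcal F=0$ from $f''(0)=4\|\widehat S\mathcal F\|_F^2=0$) instead of your log-norm with a linear correction. Your extra invertibility assumption on $\bm R^{(\alpha)}$ is unnecessary, since exact balance of a nonzero tensor already gives $\bm R^{(\alpha)}=(\|\mathcal F\|_F^2/N)\bm I$, and $g''$ is four, not two, times the variance, which does not affect the argument.
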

\begin{proof}
	Let $\mathcal F$ and $\mathcal F'$ be two finite, balanced representatives on the same gauge orbit. Write each invertible virtual-leg transformation in polar form and separate out its scalar part, so that
	\begin{equation*}
		\mathcal F'=z\left(\bigotimes_{\alpha=1}^{m}\bm L^{(\alpha)}\right)e^{\widehat S}\mathcal F,
	\end{equation*}
	where $z=\prod_{\alpha=1}^m z_\alpha\in\mathbb C^\times$ collects the scalar (determinant) parts of the $m$ leg transformations, $\bm L^{(\alpha)}\in U(N)$ is the unitary part, and $\widehat S = \sum_\alpha\widehat S^{(\alpha)}$ is the non-unitary part with traceless Hermitian generators $\widehat S^{(\alpha)}$ that balancing is designed to remove. $\widehat S^{(\alpha)}$ acts on the virtual space as a Hermitian matrix $\bm S^{(\alpha)}$. Since scalars and virtual-leg unitaries preserve the balancing condition~\cref{eqA:balanced_condition}, both $\mathcal F$ and $e^{\widehat S}\mathcal F$ are balanced.

	Consider the squared Frobenius norm $f(t)=|e^{t\widehat S}\mathcal F|_F^2$ for $t\in[0,1]$, where $t$ interpolates between the two balanced tensors $\mathcal F$ and $\mathcal F'$. Its derivatives satisfy
	\begin{align*}
		f'(t) &=2\langle e^{t\widehat S}\mathcal F,\widehat S e^{t\widehat S}\mathcal F\rangle,\\
		f''(t) &=4|\widehat S e^{t\widehat S}\mathcal F|_F^2\geq0,
	\end{align*}
	so map $f$ is convex. For either balanced endpoint, we have $\operatorname{Tr}[\bm S^{(\alpha)}\bm R^{(\alpha)}]=0$ because $\bm R^{(\alpha)}\propto\bm I$ and $S^{(\alpha)}$ is traceless. Hence $f'(0)=f'(1)=0$. Convexity then forces $f'$ to vanish on $[0,1]$, so $f''\equiv0$ and in particular $\widehat S\mathcal F=0$. Therefore
	\begin{equation}
		\mathcal F'=z\left(\bigotimes_{\alpha=1}^{m}\bm L^{(\alpha)}\right)\mathcal F.
	\end{equation}
	and we are done with the proof.\hfill\qed
\end{proof}

\begin{corollary}
	Bipartite SVD spectra are well-defined up to the gauge orbit of the optimized fusion tensor.
\end{corollary}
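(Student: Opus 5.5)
The plan is to reduce the corollary to the Completeness Lemma plus two elementary invariance facts about singular values. Take two optimized fusion tensors on the same $\mathrm{GL}(N;\mathbb C)^m$ gauge orbit, and let $\mathcal F$ and $\mathcal F'$ be their exact balanced representatives produced by the whitening iteration. The Completeness Lemma gives
\begin{equation*}
\mathcal F'=z\Big(\bigotimes_{\alpha=1}^m\bm L^{(\alpha)}\Big)\mathcal F,
\end{equation*}
with $z\in\mathbb C^\times$ and $\bm L^{(\alpha)}\in U(N)$ acting on the virtual legs and independent of $s_\ell$. It therefore suffices to show that the normalized, physical-leg-averaged spectrum in Eq.~\eqref{eq:SVD spectrum of bipartite fusion matrix} is unchanged by such a transformation. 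If the diagnostic is the species average, or another symmetric function of $\{\lambda^{(\alpha)}\}$, it must also be unchanged by an $S_m$ relabeling.

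For fixed $s_\ell$ and a bipartition $\alpha|\bar\alpha$, I will use the mode-product identity from the Tucker appendix, $(\mathcal F_s\times_\alpha\bm M)_{(\alpha)}=\bm M\bm F^{(\alpha)}_s$. Applying it leg by leg gives
\begin{equation*}
\bm F'^{(\alpha)}_s=z\,\bm L^{(\alpha)}\bm F^{(\alpha)}_s\Big(\bigotimes_{\beta\neq\alpha}\bm L^{(\beta)}\Big)^{T}.
\end{equation*}
The right factor is a Kronecker product of unitaries, so it is unitary, and the left factor is unitary. Singular values are invariant under left and right unitary multiplication, so $\lambda'^{(\alpha)}_c(s_\ell)=|z|\,\lambda^{(\alpha)}_c(s_\ell)$ for every $s_\ell$ and every $c$. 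The common factor $|z|$ cancels under the normalization in Eq.~\eqref{eq:SVD spectrum of bipartite fusion matrix}, so the spectra are equal. The same holds for any per-slice normalization, since the factor is uniform across slices. Averaging over $s_\ell$ then preserves equality.

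For the discrete part I will invoke $\lambda^{(\alpha)}(\mathcal P_\pi\mathcal F)=\lambda^{(\pi^{-1}(\alpha))}(\mathcal F)$. A species average is invariant under this relabeling, so the reported quantity depends only on the orbit under the full group $\mathrm{GL}(N;\mathbb C)^m\rtimes S_m$.

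I expect the main obstacle to be checking the Kronecker-ordering and transpose conventions in the unfolding, and then confirming the result is still unitary. It is, because transposes and Kronecker products of unitaries are unitary. A second, less serious point is that the whitening iteration converges only to numerical tolerance. I will state the corollary for exact balanced representatives and note that, by continuity of singular values (Weyl's inequality), the residual anisotropy $\eta^{(\alpha)}$ induces only correspondingly small deviations.
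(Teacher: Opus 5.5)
Your proposal is correct and is essentially the paper's proof. Both use the Completeness Lemma to reduce to a residual transformation $z\bigl(\bigotimes_\alpha \bm L^{(\alpha)}\bigr)$, observe that each site-resolved unfolding is then multiplied on both sides by unitaries and by $z$, and conclude that the singular values scale by a common $|z|$ that drops out under normalization. Your explicit treatment of the $S_m$ relabeling (through invariance of the species average) and of the finite whitening tolerance are small additions that the paper leaves to the surrounding text or does not address.
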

\begin{proof}
	Under a residual transformation $\mathcal F\mapsto z(\bigotimes_\alpha\bm L^{(\alpha)})\mathcal F$, the site-resolved bipartite unfolding of any species transforms by left and right multiplication with unitary matrices together with an overall scalar $z$. Its singular values therefore change only by the common factor $|z|$, and the normalized spectra are unchanged. Consequently, the SVD spectra reported in the main text are independent of the initial gauge representative and are only functions of the gauge orbit of the optimized tensor.
\end{proof}
\begin{corollary}
	Occupied-orbital projectors and Bott tuples are well-defined up to the gauge orbit of the optimized fusion tensor.
\end{corollary}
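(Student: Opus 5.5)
By the preceding lemma, any two balanced representatives on the same gauge orbit are related by $\mathcal F'=z\,(\bigotimes_\alpha \bm L^{(\alpha)})\mathcal F$ with $z\in\mathbb C^\times$ and $\bm L^{(\alpha)}\in U(N)$, up to a discrete species relabeling $\sigma\in S_m$ that is handled separately. The plan is therefore to follow the occupied-orbital amplitude $\bm A^{(\alpha)}$ of Eq.~\eqref{eq:occupied-orbital amplitudes matrix A} through this residual transformation. The goal is to show that the projector $\bm P^{(\alpha)}$ of Eq.~\eqref{eq:occupied-orbital projector} changes at most by a conjugation with a block-diagonal, site-local unitary on $\mathcal H^{(\alpha)}_{\text{rec}}=\bigoplus_s\mathbb C^{r^{(\alpha)}_s}$. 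We then argue that such a conjugation leaves the Bott index unchanged.

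\textbf{Step 1: the SVD data.} For each site $s$, the mode-$\alpha$ unfolding transforms as
\begin{equation*}
\bm F'^{(\alpha)}_s = z\,\bm L^{(\alpha)}\bm F^{(\alpha)}_s\Big(\bigotimes_{\beta\neq\alpha}\bm L^{(\beta)}\Big)^{T}.
\end{equation*}
Consequently the singular values scale uniformly to $|z|\lambda^{(\alpha)}_c(s)$, and the ranks $r^{(\alpha)}_s$ are unchanged. The left singular vectors can be chosen as $\bm U'^{(\alpha)}_s=\bm L^{(\alpha)}\bm U^{(\alpha)}_s\bm W_s$. Here $\bm W_s$ is a unitary that is block diagonal within each degenerate singular-value subspace; it encodes the SVD phase and degeneracy ambiguity. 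Substituting into the definition of $\bm A^{(\alpha)}$ gives
\begin{equation*}
\bm A'^{(\alpha)}=|z|^{1/2}\,\bm W^{T}\bm A^{(\alpha)}\bm L^{(\alpha)T},\qquad \bm W=\bigoplus_s \bm W_s.
\end{equation*}
The factor $\bm W$ commutes with the diagonal weights $\sqrt{\lambda}$ because it acts only inside degenerate blocks. The right factor $\bm L^{(\alpha)T}$ is invertible.

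\textbf{Step 2: the projector.} The projector $\bm P^{(\alpha)}$ depends only on the column space of $\bm A^{(\alpha)}$, so right multiplication by an invertible matrix and multiplication by a scalar both drop out. This leaves
\begin{equation*}
\bm P'^{(\alpha)}=\bm W^{T}\bm P^{(\alpha)}\bar{\bm W}.
\end{equation*}
That is, $\bm P^{(\alpha)}$ is conjugated by a unitary that is block diagonal in the site label $s$. The discrete $S_m$ factor only relabels $\alpha$, so the tuple of Bott indices is permuted but not otherwise changed.

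\textbf{Step 3: the Bott index (main obstacle).} The Bott index is built from $\bm P e^{2\pi i X_\mu/L_\mu}\bm P$, where the position operators $X_\mu$ are diagonal in $s$ and act as scalars on each channel block $\mathbb C^{r_s}$. The site-local unitary $\bm W$ therefore commutes with $e^{2\pi i X_\mu/L_\mu}$. The compressed unitaries are then jointly conjugated by $\bm W$, and the quantity $\tr\log(\bm V\bm U\bm V^\dagger\bm U^\dagger)$ is unchanged.

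The delicate point is that a degenerate singular value could make the SVD basis choice discontinuous. My plan is to argue that the degeneracy freedom only enlarges $\bm W$ within a site block, so block locality is preserved and the argument above still applies. Genuine gauge dependence could also arise if Assumption~\ref{ass:parton insulator} fails, since the Bott index is only defined when the relevant almost-commuting gap stays open. Because $\bm W$ is an exact unitary equivalence, however, the branch gap is itself invariant. This shows that well-definedness holds whenever the Bott index exists at all.
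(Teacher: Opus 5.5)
Your proposal is correct and is essentially the paper's own argument: you use the balanced-gauge lemma to reduce to $z(\bigotimes_\alpha\bm L^{(\alpha)})$, show that $\bm A^{(\alpha)}$ and $\bm P^{(\alpha)}$ change only by a scalar, an invertible right factor, and a site-block-diagonal unitary commuting with $\bm X,\bm Y$ (so the compressed operators, their polar factors, and the Bott commutator are unitarily conjugated), and note that $S_m$ merely relabels the tuple. Your explicit handling of the SVD phase/degeneracy freedom and of rank invariance fills in details the paper leaves implicit; the only nit is that well-definedness of the Bott construction depends on nonsingular compressions and an open branch gap (both conjugation-invariant), whereas Assumption~\ref{ass:parton insulator} is needed only to identify the Bott integer with a Chern number.
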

\begin{proof}
	The reconstructed occupied-orbital amplitudes and projectors transform as
	\begin{align}
		\bm A'^{(\alpha)}&=\sqrt{|z|}\,\bm W^{(\alpha)}\bm A^{(\alpha)}\bm L^{(\alpha)T},\nonumber\\
		\bm P'^{(\alpha)}&=\bm W^{(\alpha)}\bm P^{(\alpha)}\bm W^{(\alpha)\dagger},
		\qquad \bm W^{(\alpha)}=\bigoplus_s\bm W_s^{(\alpha)},
	\end{align}
	where the $\bm W_s^{(\alpha)}$ are unitary channel transformations accounting for the local SVD frames. The exponentiated position operators $\bm X,\bm Y$ are scalar within each physical-orbital block and therefore commute with $\bm W^{(\alpha)}$. The compressed position operators (see below for details), their polar factors, and the Bott commutator thus transform by unitary conjugation, and every Bott eigenphase --- hence the Bott index --- is unchanged whenever the Bott construction is well defined. Species permutations only relabel the resulting tuple. Therefore the global balancing followed by the compact local-SVD gauge defines a parton Bott tuple up to species permutation.
\end{proof}

\section{Hyperparameter Settings and Energy Loss Curves}\label{app:hyperparameter_settings}
For both bosonic and fermionic extended Hubbard models introduced in the main text, the VMC optimization for the fusion tensor in our HyperDet wavefunction ansatz is performed using hyperparameters listed in~\cref{tabA:hyperparameter_settings}.
\begin{table}[H]
	\centering
	\begin{tabular}{cc}
		\hline\hline
		\textbf{VMC Hyperparameters} & \\
		\textsf{n\_burn\_in} & $4000$ \\
		\textsf{n\_samples} & $1000$ \\
		\textsf{n\_thinnings} & $1$ \\
		\textsf{n\_reburn\_in} & $0$ \\
		\textsf{n\_iters} & $800\sim1600$ \\
		\textbf{Optimizer Hyperparameters for SPRING~\cite{goldshlager2024kaczmarz}} & \\
		\textsf{learning\_rate} & $\mathbf{0.1\sim0.5}$ \\
		\textsf{learning\_rate\_decay} & $0\sim500$ \\
		\textsf{Tikhonov\_regularization} & $0.005\sim0.2$ \\
		KFAC-like norm constraint $C$~\cite{goldshlager2024kaczmarz} & $0.1\sim1.0$ \\
		momentum decay factor $\mu$~\cite{goldshlager2024kaczmarz} & $0.1\sim0.9$\\
		\hline\hline
	\end{tabular}
	\caption{Hyperparameter Settings used in VMC Optimization}
	\label{tabA:hyperparameter_settings}
\end{table}

\begin{figure*}
	\centering
	\includegraphics[width=1.0\textwidth]{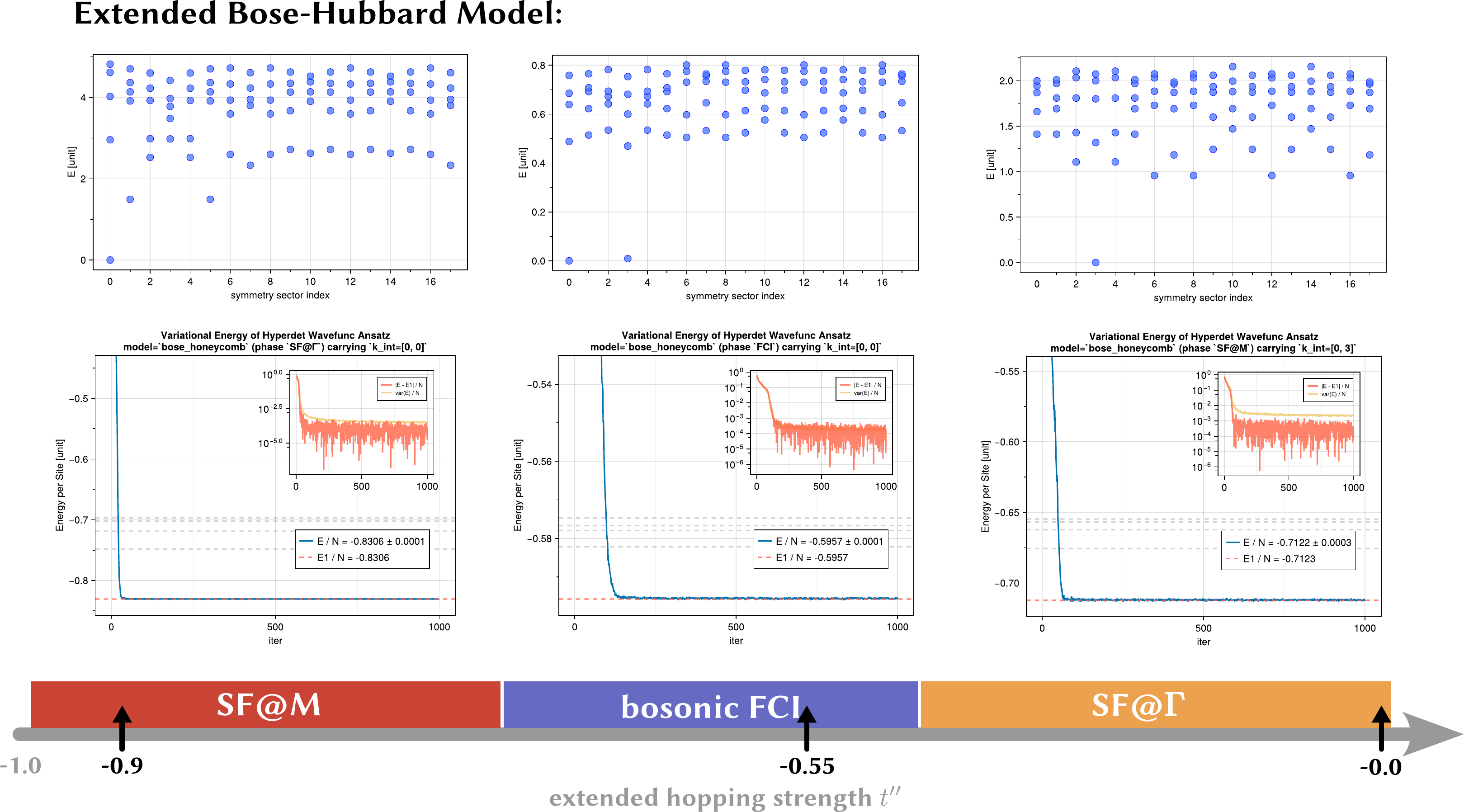}\\[2em]
	\includegraphics[width=1.0\textwidth]{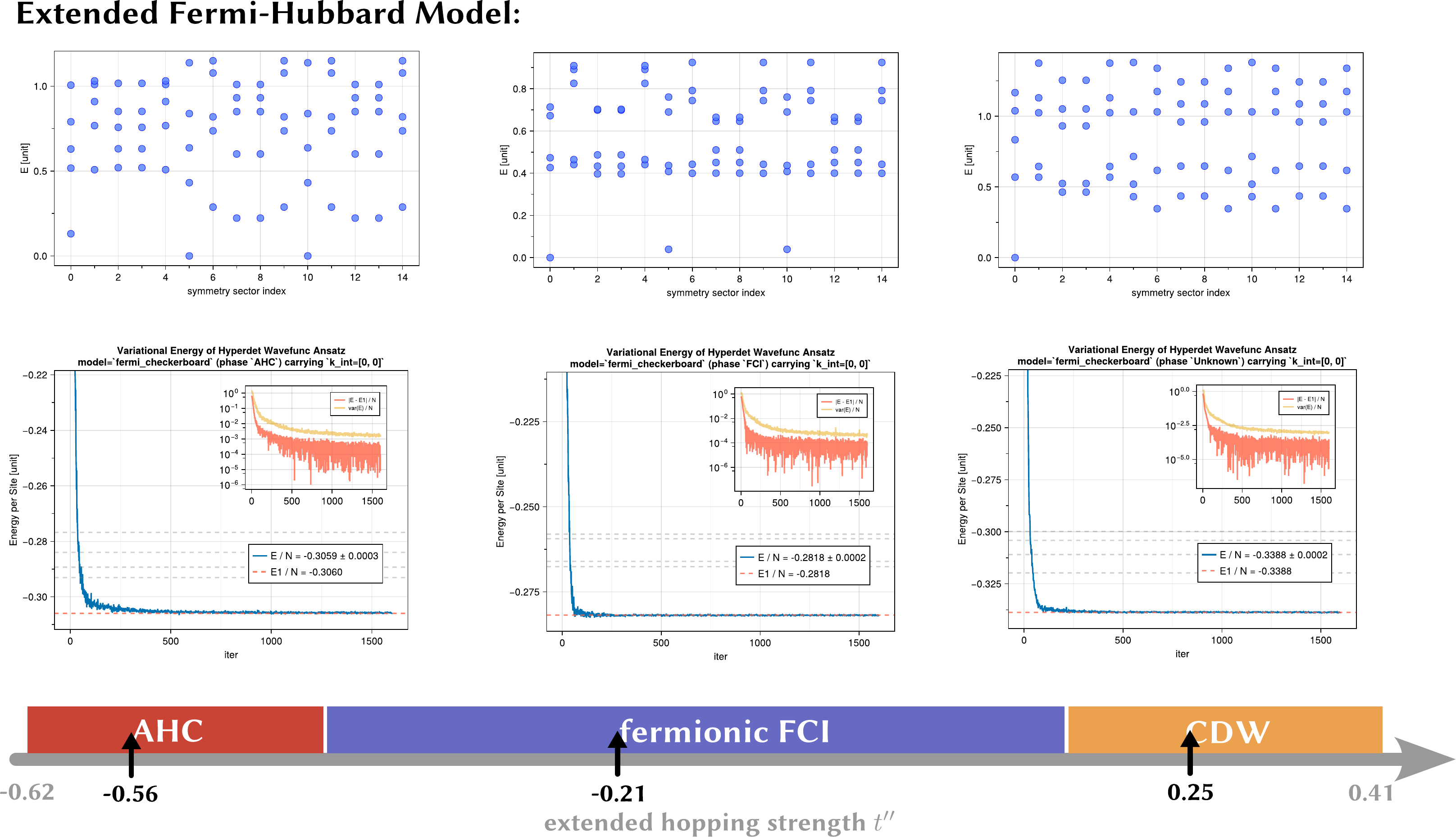}
	\caption{Selected ED Spectra and VMC Loss Curves for the Extended Bose-Hubbard (Top Panel) and the Fermi-Hubbard Model (Bottom Panel). For loss curves, we also add higher energy states at the target momentum sector as the gray dashed lines. The lowest-energy state with energy $E_1$ is marked red dashed lines. For the inset logarithmic plot of the loss curves, we add the energy variance and the error with respect to $E_1$ at the target sector. For each panel, we also add illustrative phase diagrams to mark the position of the three chosen values of the extended hopping strengths, highlighted with black arrows.}
	\label{figA:combined_ED_and_VMC_loss}
\end{figure*}

Here we list several ED spectra and VMC loss curves for both cases in~\cref{figA:combined_ED_and_VMC_loss}. Specifically, for the extended Bose-Hubbard model on the $3\times6\times2$ geometry, we take three typical values of the extended hopping strengths for the three phases determined by DMRG in Ref.~\cite{lu2025continuous}: $t''=-0.9$ for SF@M, $t''=-0.55$ for bosonic FCI, and $t''=-0.0$ for SF@$\Gamma$. For the extended Fermi-Hubbard model over $3\times5\times2$ geometry, we also take three typical values of the extended hopping strengths for the three phases identified using our ED calculations: $t''=-0.58$ for AHC, $t''=-0.21$ for fermionic FCI, and $t''=0.25$ for CDW.

It turns out that, under such aggressive hyperparameters in~\cref{tabA:hyperparameter_settings}, especially the large learning rate $\eta=0.1\sim0.5$ and small Tikhonov regularization $\lambda=0.005$, our HyperDet ansatz can converge to the exact ED ground state in about one hundred VMC iterations, far fewer than comparable ansatzes.

\section{Extended Data}\label{app:extended_data}
\subsection{VMC Performance for Other Geometries}
\begin{figure}[H]
	\centering
	\includegraphics[width=1.0\linewidth]{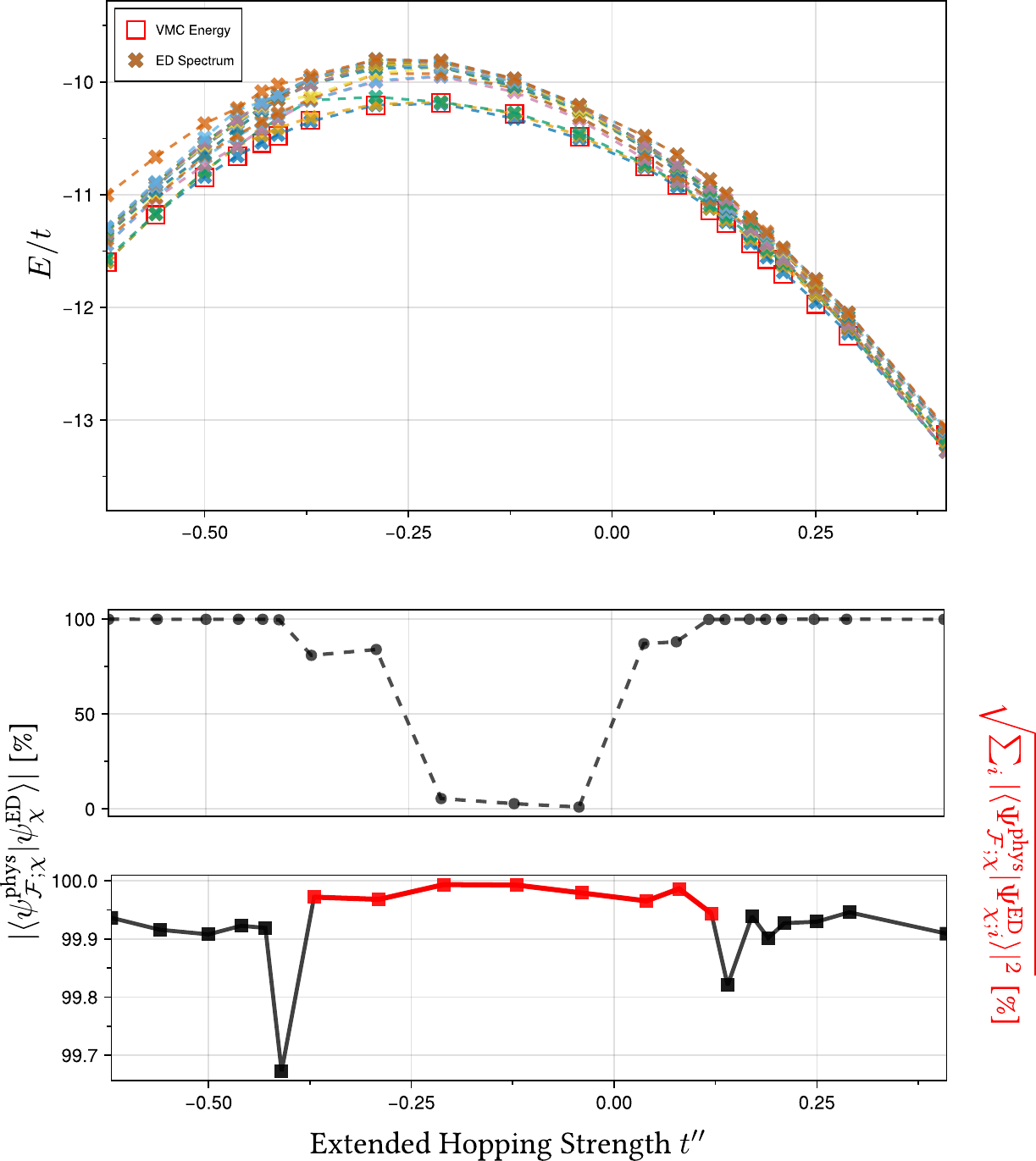}
	\caption{VMC energy vs ED spectrum (top), wavefunction overlap with respect to a single lowest-energy ground state (middle) and the ground-state manifold (bottom) on the $3\times6\times2$ geometry for the extended Fermi-Hubbard model at $\nu=1/3$. The solid black lines in the bottom plot are adapted from the data points in the middle plot for single lowest-energy ground state overlap. We only switch to the ground-state manifold overlap for parameter sweeps inside the FCI phase (the red lines).}
	\label{figA:combined_VMC_energy_and_Overlap_3_6}
\end{figure}

As a complement to the figure shown in the main text, here we present the VMC energy versus ED spectrum and the corresponding wavefunction overlap for the extended Fermi-Hubbard model~\cref{eqA:Fermi-Hubbard Hamiltonian} on a $3\times6\times2$ geometry in~\cref{figA:combined_VMC_energy_and_Overlap_3_6}. The fusion tensor optimization is performed over the same extended hopping strength sweep $t''\in[-0.62,0.41]$ as in the main text.

On this commensurate geometry, the three almost degenerate ground states collapse to the same momentum sector, and VMC will be ``trapped'' inside the ground state manifold as an artifact. To show this, we compute the ground-state manifold overlap $\sqrt{\sum_i|\langle\Psi_{\mathcal F;\chi}^{\text{phys}}|\Psi_{\chi;i}^{\text{ED}}\rangle|^2}$ inside the FCI phase, by projecting onto the lowest three ED states (red dots in~\cref{figA:combined_VMC_energy_and_Overlap_3_6}) rather than only the lowest state as we did outside of the FCI phase (black dots in~\cref{figA:combined_VMC_energy_and_Overlap_3_6}). Comparing the sharp drop of the overlap for the middle panel (black lines) with the exceptionally high overlap over $99.97\%$ for the red lines inside FCI phase in~\cref{figA:combined_VMC_energy_and_Overlap_3_6}, it is clear that such a drop is purely a limitation of the single-state diagnostic that can be removed.

The constantly high performance for the HyperDet wavefunction ansatz across different geometries thus establishes its strong expressiveness and applicability, although the phase boundaries shift due to the finite-size effects.

\subsection{Species-specific SVD Spectrum of Bipartite Fusion Matrix}
In the $m=3$ fermionic example of the main text, we present the species-averaged SVD spectrum of the bipartite fusion matrix to quotient out the $S_m$ parton species permutation redundancy. For completeness, here we show in~\cref{figA:fermionic_species_specific_bipartite_fusion_matrix_SVD_spectrum} the full three species-specific SVD spectra for the bipartite fusion matrix obtained from the parton bipartitions $\mathcal F(s_\ell)_{\ell_\alpha\mid\ell_{\bar\alpha}} \equiv \mathcal F_{s_\ell;\ell_1,\ldots,\ell_m}$.
\begin{figure*}
	\centering
	\includegraphics[width=1.0\linewidth]{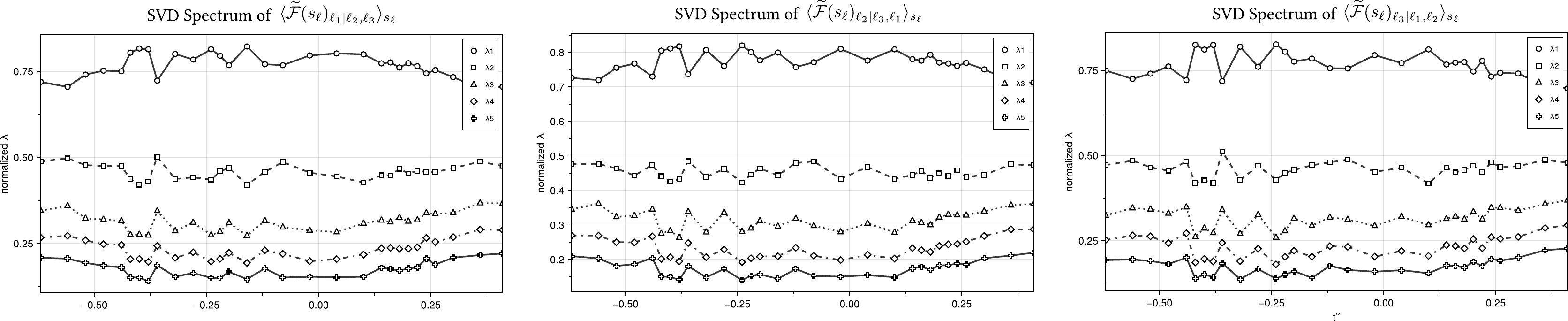}
	\caption{Species-specific SVD spectrum for the bipartite fusion matrix under the parton bipartitions $\mathcal F(s_\ell)_{\ell_1\mid\ell_2,\ell_3}$, $\mathcal F(s_\ell)_{\ell_2\mid\ell_1,\ell_3}$, and $\mathcal F(s_\ell)_{\ell_3\mid\ell_1,\ell_2}$ for the $m=3$ fermionic example.}
	\label{figA:fermionic_species_specific_bipartite_fusion_matrix_SVD_spectrum}
\end{figure*}

For $m=3$ there are three bipartitions: $\mathcal F(s_\ell)_{\ell_1\mid\ell_2,\ell_3}$, $\mathcal F(s_\ell)_{\ell_2\mid\ell_1,\ell_3}$, and $\mathcal F(s_\ell)_{\ell_3\mid\ell_1,\ell_2}$. Interestingly, we find that the three species-resolved SVD spectra nearly coincide throughout all three regimes and across independently optimized VMC tensors, which reveals a nontrivial emergent $S_3$-\emph{isotropy} of the balanced one-leg fusion spectra. Equality of these spectra alone does not establish an $S_3$-valued stabilizer of the complete fusion tensor; determining whether such an enlarged stabilizer is present requires a direct tensor-level fit.

\section{Full Virtual-lift Fitting Results}\label{app:full_translation_virtual_lifts}
In this section, we provide the complete translation virtual-lift fits underlying the results discussed in the main text. All fusion tensors analyzed below are obtained directly from the real-space VMC optimizations. Neither a projective translation algebra nor any translation-fractionalization phase is imposed during VMC optimization or the virtual-lift fitting.

\subsection{Translation Virtual-lift Fitting Protocol}
For each translation generator $T_\mu$ ($\mu=1,2$) and a fixed parton-species permutation $\sigma_{T_\mu}\in S_m$, we minimize the full-data residual
\begin{equation}\label{eqA:translation_virtual_lift_residual}
	\epsilon(\mathcal L_{T_\mu})^2=\frac{\displaystyle\sum_{s_\ell}\left\|
	\mathcal F(T_\mu\cdot s_\ell) - \mathcal L_{T_\mu}\cdot\mathcal F(s_\ell)\right\|_F^2}{\displaystyle\sum_{s_\ell}\left\|\mathcal F(T_\mu\cdot s_\ell)\right\|_F^2},
\end{equation}
where the translation-lifted action on the fusion tensor is
\begin{equation}
	\mathcal L_{T_\mu} = \Big(\bm G_{T_\mu}^{(1)},\ldots,\bm G_{T_\mu}^{(m)};\sigma_{T_\mu}\Big)\in\mathrm{GL}(N;\mathbb C)^m\rtimes S_m.
\end{equation}
For a fixed $\sigma_{T_\mu}$, the matrices $\bm G_{T_\mu}^{(\alpha)}$ are optimized by cyclic alternating least squares. Holding all other parton-leg transformations fixed reduces the update of each $\bm G_{T_\mu}^{(\alpha)}$ to an ordinary linear least-squares problem. We cycle over all parton legs from multiple initializations, rebalance scalar factors among the legs after every sweep, and enumerate all $m!$ species permutations. As a control of the non-compact $\mathrm{GL}$-freedom, we independently repeat the fitting under the restriction $\bm G_{T_\mu}^{(\alpha)}\in U(N)$. For the chosen optimized fusion tensors, the reported multi-start minima were further audited by joint Levenberg--Marquardt and L-BFGS optimizations: the three solvers agree within $2.33\times10^{-9}$ for every fixed-permutation translation branch so the minima is considered robust fits rather than local minima.

To examine translation fractionalization, we compose two fits using the physical relation
\begin{equation}
	T_1 T_2 T_1^{-1} T_2^{-1} = \bm e
\end{equation}
and evaluate
\begin{equation}\label{eqA:translation_lifted_commutator}
	\omega(\{T_1,T_2\}) = \mathcal L_{T_1}\mathcal L_{T_2}\mathcal L_{T_1}^{-1}\mathcal L_{T_2}^{-1}.
\end{equation}
The same species permutation $\sigma_{T_1}=\sigma_{T_2}=\sigma$ is used within each row of~\cref{tabA:translation_virtual_lift_fittings}, and all $\sigma\in S_m$ are enumerated. This scan over common permutations guarantees that the discrete part of~\cref{eqA:translation_lifted_commutator} closes. In particular, ``parton-identical'' means only $\sigma=\mathrm{id}$; the continuous matrices on different parton legs remain independently optimized.

Let $\bm M_\alpha$ denote the continuous matrix on parton leg $\alpha$ in the composed commutator~\cref{eqA:translation_lifted_commutator}. We extract its normalized diagonal-trace phase $\zeta_\alpha=\frac{\tr(\bm M_\alpha)}{\left|\tr(\bm M_\alpha)\right|}$. We separately quantify whether the \emph{full} matrices are close to the scalar stabilizer by
\begin{equation}\label{eqA:translation_centrality_error}
	\delta_{\mathrm{cent}}^2=\sum_{\alpha=1}^{m}\left\|\bm M_\alpha	- \frac{\tr(\bm M_\alpha)}{N}\bm I_N\right\|_F^2 \bigg/ \sum_{\alpha=1}^{m}	\left\|\bm M_\alpha\right\|_F^2.
\end{equation}
The numerical phase error $\delta_{\mathrm{diag}}$ compares only $(\zeta_1,\ldots,\zeta_m)$ with the common parton-filling target $(-1,-1)$ for $\nu=1/2$ bosons and $(e^{-2\pi i/3},e^{-2\pi i/3},e^{-2\pi i/3})$ for $\nu=1/3$ fermions. Thus off-diagonal components and nonuniform diagonal entries contribute to $\delta_{\mathrm{cent}}$, but are not counted again in $\delta_{\mathrm{diag}}$. These targets are used only to assess the fitted commutator \emph{after} optimization.

On a finite $L_1\times L_2$ torus, a scalar translation commutator $\mathcal L_{T_1}\mathcal L_{T_2}=z\mathcal L_{T_2}\mathcal L_{T_1}$ is constrained by
\begin{equation}\label{eqA:finite_torus_translation_fractionalization}
	z\in \mu_q=\Big\{e^{i2\pi k/q}:k=0,1,\ldots,q-1\Big\},
\end{equation}
where $q=\gcd(N,L_1,L_2)$ and $\mu_q$ denotes the $q$th roots of unity. This follows from the determinant constraint $z^N=1$ and the two boundary relations $z^{L_1}=z^{L_2}=1$. In addition, an ambiguity inside the scalar stabilizer must satisfy $\prod_{\alpha=1}^{m}z_\alpha=1$. Therefore the complete set of geometry-compatible translation-fractionalization tuples is
\begin{equation}\label{eqA:allowed_translation_fractionalization_tuples}
	\mathcal Z_{m,q} = \left\{(z_1,\ldots,z_m)\in\mu_q^m\;\middle|\;\prod_{\alpha=1}^{m}z_\alpha=1\right\}.
\end{equation}
The $4\times4\times2$ bosonic geometry has $N=8$ and $q=4$. For $m=2$, it permits all four tuples
\begin{equation}\label{eqA:allowed_bosonic_translation_tuples}
	\mathcal Z_{2,4} = \Big\{(1,1),(-1,-1),(i,-i),(-i,i)\Big\}.
\end{equation}
The species-identical subset contains $(1,1)$ and $(-1,-1)$, so the finite geometry alone does not select the nontrivial $\pi$-flux class. The $3\times6\times2$ fermionic geometry has $N=6$ and $q=3$. Defining $\omega=e^{-2\pi i/3}$, its nine allowed tuples are compactly written as
\begin{align}\label{eqA:allowed_fermionic_translation_tuples}
	\mathcal Z_{3,3} &= \Big\{(1,1,1),(\omega,\omega,\omega),(\omega^2,\omega^2,\omega^2)\Big\}\cup\mathsf{Perm}\{(1,\omega,\omega^2)\},
\end{align}
where the last set contains all six distinct permutations of the tuple $(1,\omega,\omega^2)$. Its species-identical subset therefore contains three different compatible classes. Importantly, neither the VMC optimization nor the virtual-lift fitting is restricted to the discrete sets~\cref{eqA:allowed_bosonic_translation_tuples,eqA:allowed_fermionic_translation_tuples}: the fitted matrices are fully complex and continuous. Consequently, recovering the expected tuple from among these geometry-allowed alternatives is a nontrivial result of energy minimization followed by the unconstrained fitting.

\begin{table*}
	\centering
	\caption{Complete common-permutation translation virtual-lift fittings for the raw VMC-optimized fusion tensors. The tuple $\sigma=(\sigma(1),\ldots,\sigma(m))$ specifies the parton-species permutation. The fitting residual is denoted as $\epsilon^{\mathrm{GL}}[\epsilon^{ U}]$, where the bracketed value is the unitary-constrained residual. The unconstrained phase tuple $\bm\zeta=(\zeta_1,\ldots,\zeta_m)$ and the two relation errors are computed from the GL fits. The phase target is $-1$ for the two bosonic tensors and $\omega=e^{-2\pi i/3}$ for the fermionic tensor. All fitted values are rounded to at most three decimal places.}
	\label{tabA:translation_virtual_lift_fittings}
	\resizebox{\textwidth}{!}{
		\begin{tabular}{c c c c c c c}
			\hline\hline
			optimized state & $\sigma\in S_m$ & $\epsilon_{T_1}^{\mathrm{GL}}[\epsilon_{T_1}^{ U}]$ & $\epsilon_{T_2}^{\mathrm{GL}}[\epsilon_{T_2}^{ U}]$ & $\bm\zeta$ & $\delta_{\mathrm{cent}}$ & $\delta_{\mathrm{diag}}$ \\
			\hline
			bosonic FCI, $t''=-0.60$ & $(1,2)$ & $0.021[0.021]$ & $0.021[0.021]$ & $(-1,-1)$ & $1.16\times10^{-4}$ & $1.36\times10^{-9}$ \\
			& $(2,1)$ & $0.287[0.290]$ & $0.287[0.290]$ & $(-0.179+0.984i,-0.179-0.984i)$ & $2.46\times10^{-3}$ & $1.28$ \\
			\hline
			bosonic SF@$\Gamma$, $t''=-0.45$ & $(1,2)$ & $0.045[0.045]$ & $0.045[0.045]$ & $(-1,-1)$ & $5.11\times10^{-4}$ & $2.32\times10^{-8}$ \\
			& $(2,1)$ & $0.493[0.515]$ & $0.494[0.515]$ & $(0.510+0.860i,0.658-0.753i)$ & $1.00$ & $1.78$ \\
			\hline
			fermionic FCI, $t''=-0.12$ & $(1,2,3)$ & $0.761[0.848]$ & $0.815[0.930]$ & $(-0.506-0.863i,-0.508-0.862i,-0.505-0.863i)$ & $0.715$ & $7.16\times10^{-3}$ \\
			& $(1,3,2)$ & $0.805[0.917]$ & $0.675[0.729]$ & $(-0.491-0.871i,0.525+0.851i,0.498+0.867i)$ & $0.671$ & $1.63$ \\
			& $(2,1,3)$ & $0.725[0.794]$ & $0.833[0.958]$ & $(-0.997-0.077i,0.431-0.902i,-0.498-0.867i)$ & $0.511$ & $0.761$ \\
			& $(2,3,1)$ & $0.795[0.897]$ & $0.764[0.849]$ & $(-0.266-0.964i,0.975-0.222i,-0.415+0.910i)$ & $0.922$ & $1.39$ \\
			& $(3,1,2)$ & $0.761[0.847]$ & $0.741[0.817]$ & $(-0.859+0.512i,-0.368+0.930i,-0.122+0.992i)$ & $0.484$ & $1.72$ \\
			& $(3,2,1)$ & $0.806[0.910]$ & $0.786[0.880]$ & $(-0.226-0.974i,-0.493-0.870i,-0.730-0.684i)$ & $0.602$ & $0.240$ \\
			\hline\hline
		\end{tabular}
	}
\end{table*}

\subsection{Translation Fractionalization Results}
For the bosonic FCI phase at $t''=-0.60$, both translations select the parton-identical branch from the optimized fusion tensor, with $\epsilon(\mathcal L_{T_1})=0.021$ and $\epsilon(\mathcal L_{T_2})=0.021$. The species-exchanging branches remain large residuals near $0.287$. Among the four tuples allowed by~\cref{eqA:allowed_bosonic_translation_tuples}, the independently fitted parton-identical lifts select
\begin{equation}
	\omega(\{T_1,T_2\})\simeq\big(-\bm I_N,-\bm I_N;\mathrm{id}\big),
\end{equation}
with $\delta_{\mathrm{cent}}=1.16\times10^{-4}$ and $\delta_{\mathrm{diag}}=1.36\times10^{-9}$. The agreement between the two species is therefore not an artifact of trace averaging: the full commutator matrices themselves are already nearly scalar. This provides a decisive numerical recovery of the expected $\pi$-flux translation fractionalization directly from the optimized fusion tensor.

The bosonic SF@$\Gamma$ phase at $t''=-0.45$ also selects parton-identical translation lifts, with residuals $0.045$ and $0.045$, while the species-exchanging branches stay near $0.493$. Its lifted commutator again selects the nontrivial allowed tuple $(-1,-1)$, with $\delta_{\mathrm{cent}}=5.11\times10^{-4}$ and $\delta_{\mathrm{diag}}=2.32\times10^{-8}$. Thus the virtual translation class persists across the studied FCI--SF@$\Gamma$ sweep, consistent with the field-theory descriptions~\cite{barkeshli2014continuous,barkeshli2015continuous}. Note: this observed parton translation fractionalization is a property of the projective symmetry group action. It does not indicate any presence of anyonic excitation or anyon fractionalization in the SF@$\Gamma$ phase, which is a trivial superfluid.

The fermionic tensor behaves qualitatively differently. The independently optimized generator minima occur at different species permutations:
\begin{align*}
	\min_{\sigma\in S_3}\epsilon(\mathcal L_{T_1})&=0.725,\quad\sigma_{T_1}=(2,1,3),\\
	\min_{\sigma\in S_3}\epsilon(\mathcal L_{T_2})&=0.675,\quad\sigma_{T_2}=(1,3,2).
\end{align*}
These two independently selected permutations are not combined into a commutator in~\cref{tabA:translation_virtual_lift_fittings}. More importantly, both residuals remain large, as do all other permutation branches. This failure is stable across the VMC history: three snapshots give $(\epsilon_{T_1},\epsilon_{T_2})=(0.735,0.680)$, $(0.727,0.677)$, and $(0.725,0.675)$, respectively. It is also reproduced by the three independent fitting algorithms described above.

Within the common parton-identical branch, the fitted commutator nevertheless gives $(\zeta_1,\zeta_2,\zeta_3)=(-0.506-0.863i,\,-0.508-0.862i,\,-0.505-0.863i)$, all close to $\omega=e^{-2\pi i/3}$. Among the nine tuples allowed by~\cref{eqA:allowed_fermionic_translation_tuples}, the diagonal traces therefore select the expected species-identical tuple $(\omega,\omega,\omega)$, with $\delta_{\mathrm{diag}}=7.16\times10^{-3}$. At the same time, the full-matrix centrality error is $\delta_{\mathrm{cent}}=0.715$, and the two residuals of the generator fits are $0.761$ and $0.815$. Therefore the diagonal phase agreement cannot be promoted to a faithful extraction of fermionic translation fractionalization from the present minimal lift. It is a suggestive trace-level feature that survives an otherwise unsuccessful structural fit.

\section{Parton Chern Numbers from the Optimized Fusion Tensor}\label{app:parton_chern_numbers}
In this section we turn the optimized fusion tensor into a quantitative diagnostic of parton-band topology. We construct an occupied-orbital projector for each parton species and extract its topological invariant, the parton Chern number. The calculation requires no translation virtual-lift fit and no assumed parton band structure, and applies to the optimized fusion tensors of both FCI phases analyzed in the main text: $t''=-0.60$ on the $4\times4\times2$ geometry for the bosonic case, and $t''=-0.12$ on the $3\times6\times2$ geometry for the fermionic case.

\subsection{Bott Index and Parton Chern Numbers}
The determinant expansion re-writing of the HyperDet~\cref{eqA:determinant expansion of HyperDet wavefunction} proves that a nonzero wavefunction requires the occupied-orbital amplitudes to have $\mathop{\mathrm{rank}}\bm A^{(\alpha)}=N$ and that its columns define an exact reconstructed Slater realization. We thus can use the orthonormal polar frame to construct the \emph{occupied-orbital projector}:
\begin{align}\label{eqA:occupied orbital projector}
	\bm Q^{(\alpha)}&=\bm A^{(\alpha)}\bigl(\bm A^{(\alpha)\dagger}\bm A^{(\alpha)}\bigr)^{-1/2},\nonumber\\
	\bm P^{(\alpha)}&=\bm Q^{(\alpha)}\bm Q^{(\alpha)\dagger}=\bm A^{(\alpha)}\big(\bm A^{(\alpha)\dagger}\bm A^{(\alpha)}\big)^{-1}\bm A^{(\alpha)\dagger}.
\end{align}
For the optimized fusion tensor in the bosonic and fermionic cases, the ranks of the constructed projector are $N=8$ and $N=6$, respectively.

The Chern number of the projector $\bm P^{(\alpha)}$ can be computed over the 2D grids of boundary twists, but that would require additional expensive simulation of VMC data over this twist grid. Instead, we use the Bott index, the standard real-space proxy for the Chern number of a lattice projector. In the real-space physical-leg basis, we define the exponentiated position operators
\begin{equation}\label{eqA:parton_position_unitaries}
	\begin{aligned}
		\bm X &= \sum_{s_\ell=1}^{N_s}\sum_{c=1}^{r^{(\alpha)}_{s_\ell}} e^{2\pi i x_{s_\ell}/L_1} |s_\ell,c\rangle\langle s_\ell,c|,\\
		\bm Y &= \sum_{s_\ell=1}^{N_s}\sum_{c=1}^{r^{(\alpha)}_{s_\ell}} e^{2\pi i y_{s_\ell}/L_2} |s_\ell,c\rangle\langle s_\ell,c|,
	\end{aligned}
\end{equation}
where $(x_{s_\ell},y_{s_\ell})$ are the crystal coordinates of physical orbital $s_\ell$ folded onto the $L_1\times L_2$ torus. The matrices $\bm X$ and $\bm Y$ are exactly unitary and, crucially, scalar within each physical-orbital block, since all components $c$ at the same physical orbital share the same phase. This property is what makes the unitary gauge freedom below harmless.

To measure the topology of the occupied subspace, we complete the compressed position operators by the identity on the unoccupied complement and require that their restrictions to the occupied subspace be nonsingular before taking their unitary polar factors:
\begin{equation}\label{eqA:parton_bott_compression}
	\begin{aligned}
		\widetilde{\bm U}_{\bm X}^{(\alpha)} &= \bm P^{(\alpha)}\bm X\bm P^{(\alpha)} + (\bm 1-\bm P^{(\alpha)}), \\
		\widetilde{\bm V}_{\bm Y}^{(\alpha)} &= \bm P^{(\alpha)}\bm Y\bm P^{(\alpha)} + (\bm 1-\bm P^{(\alpha)}),
	\end{aligned}
\end{equation}
and replace each by its unitary polar factor, i.e., the nearest unitary,
\begin{equation}
	\bm U_{\bm X}^{(\alpha)}=\mathop{\mathrm{polar}}(\widetilde{\bm U}_{\bm X}^{(\alpha)}),\quad \bm V_{\bm Y}^{(\alpha)}=\mathop{\mathrm{polar}}(\widetilde{\bm V}_{\bm Y}^{(\alpha)})
\end{equation}
where for matrix $\bm Z=\bm U_Z\bm\Sigma_Z\bm V_Z^\dagger$ its polar factor is defined as $\mathop{\mathrm{polar}}(\bm Z)=\bm U_Z\bm V_Z^\dagger$. The product of these polar factors defines a Wilson-loop-like commutator:
\begin{equation}\label{eqA:parton_bott_index}
	\bm W^{(\alpha)}=\bm U_{\bm X}^{(\alpha)}\bm V_{\bm Y}^{(\alpha)}\bm U_{\bm X}^{(\alpha)\dagger}\bm V_{\bm Y}^{(\alpha)\dagger},
\end{equation}
and the \emph{Bott index} of the occupied-orbital projector is the total winding of its principal eigenphases ($\vartheta^{(\alpha)}_j\in(-\pi,\pi]$)
\begin{equation}
	\mathop{\mathsf{Bott}}(\bm P^{(\alpha)})=\frac{1}{2\pi}\Im\mathop{\mathrm{Tr}}\log\bm W^{(\alpha)}=\frac{1}{2\pi}\sum_{j=1}^{M_\alpha}\vartheta^{(\alpha)}_j,
\end{equation}

Every factor in the Bott commutator is unitary, so its determinant is one. If no eigenphase lies on the principal-logarithm branch cut, the sum of principal eigenphases is an integer multiple of $2\pi$. This integer can change only if an eigenphase crosses $\pm\pi$, and it is therefore constant under any continuous deformation of $\bm U^{(\alpha)}$ and $\bm V^{(\alpha)}$ that keeps the polar construction well-defined, and the \emph{branch gap} $\Delta_\vartheta\equiv\min_j(\pi-|\vartheta^{(\alpha)}_j|)$ remains open.

Interpreting this finite-system integer as a bulk Chern number for the auxiliary partons requires additional locality assumptions. We assume that the reconstructed occupied subspaces admit local or sufficiently quasilocal insulating realizations with smooth, gapped families under auxiliary boundary twists (as \cref{ass:parton insulator} stated in the main text), in the same setup as Loring and Hastings has assumed~\cite{hastings2010almost,hastings2011topological} in the operator-level $C^*$-algebraic formulation of Kitaev's construction, which identifies this integral Bott index with the first Chern number of the occupied bundle $\bm P^{(\alpha)}$
\begin{equation}\label{eqA:Bott index to Chern number}
	\mathsf{Bott}(\bm P^{(\alpha)}) = \mathcal C^{(\alpha)}.
\end{equation}
Importantly, the interpretation of $\bm P^{(\alpha)}$ as the one-body correlation matrix Eq.~\eqref{eq:one-body correlation matrix meaning of occupied-orbital projector} in the main text does \emph{not} require an appeal to the factorized limit, nor does it require identifying the SVD amplitudes $\bm A^{(\alpha)}$ with the occupied parton orbitals fed into the original fusion tensor. Therefore, the correspondence~\cref{eqA:Bott index to Chern number} remains valid in the generic case under \cref{ass:parton insulator}. We therefore record, alongside the Bott index, the distance to the nearest integer and the branch gap as diagnostics of how well defined the assignment is.

\begin{table}[H]
	\centering
	\begin{tabular}{lc|cccccc}
		\hline\hline
		Example & species-$\alpha$ & $\mathop{\mathrm{rank}}\bm P^{(\alpha)}$ & $\mathop{\mathsf{Bott}}(\bm P^{(\alpha)})$ & branch gap $\Delta_\vartheta$ \\
		\hline
		bosonic & $1$ & $8$ & $1.0$ & $2.2951$ \\
		& $2$ & $8$ & $1.0$ & $2.2931$ \\
		\hline\hline
		fermionic & $1$ & $6$ & $1.0$ & $0.6610$ \\
		& $2$ & $6$ & $1.0$ & $0.5750$ \\
		& $3$ & $6$ & $1.0$ & $0.7974$ \\
		\hline\hline
	\end{tabular}
	\caption{Parton Bott-index analysis of the optimized bosonic FCI tensor at $t''=-0.60$ on the $4\times4\times2$ geometry and the fermionic FCI tensor at $t''=-0.12$ on the $3\times6\times2$ geometry.}
	\label{tabA:parton_bott_results}
\end{table}
All five Bott values, for the two bosonic and three fermionic species are reported in~\cref{tabA:parton_bott_results}, within $5.6\times10^{-16}$ of an integer. The branch gaps are all larger than $0.575$, so the principal logarithm is far from its branch cut and the assignment is unambiguous. For the chosen optimized fusion tensors, we identify a tuple of parton Chern numbers $(\mathcal C^{(1)},\mathcal C^{(2)})=(1,1)$ in the bosonic phases, and $(\mathcal C^{(1)},\mathcal C^{(2)},\mathcal C^{(3)})=(1,1,1)$ in the fermionic phase. These parton Chern numbers match the parton topological content prescribed by the low-energy field theory descriptions of the $\nu=1/2$ bosonic FCI and the $\nu=1/3$ fermionic FCI as the Chern-Simons levels, respectively. We emphasize that in those treatments the parton Chern numbers are input as an assumption, whereas here they are extracted directly from the optimized fusion tensor without theoretical input.

\subsection{Gauge Robustness Audits}
We close this section by discussing the gauge freedom of the construction, which has two distinct aspects that should not be conflated. We write $\bm A^{(\alpha)}_{s_\ell}$ for the block of rows $(s_\ell,c)$ of $\bm A^{(\alpha)}$ belonging to a fixed physical orbital $s_\ell$.
\begin{itemize}
	\item \emph{Unitary SVD gauge.} At a fixed $s_\ell$, the retained singular subspace carries a $U(r^{(\alpha)}_{s_\ell})$ freedom: a unitary $\bm Q_{s_\ell}$ acting as $\bm A^{(\alpha)}_{s_\ell}\mapsto\bm Q_{s_\ell}\bm A^{(\alpha)}_{s_\ell}$ leaves $\bm P_{s_\ell}^{(\alpha)}$ invariant. Because $\bm X$ and $\bm Y$ are scalar within each $s_\ell$-block, the occupied projector transforms covariantly and every Bott eigenphase is intrinsically unchanged.
	\item \emph{Non-unitary refactorizations.} More generally, any invertible but not necessarily unitary $\bm Q_{s_\ell}$ with $\bm A^{(\alpha)}_{s_\ell}\mapsto\bm Q_{s_\ell}\bm A^{(\alpha)}_{s_\ell}$ and $\bm B^{(\bar\alpha)}_{s_\ell}\mapsto\bm Q_{s_\ell}^{-T}\bm B^{(\bar\alpha)}_{s_\ell}$ also exactly preserves the mode-$\alpha$ unfolding. However, such a map generically changes the column spaces of $\bm A^{(\alpha)}_{s_\ell}$, hence the projectors $\bm P^{(\alpha)}$ and, in principle, the Bott index. The parton Chern numbers are therefore not invariant under the full $\mathrm{GL}$ refactorization freedom; they are defined only after fixing a convention. Our convention is the \emph{compact local-SVD gauge}, i.e., the canonical Hermitian choice obtained directly from the SVD of $\bm F_{s_\ell}^{(\alpha)}$, in which the retained columns of each $\bm A^{(\alpha)}_{s_\ell}$ are orthogonal and weighted by $\sqrt{\lambda^{(\alpha)}_c(s_\ell)}$.
\end{itemize}
This refactorization/SVD gauge freedom must be carefully distinguished from the wavefunction gauge group $\mathsf{GG}=\mathrm{GL}(N;\mathbb C)^m\rtimes S_m$. The latter leaves the physical state invariant and, after balancing, acts only by residual unitary transformations on the virtual parton legs as we have proved before. It therefore leaves the occupied-orbital projector invariant up to unitary conjugation and cannot change the Bott index. The refactorization freedom, by contrast, is a redundancy of the auxiliary-channel \emph{interpretation} for a fixed optimized physical state: a non-unitary $\bm Q_{s_\ell}$ can change the column space of $\bm A^{(\alpha)}_{s_\ell}$ and hence $\bm P^{(\alpha)}$. The reconstructed parton Chern number is therefore defined with respect to the compact local-SVD convention adopted here, and it is this convention, rather than the physical wavefunction, that fixes its value. The stress test below establishes that the reported integers are stable throughout the tested neighborhood of that convention.

We therefore test the stability of the Bott index under the non-unitary refactorizations (ii). We apply positive-definite stress maps $\bm Q_{s_\ell}=e^{s\bm H_{s_\ell}}$ to $\bm A^{(\alpha)}$ at increasing strengths $s=0.1$ to $0.5$, where $\bm H_{s_\ell}$ is a normalized random Hermitian matrix at physical site $s_\ell$. As guaranteed by the transformation law, the tensor reconstruction remains exact at machine precision, while the Bott analysis returns the pair $(1,1)$ and $(1,1,1)$ for bosonic and fermionic case in every trial under a minimum branch gap above $1.76$ and $0.46$ respectively. These reported parton Chern numbers are thus stable throughout the tested gauge neighborhood.

\section{Exact-Diagonalization Diagnostics for the Fermionic Phase Diagram}\label{app:fermion_ed_diagnostics}
Here we summarize the ED diagnostics used to identify the FCI, AHC, and the candidate CDW phases of the extended Fermi-Hubbard model on the checkerboard lattice at $\nu=1/3$ filling:
\begin{align}
	H &= -\sum_{\langle i,j\rangle}t e^{i\phi_{i,j}}c_{i}^{\dagger}c_{j}- \sum_{\langle\langle i,j\rangle\rangle}t'_{i,j} c_{i}^{\dagger}c_{j}- \sum_{\langle\langle\langle i,j\rangle\rangle\rangle}t'' c_{i}^{\dagger}c_{j} + \text{h.c.}\nonumber\\
	& + V_{1}\sum_{\langle i,j\rangle}n_{i}n_{j} + V_{2}\sum_{\langle\langle i,j\rangle\rangle}n_{i}n_{j} + V_{3}\sum_{\langle\langle\langle i,j\rangle\rangle\rangle}n_{i}n_{j}.\label{eqA:Fermi-Hubbard Hamiltonian}
\end{align}
As emphasized in the main text, unlike the flatband limit widely used in the literature~\cite{sun2011nearly,sheng2011fractional} where $t=1$ as the energy unit and
\begin{equation}\label{eqA:flatband limit parameter settings}
	t'_{1,2}=\pm1/(2+\sqrt2),\quad t''=1/(2+2\sqrt2),\quad\phi_{i,j}=\pm\pi/4,
\end{equation}
we push the model \emph{away} from this flatband limit, by choosing modified next-nearest-neighbor hoppings
\begin{equation}\label{eqA:new t_prime}
	t'_{1,2}=\pm1.4/(2+\sqrt2),
\end{equation}
and consider the strong-coupling limit with all three interacting terms present: $(V_1,V_2,V_3)=(5.6,1.26,0.56)$. This strong-coupling regime has not been studied previously and therefore provides a robust testbed for our HyperDet wavefunction ansatz. The single-particle band structure differences are shown in~\cref{figA:checkerboard_model_band_structure_comparisons}.
\begin{figure}[H]
	\centering
	\includegraphics[width=0.9\linewidth]{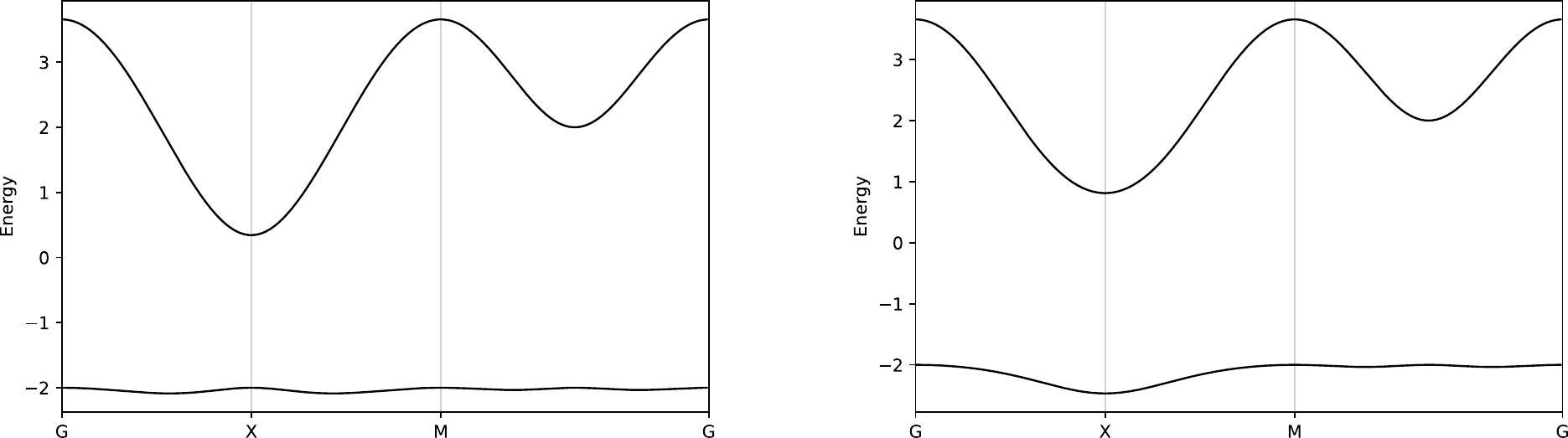}
	\caption{Single-particle Band Structure of the Extended Checkerboard Model~\cref{eqA:Fermi-Hubbard Hamiltonian}. Left: the flatband limit~\cref{eqA:flatband limit parameter settings} commonly used in FCI studies~\cite{sun2011nearly,sheng2011fractional}. Right: the parameter regime we studied here, with the modified extended hopping strengths~\cref{eqA:new t_prime}.}
	\label{figA:checkerboard_model_band_structure_comparisons}
\end{figure}

Our ED and HyperDet VMC calculations are all performed under this new unexplored parameter set. But before presenting the results, we clarify that
\begin{itemize}
	\item This section does \emph{not} aim to provide an accurate \emph{thermodynamic} phase diagram of model~\cref{eqA:Fermi-Hubbard Hamiltonian}, but only tries to provide numerical evidence sufficient to label the phases specific to the $3\times5\times2$ geometry we focused on VMC calculations in the main text. Our goal is to show that our HyperDet wavefunction ansatz is capable of capturing the ED ground states across these phases at this geometry.
\end{itemize}
In particular, different geometries can shift the positions of phase boundaries, and may even modify the phase content themselves.

\Cref{figA:fermi_Hubbard_ED_spec_sweep} shows the twelve lowest ED spectra along the sweep of the extended hopping strength $t''\in[-0.62,0.41]$. It is clear that in the middle region of the sweep, for approximately $t''\in[-0.42,0.15]$, there is a nearly three-fold degenerate ground state manifold in green, blue, and orange colors separated from higher excited states, which undergoes a same-sector level crossing at the left boundary, and a distinct-sector roton softening at the right boundary.
\begin{figure}[H]
	\centering
	\includegraphics[width=1.0\linewidth]{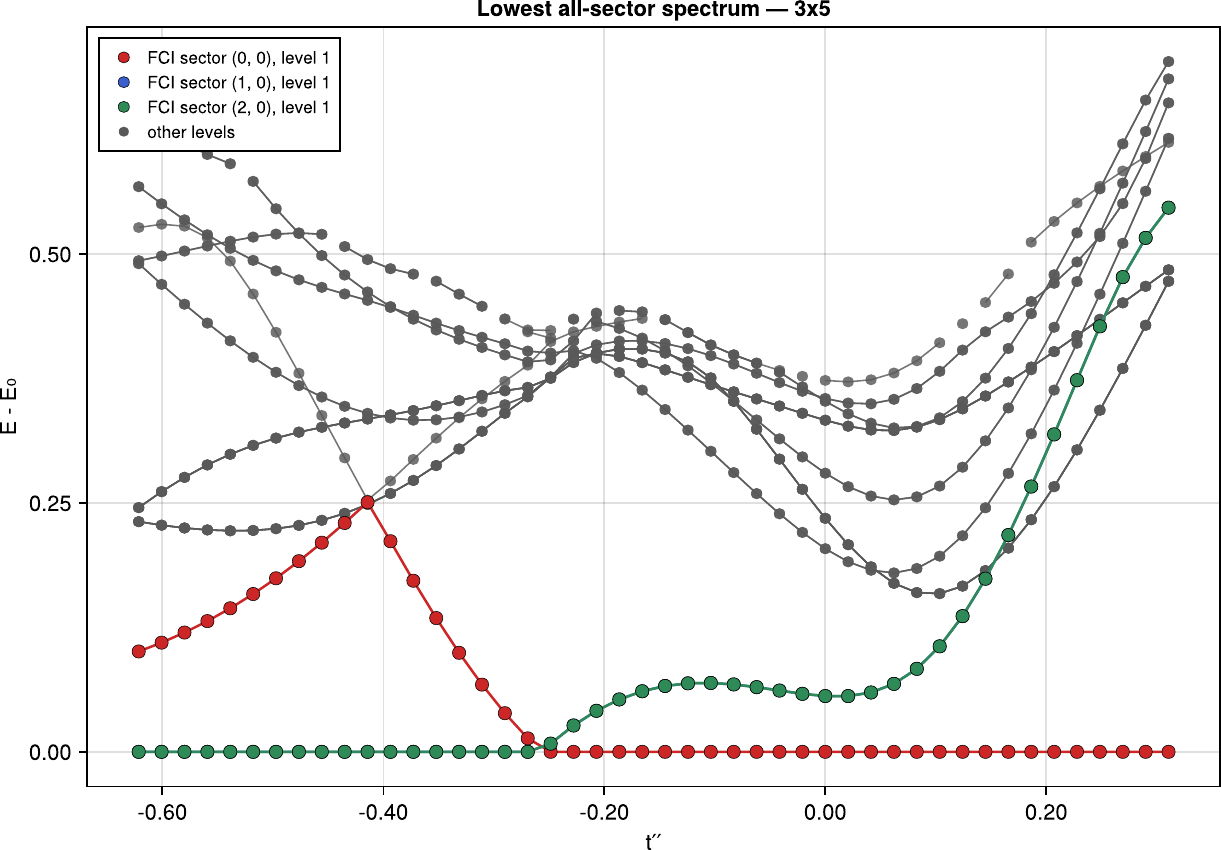}
	\caption{ED Spectra Sweep over the Extended Hopping Strengths $t''\in[-0.62,0.41]$ on a $3\times5\times2$ Geometry. Sectors $(1,0)$ (blue) and $(2,0)$ (green) are exactly degenerate due to antiunitary mirror symmetry of the model~\cref{eqA:Fermi-Hubbard Hamiltonian}. There is a clear $(0,0)$-sector level crossing at about $t''\approx-0.42$ signaling a first-order FCI--AHC transition, and a roton mode softening from other sectors at $t''\approx+0.15$ signaling a second-order FCI--CDW transition.}
	\label{figA:fermi_Hubbard_ED_spec_sweep}
\end{figure}

\subsection{Diagnostics of the FCI Phase}
To determine the fractionally quantized many-body Chern number of the FCI phase, we impose twisted boundary conditions at the representative point $t''=-0.15$ on a $3\times5\times2$ geometry and compute the spectral flow and many-body charge pump plot in~\cref{figA:fermionic_FCI_spectral_flow_and_charge_pump}.
\begin{figure}[H]
	\centering
	\includegraphics[width=1.0\linewidth]{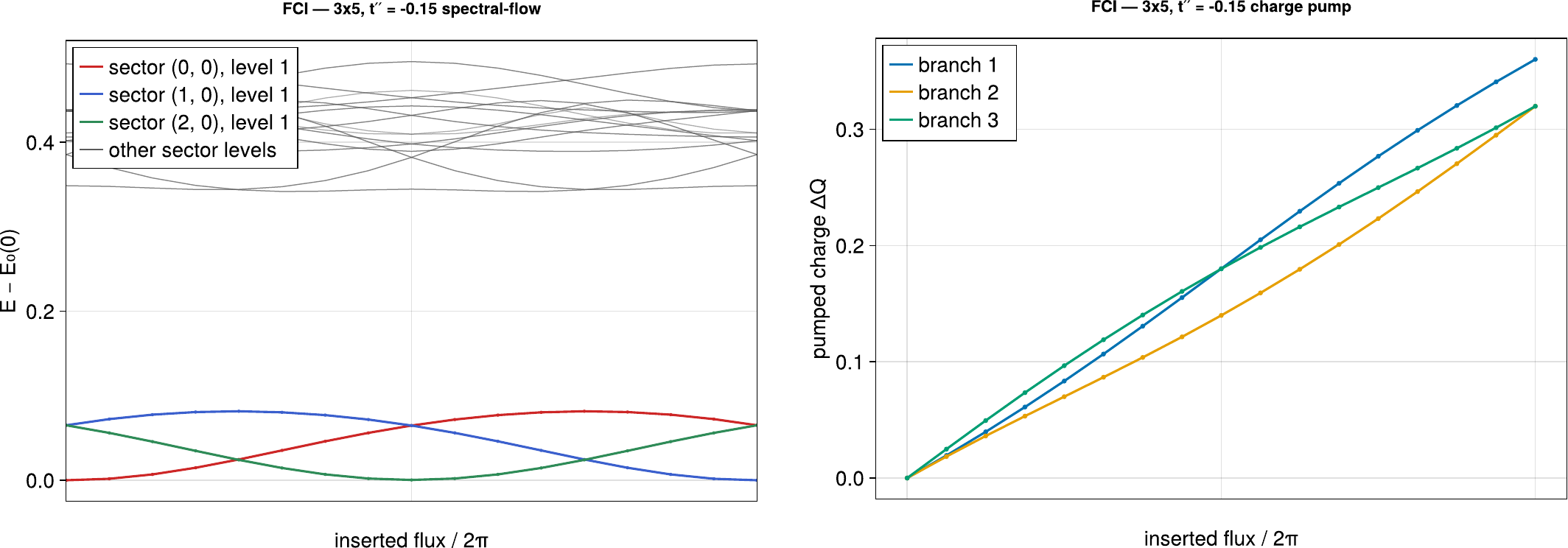}
	\caption{Spectral flow and many-body charge pump for the extended Fermi-Hubbard model at $t''=-0.15$ on a $3\times5\times2$ geometry.}
	\label{figA:fermionic_FCI_spectral_flow_and_charge_pump}
\end{figure}
Under the phase twists, the three nearly-degenerate ground states flow into each other and together carry a total many-body Chern number $C=1$ (each about $C=1/3$), as expected for a $\nu=1/3$ FCI state.

We further justify the FCI nature by computing the particle-entanglement spectra (PES) of the $t''=-0.15$ sample over both $3\times5\times2$ and $3\times6\times2$ geometries, which provide a diagnostic of ideal quasihole counting of fractional quantum Hall states~\cite{regnault2011fractional,bernevig2008model,sterdyniak2011extracting,chandran2011bulk}. For a $\nu=1/3$ FCI state in the same universality class as the $1/3$-Laughlin state on the torus, the counting obeys the so-called $(1,3)$-admissible rule as the generalized Pauli principle~\cite{regnault2011fractional}:
\begin{equation}\label{eqA:generalized_Pauli_principle}
	x_{i+1}-x_i\geq3
\end{equation}
plus the torus wrap condition $x_1+L-x_n\geq3$. Let $L$ be the number of orbitals and $N_A$ the number of retained particles after a particle bipartition, the PES counting $\mathcal N^{\text{torus}}_{(1,3)}(L,N_A)$ below the entanglement gap is determined by~\cite{regnault2011fractional}
\begin{equation}
	\mathcal N^{\text{torus}}_{(1,3)}(L,N_A)=\dfrac{L}{N_A}\binom{L-2N_A-1}{N_A-1}.
\end{equation}
For the two different $3\times5\times2$ and $3\times6\times2$ geometries, the PES counting for retained $N_A=2$ particles gives
\begin{equation*}
	\mathcal N^{\text{$3\times5\times2$ torus}}_{(1,3)}(15,2)=75,\quad \mathcal N^{\text{$3\times6\times2$ torus}}_{(1,3)}(18,2)=117.
\end{equation*}
This agrees exactly with the PES plot in~\cref{figA:fermionic_FCI_PES_combined}, confirming the FCI nature, in particular the $U(1)_3$ Laughlin topological order, of our Fermi-Hubbard model~\cref{eqA:Fermi-Hubbard Hamiltonian}.
\begin{figure*}
	\centering
	\includegraphics[width=0.8\linewidth]{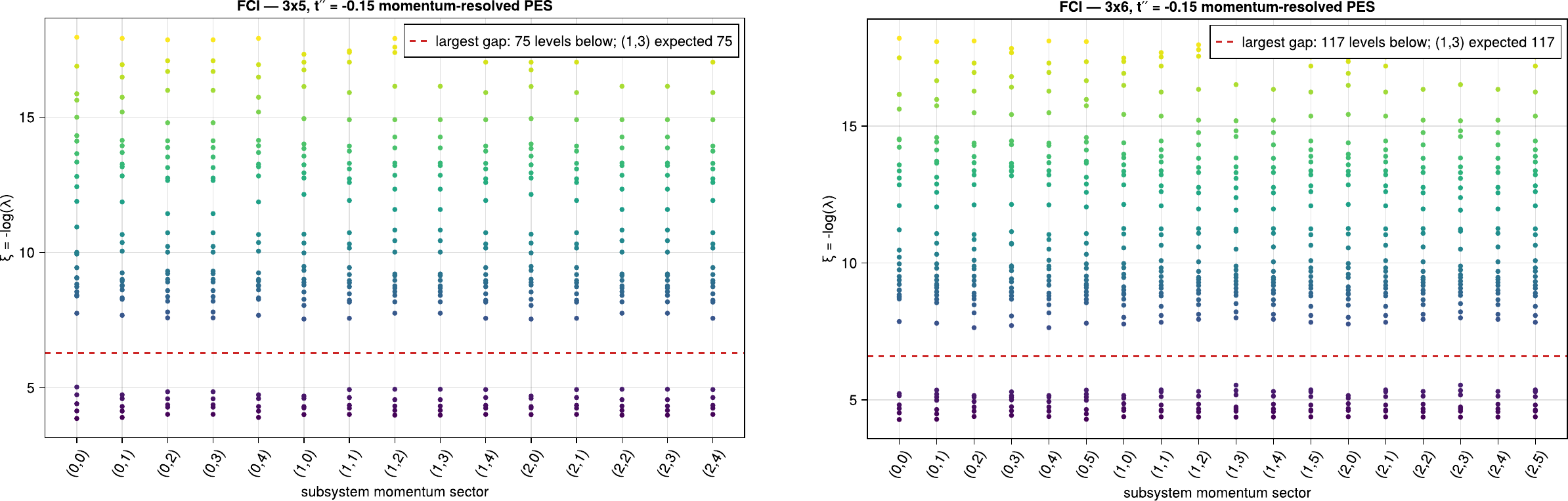}
	\caption{Particle entanglement spectrum for the extended Fermi-Hubbard model~\cref{eqA:Fermi-Hubbard Hamiltonian} at $t''=-0.15$ on $3\times5\times2$ (left) and $3\times6\times2$ (right) geometries.}
	\label{figA:fermionic_FCI_PES_combined}
\end{figure*}

\subsection{Diagnostics of the AHC Phase}
The AHC phase is identified by a symmetry-broken ground state with nonzero Hall response.

To determine the integer many-body Chern number of the AHC phase, we impose the twisted boundary condition on the representative $t''=-0.5$ sample on both $3\times5\times2$ and $3\times6\times2$ geometries and compute the spectral flow and many-body charge pump plot in~\cref{figA:fermionic_AHC_spectral_flow_and_charge_pump}.
\begin{figure}[H]
	\centering
	\includegraphics[width=1.0\linewidth]{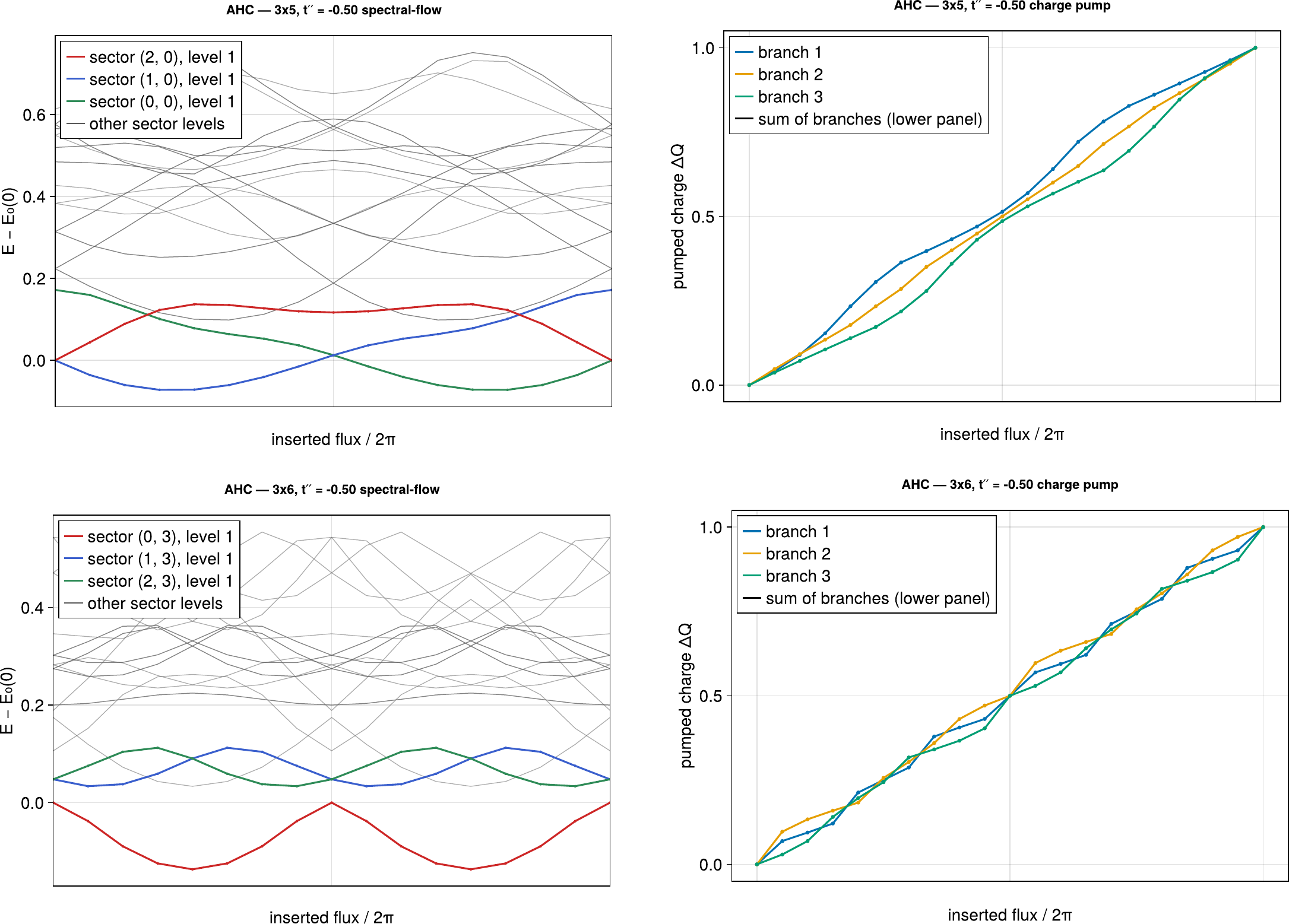}
	\caption{Spectral flow and many-body charge pump for the extended Fermi-Hubbard model at $t''=-0.5$ on both $3\times5\times2$ (top) and $3\times6\times2$ (bottom) geometries.}
	\label{figA:fermionic_AHC_spectral_flow_and_charge_pump}
\end{figure}
Although the three lowest levels get weakly mixed with higher levels under boundary phase twists, an integer-quantized many-body charge pump corresponding to $C=1$ is robust under sample-size changes.

To justify the AHC nature, we compute the connected static structure factor $S(\bm q)$ on both $3\times5\times2$ and $3\times6\times2$ geometries, which is defined as
\begin{equation}\label{eqA:connected_static_structure_factor}
	S(\bm q)=\frac{1}{N_s}\sum_{i,j}e^{i\bm q\cdot(\bm r_i-\bm r_j)}\Big[\langle n_i n_j\rangle-\langle n_i\rangle\langle n_j\rangle\Big].
\end{equation}
A Bragg peak in $S(\bm q)$ at the ordering momentum identifies the translational symmetry breaking. As shown in~\cref{figA:fermionic_AHC_S(q)}, there are clear Bragg peaks at $\bm q=(\pm4\pi/3,0)$ for both geometries, signaling a robust translation-symmetry breaking order in the AHC phase.
\begin{figure}[H]
	\centering
	\includegraphics[width=1.0\linewidth]{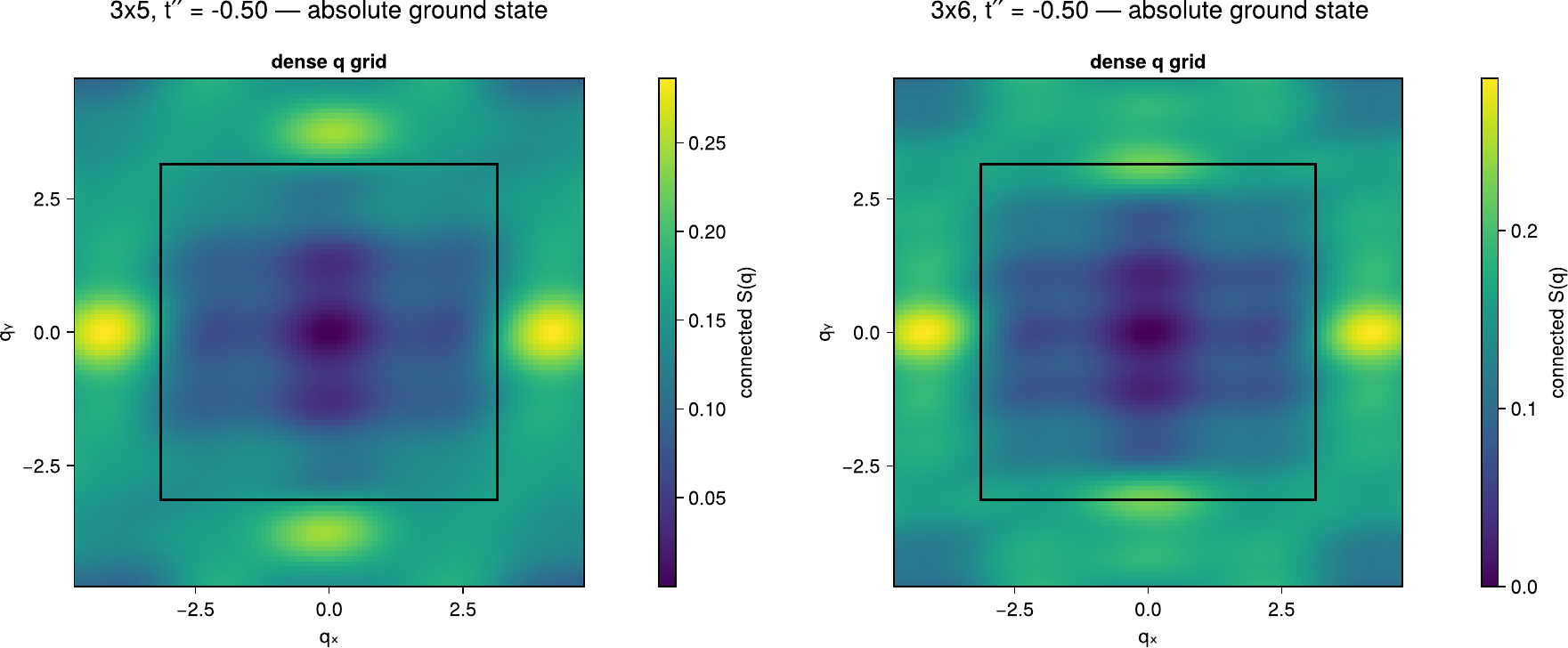}
	\caption{Connected static structure factor $S(\bm q)$ for the extended Fermi-Hubbard model at $t''=-0.5$ on both $3\times5\times2$ (left) and $3\times6\times2$ (right) geometries evaluated on the dense $\bm q$-grids.}
	\label{figA:fermionic_AHC_S(q)}
\end{figure}

We also perform a finite-size-scaling analysis of the charge gap
\begin{equation}\label{eqA:charge_gap}
	\Delta_c(N_s)=E_0(N+1)+E_0(N-1)-2E_0(N),
\end{equation}
in~\cref{figA:fermionic_AHC_charge_gap_finite_size_scaling}. A fit to the ED-accessible clusters extrapolates to $\Delta_c(\infty)\approx0.63$, consistent with a gapped AHC.
\begin{figure}[H]
	\centering
	\includegraphics[width=0.8\linewidth]{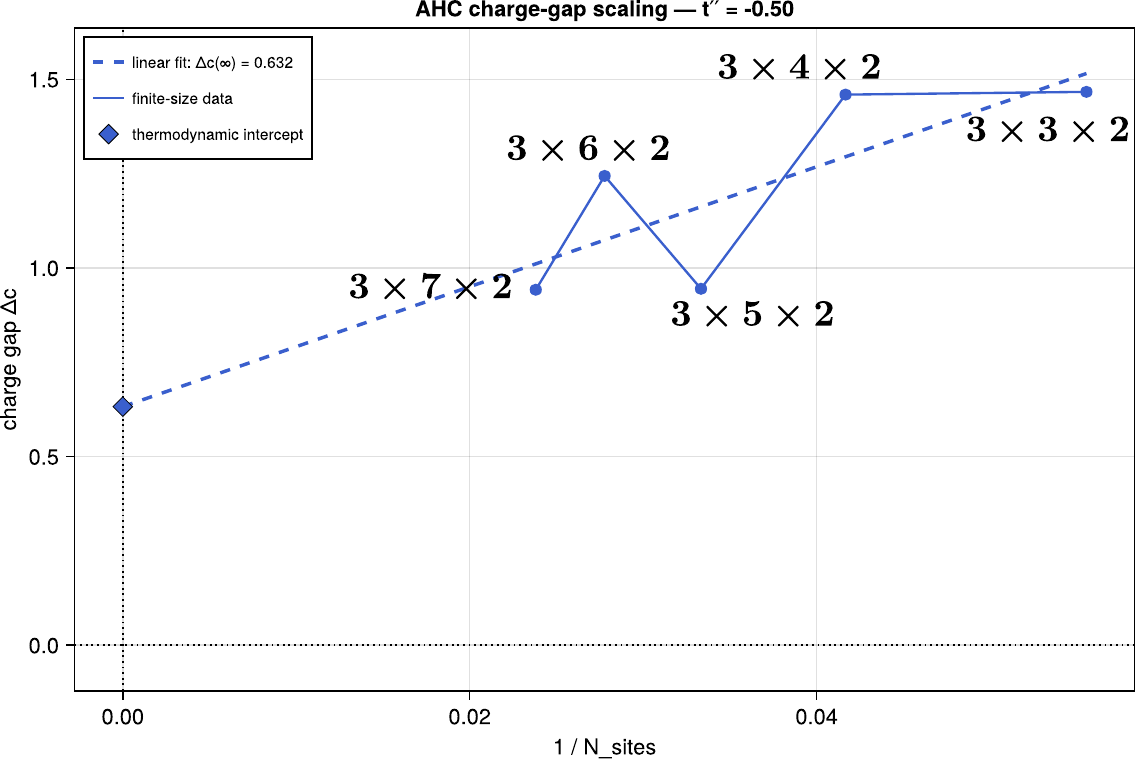}
	\caption{Finite-size scaling of the charge gap for the extended Fermi-Hubbard model at $t''=-0.5$. The data are shown for ED-feasible geometries $3\times3\times2$, $3\times4\times2$, $3\times5\times2$, $3\times6\times2$, and up to $3\times7\times2$.}
	\label{figA:fermionic_AHC_charge_gap_finite_size_scaling}
\end{figure}

\subsection{Diagnostics of the Candidate CDW Phase}
For the candidate CDW phase, we confirmed sharp Bragg peaks in the connected static structure factor~\cref{eqA:connected_static_structure_factor} across different geometries, although the Bragg peak position shifts because the chosen geometries are not mutually commensurate, as shown in~\cref{figA:fermionic_CDW_S(q)}. We emphasize that the robust presence of Bragg peaks itself already supports the translation symmetry breaking in this regime.
\begin{figure}[H]
	\centering
	\includegraphics[width=1.0\linewidth]{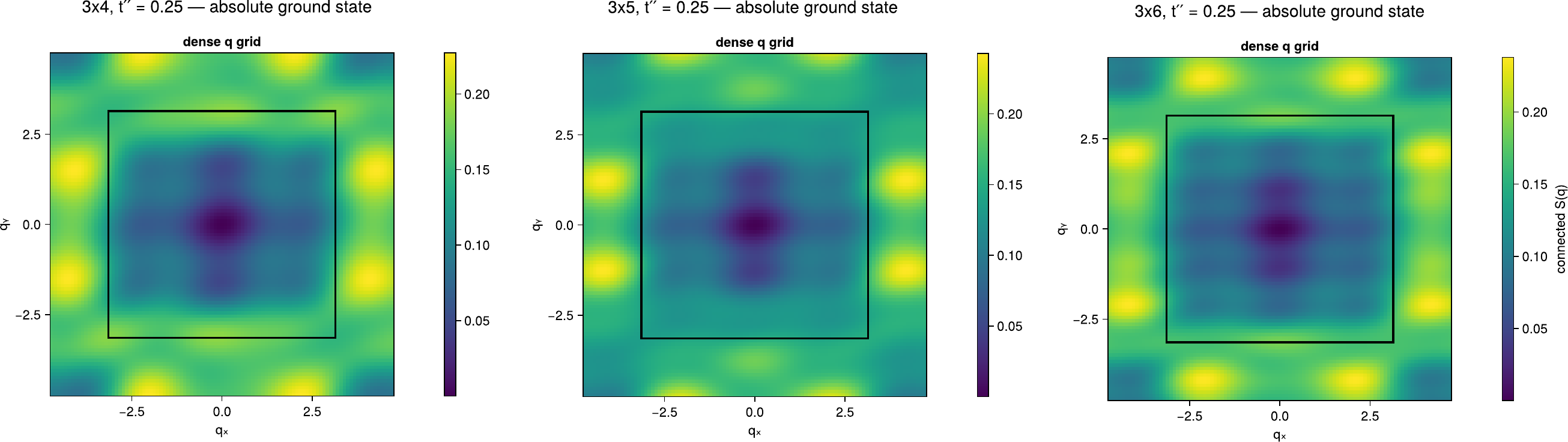}
	\caption{Connected static structure factor $S(\bm q)$ for the extended Fermi-Hubbard model at $t''=+0.25$ on $3\times4\times2, 3\times5\times2$ and $3\times6\times2$ geometries evaluated on the dense $\bm q$-grids. Although the peak position varies among these incommensurate
	geometries, a pronounced Bragg peak is present in each, indicating a robust translational symmetry breaking order.}
	\label{figA:fermionic_CDW_S(q)}
\end{figure}

We also perform a finite-size scaling of the charge gap~\cref{eqA:charge_gap}. As shown in~\cref{figA:fermionic_CDW_charge_gap_finite_size_scaling}, a linear extrapolation gives a small estimated charge gap $\Delta_c(\infty)\approx0.22$. Therefore, determining whether the phase is gapped or gapless requires more numerical evidence from reliable large-scale calculations such as DMRG, which is beyond the scope of this work. Although the metallic/insulating conclusion cannot drawn on current ED-accessible geometries, we refer to this regime as ``candidate CDW'', as the Bragg peak for connected static structure factor is sharp and robust.
\begin{figure}[H]
	\centering
	\includegraphics[width=0.8\linewidth]{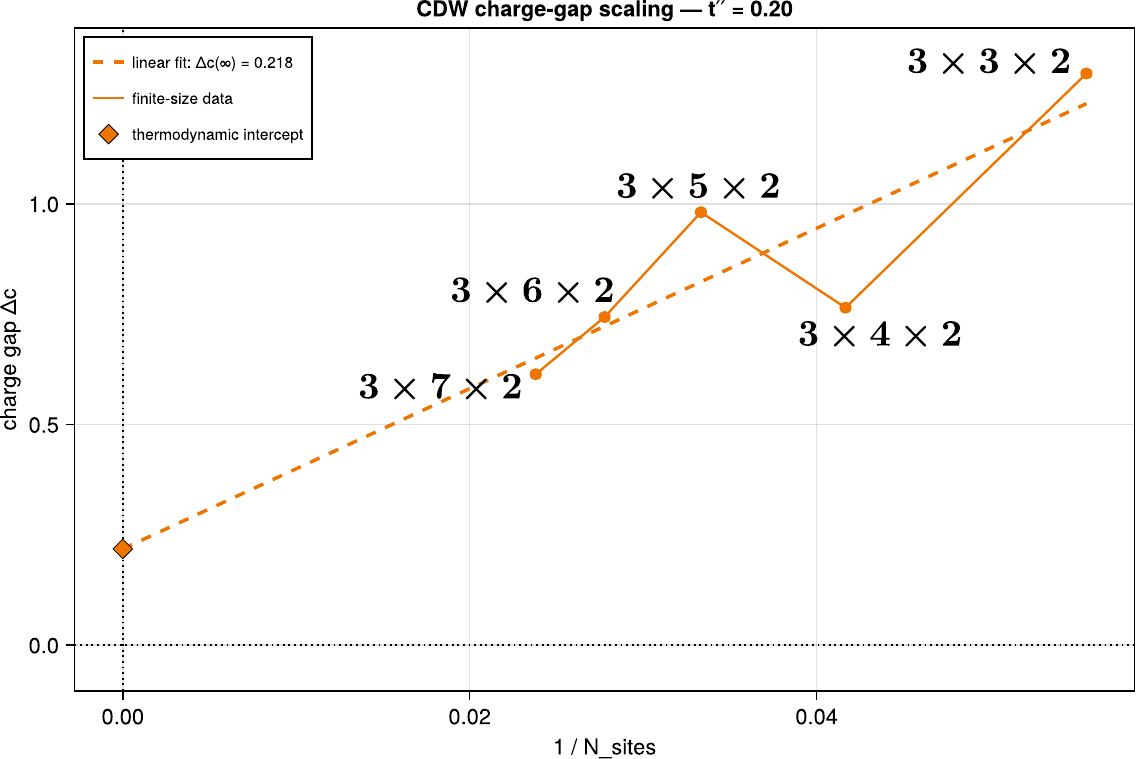}
	\caption{Finite-size scaling of the charge gap for the extended Fermi-Hubbard model at $t''=+0.2$. The data are shown for ED-feasible geometries $3\times3\times2$, $3\times4\times2$, $3\times5\times2$, $3\times6\times2$, and up to $3\times7\times2$.}
	\label{figA:fermionic_CDW_charge_gap_finite_size_scaling}
\end{figure}

\end{document}